%% file: main.tex
\documentclass{IOS-Book-Article}

\usepackage{mathptmx}
\usepackage{amsmath,amssymb,amsthm}
\usepackage{thmtools,thm-restate}
\usepackage{url}
\usepackage{tikz}
\usetikzlibrary{arrows.meta,calc,fit,positioning}

\def\hb{\hbox to 11.5 cm{}}

\makeatletter
\g@addto@macro\normalsize{%
  \abovedisplayskip 6pt plus 2pt minus 3pt
  \belowdisplayskip \abovedisplayskip
  \abovedisplayshortskip 2pt plus 2pt
  \belowdisplayshortskip 4pt plus 2pt minus 2pt}
\makeatother
\newcommand{\ia}{\textit{i}}
\newcommand{\ib}{\textit{ii}}
\newcommand{\ic}{\textit{iii}}

\newcommand{\calA}{\mathcal{A}}
\newcommand{\calF}{\mathcal{F}}
\newcommand{\calS}{\mathcal{S}}
\newcommand{\Alice}{\mathit{Alice}}
\newcommand{\Bob}{\mathit{Bob}}

\theoremstyle{definition}
\newtheorem{definition}{Definition}
\theoremstyle{plain}
\newtheorem{lemma}[definition]{Lemma}
\newtheorem{proposition}[definition]{Proposition}
\newtheorem{theorem}[definition]{Theorem}
\newtheorem{corollary}[definition]{Corollary}

\newif\iffull
 \fulltrue
\newcommand{\full}[2]{\iffull#1\else#2\fi}

\full{\newcommand{\mypara}[1]{\smallskip\noindent\textbf{#1.}}}{\newcommand{\mypara}[1]{\noindent\textbf{#1.}}}

\makeatletter
\def\thm@space@setup{\thm@preskip=\full{5pt plus 1pt minus 1pt}{3pt plus 1pt minus 1pt} \thm@postskip=\thm@preskip}
\renewcommand\section{\@startsection{section}{1}{\z@}{\full{-\bigskipamount}{-9pt plus -2pt minus -2pt}}{\full{\medskipamount}{4pt plus 1pt minus 1pt}}{\normalsize\bfseries\nohyphen\raggedright}}
\renewcommand\subsection{\@startsection{subsection}{2}{\z@}{\full{-\medskipamount}{-7pt plus -2pt minus -1pt}}{\full{\medskipamount}{3pt plus 1pt minus 1pt}}{\normalsize\itshape\nohyphen\raggedright}}
\renewenvironment{proof}[1][\proofname]{\par\pushQED{\qed}\normalfont\topsep\full{6pt plus 6pt}{3pt plus 2pt minus 1pt}\relax\trivlist\item[\hskip\labelsep\itshape#1\@addpunct{.}]\ignorespaces}{\popQED\endtrivlist\@endpefalse}
\def\@listI{\leftmargin\leftmargini \parsep 0pt plus 1pt \topsep \full{9pt plus 2pt minus 2pt}{4pt plus 1pt minus 1pt} \partopsep 1pt \itemsep 1pt plus .5pt minus .5pt}
\let\@listi\@listI \@listi
\makeatother

\makeatletter\def\@abstract@width{\textwidth}\makeatother

\begin{document}

\pagestyle{plain}
\begin{frontmatter}

\title{\Large Formalising and Implementing Grassroots Social Contracts:\\
From Legal Text to Grassroots Platforms\full{\\ (Full Version)}{}}

\author[A]{\fnms{James} \snm{Golike}},
\author[A]{\fnms{Andy} \snm{Lewis-Pye}} and
\author[A,B]{\fnms{Ehud} \snm{Shapiro}}

\address[A]{London School of Economics and Political Science}
\address[B]{Weizmann Institute of Science}

\begin{abstract}
Almost two centuries ago Pierre-Joseph Proudhon envisioned a social contract that is (1) an agreement of man with man; (2) reciprocal; (3) imposing no obligation upon the parties except that which results from their personal promise; (4) subject to no external authority; (5) freely accepted and signed by all the participants; (6) of the nature of a contract of exchange; and a few other conditions.  He also envisioned a property of social contracts, analogous to a property of digital platforms we term \emph{grassroots}: 
that (\ia) one could ``make a contract with all, as \ldots\ with some''; digitally, that a grassroots platform can have multiple instances, and (\ib) ``each group of citizens \ldots\ formed by a like contract \ldots\ could thereafter, and always by a similar contract, agree with every and all other groups''; digitally, that independent platform instances may coalesce by mutual consent, possibly into a single global platform instance. \iffull We proposed grassroots platforms owned, operated and governed by their participants as a foundation for an equitable and democratic digital realm.\fi

Here we define \emph{grassroots social contracts} as social contracts meeting Proudhon's six conditions above and that (7)  people are free to deal with each other; and (8) there is no external register of people, so people can come to know each other either directly or through people they already know.

We show that a grassroots social contract can be transformed into a working smartphone-based grassroots platform through an \emph{abstraction cascade}, starting from converting the social contract text to formal act schemas, verified syntactically to be grassroots, and then to volition-guarded multiagent atomic transactions, hitherto the most abstract formalism used to specify grassroots platforms. \iffull The abstraction cascade from there to a working app has been described elsewhere.\fi

\emph{Act schemas} are a formal language for the acts the contract describes, each naming the parties' roles, which of them must will the act, and its precondition and effect at each role, with contract parties determined upon signature.  Any contract written in this language meets the eight conditions above, provided it is \emph{syntactically grassroots}, satisfying\full{: (\ia) \emph{Introduction}, that some act can introduce any two willing parties, following which the state of each stores an identifier of the other; (\ib) \emph{Provenance}, that the state of a party stores an identifier of a person only by an act with that person, or with a party whose state stores that identifier; and (\ic) \emph{Volition}, that an act requires the consent of all its parties, unless the undirected graph formed by identifiers stored by the parties is connected.}{ three decidable conditions, \emph{Introduction}, \emph{Provenance} and \emph{Volition}.}

We prove that the protocol realising a syntactically grassroots contract is grassroots, and volitionally so. \full{To do so, we first prove that \emph{politeness} --- any two parties can always come to interact, and no act joins two instances unless all its participants will it --- is sufficient for the protocol over a set of volition-guarded multiagent atomic transactions to be grassroots, and volitionally so: two groups coalesce only when a member of each wills it.  
We then prove that a syntactically grassroots social contract compiles into a polite set of volition-guarded transactions.}{Politeness of a set of volition-guarded transactions, a second theorem, makes the protocol over them volitionally grassroots, and a syntactically grassroots social contract compiles into a polite set.}  

We distinguish three legal types of clauses of a grassroots social contract: A breach of an \emph{enforced} clause is impossible with a correct implementation of the contract; a breach of an \emph{attested} or \emph{undertaken} clause can be taken to court, with non-repudiably signed evidence of the breach in the case of an attested clause.  We illustrate the three legal types with \full{two grassroots social contracts, the grassroots social graph carrying signed items and grassroots currencies}{the grassroots social graph carrying signed items, and in the full version of this paper also grassroots currencies}.

To join a grassroots platform, a person signs a one-sided undertaking to keep its grassroots social contract towards all its other participants, and presents it to every participant it interacts with, as enforced by the code realising the social contract or the platform hosting it.  Hence by transitivity every participant has signed it.   The code is certified by its creator and by its compiler, and each party undertakes to run the certified code.
\end{abstract}

\begin{keyword}
grassroots social contract\sep
act schema\sep
compilation\sep
enforcement modalities\sep
provenance\sep
contract specification language\sep
norm-governed systems\sep
contract formation
\end{keyword}
\end{frontmatter}

\input{sections/01-introduction}
\input{sections/02-contracts}
\input{sections/03-schemas}
\input{sections/04-transactions}
\input{sections/05-compilation}
\input{sections/06-polite}
\input{sections/07-conditions}
\input{sections/08-modalities}
\input{sections/09-signing}
\input{sections/10-related-work}
\input{sections/11-conclusion}

\bibliographystyle{vancouver}
\bibliography{bib}

\iffull
\appendix
\input{sections/12-appendix}
\fi

\end{document}

%% file: sections/01-introduction.tex
\section{Introduction}\label{sec:intro}

Almost two centuries ago Pierre-Joseph Proudhon envisioned a social contract~\cite[Fourth Study]{proudhon2004general} \full{that is (1) an agreement of man with man; (2) reciprocal; (3) imposing no obligation upon the parties except that which results from their personal promise; (4) subject to no external authority; (5) freely accepted and signed by all the participants; (6) of the nature of a contract of exchange; and a few other conditions, among them that it include all citizens with their interests and that it increase the well-being and liberty of every citizen.  \iffull In his words, the social contract is ``an agreement of man with man''; ``essentially reciprocal: it imposes no obligation upon the parties, except that which results from their personal promise of reciprocal delivery: it is not subject to any external authority: it alone forms the law between the parties: it awaits their initiative for its execution''; ``freely discussed, individually accepted, signed with their own hands, by all the participants''; and ``of the nature of a contract of exchange: not only does it leave the party free, it adds to his liberty; not only does it leave him all his goods, it adds to his property; it prescribes no labor; it bears only upon exchange''.\fi{}  He also envisioned a property of social contracts~\cite[Sixth Study]{proudhon2004general}, analogous to a property of digital platforms we term \emph{grassroots}~\cite{shapiro2023grassrootsBA,shapiro2024grassroots,shapiro2025characterising}:  that (\ia) one could ``make a contract with all, as \ldots\ with some''; digitally, that a grassroots platform can have multiple instances, and (\ib) ``each group of citizens \ldots\ formed by a like contract \ldots\ could thereafter, and always by a similar contract, agree with every and all other groups''; digitally, that independent platform instances may coalesce by mutual consent, possibly into a single global platform instance.  \iffull In full: ``if I could make a contract with all, as I can with some; if all could renew it among themselves, if each group of citizens, as a town, county, province, corporation, company, \&c., formed by a like contract, and considered as a moral person, could thereafter, and always by a similar contract, agree with every and all other groups, it would be the same as if my own will were multiplied to infinity.  I should be sure that the law thus made on all questions in the Republic, from millions of different initiatives, would never be anything but my law; and if this new order of things were called government, it would be my government.''\fi{} \iffull We proposed grassroots platforms owned, operated and governed by their participants as a foundation for an equitable and democratic digital realm~\cite{shapiro2024grassroots}.\fi}{meeting conditions (1)--(6) of the abstract, and the property we term grassroots~\cite[Sixth Study]{proudhon2004general}.}

A \emph{digital social contract} is one realised by software the parties run~\cite{cardelli2020digital}.  We call the mathematical object a \emph{protocol} and the software realising it a \emph{platform}.  We say a platform \emph{realises} a contract, and reserve \emph{implements} for the relation between two adjacent abstractions of the abstraction cascade~\cite{shapiro2026volition}.  A protocol is \emph{grassroots} when its instances are independent of one another and of every global resource but the net, and can always coalesce, possibly into a single global instance~\cite{shapiro2023grassrootsBA,shapiro2024grassroots,shapiro2025characterising}, and a \emph{grassroots platform} is the software realising a grassroots protocol.  The grassroots notion is a formal property of a protocol, recalled in Section~\ref{sec:transactions}.

\iffull
Here we define \emph{grassroots social contracts} as social contracts meeting Proudhon's six conditions above and that (7)  people are free to deal with each other; and (8) there is no external register of people, so people can come to know each other either directly or through people they already know.  \iffull Our first theorem is that such a contract has the grassroots property: the acts its clauses describe determine a grassroots platform, in which any two parties may come into relation without requiring anyone's permission, two communities that adopted the contract independently coexist and may later merge, and every participant in any act is a party.  No registry issues identifiers, no log orders acts, and no authority timestamps them.\fi
\fi

\iffull
A social contract is a contract among those who adopt it.  Participating in a grassroots platform requires adopting the grassroots social contract that defines it, as participating in a global platform requires adopting an End-User Licence Agreement (EULA).  A grassroots social contract is mutual and symmetric among all participants; an EULA is between each person and the platform operator.  A friendship or a mutual credit line is an act under the contract between particular parties.

\fi
\full{We show contracts for a social network carrying signed speech acts~\cite{shapiro2023gsn,golike2026digital} and for a currency of personal coins~\cite{shapiro2024gc} that are grassroots.}{We show that a contract for a social network carrying signed speech acts~\cite{shapiro2023gsn,golike2026digital} is grassroots, and in the full version of this paper~\cite{shapiro2026formalising} a contract for a currency of personal coins~\cite{shapiro2024gc}.}  \iffull Not every social contract is.  Examples that are not include a contract without coalescences, not allowing two strangers to come into relation; a contract with named third-party control, e.g.\ where introduction requires a token only a third party can make; and a contract with an act that changes a stranger's state without their will and requires no prior relation between the two; each fails one of the conditions.\fi

\iffull
We show that a grassroots social contract can be transformed into a working smartphone-based grassroots platform through an \emph{abstraction cascade}~\cite{shapiro2026volition}, starting from converting the social contract text to formal act schemas, verified syntactically to be grassroots, and then to volition-guarded multiagent atomic transactions~\cite{lewis2026volitional}, hitherto the most abstract formalism used to specify grassroots platforms. \iffull The abstraction cascade from there to a working app has been described elsewhere~\cite{shapiro2026volition}.\fi
\fi
\full{}{The cascade continues through Grassroots Logic Programs~\cite{shapiro2025glp,shapiro2026implementing,shapiro2026types} to Dart~\cite{dart2024}, by the Human--Science--AI methodology~\cite{shapiro2026volition}, at the antipode of vibe coding~\cite{sapkota2025vibe} and beside spec-driven development~\cite{fowler2025sdd,github2025speckit}; a child-safe social network~\cite{shapiro2026cssn} is specified the same way.}

\iffull
\iffull A grassroots social contract has two artefacts: a legal text, and a protocol realising it.  A clause of the text describes an \emph{act}, in the sense of a juridical act: a declaration by one or more parties intended to create a legal effect.\fi{}  \emph{Act schemas} are a formal language for the acts the contract describes, each naming the parties' roles, which of them must will the act, and its precondition and effect at each role, with contract parties determined upon signature.  Any contract written in this language meets the eight conditions above, provided it is \emph{syntactically grassroots}, satisfying\full{: (\ia) \emph{Introduction}, that some act can introduce any two willing parties, following which the state of each stores an identifier of the other, as of a friend, the issuer of a coin it holds, or the author of an item it was sent, and we say that it \emph{records} the other; (\ib) \emph{Provenance}, that the state of a party stores an identifier of a person only by an act with that person, or with a party whose state stores that identifier; and (\ic) \emph{Volition}, that an act requires the consent of all its parties, unless the undirected graph formed by identifiers stored by the parties is connected.}{ three decidable conditions, \emph{Introduction}, \emph{Provenance} and \emph{Volition}.}  Conditions (1), (2) and (4) hold of every contract written as act schemas, since every participant in an act is a party and the schemas single out no one, and so does (5), since nobody is bound who has not adopted the contract (\full{Section~\ref{sec:proudhon}}{Propositions~\ref{prop:consent} and~\ref{prop:anonymity}}); (3) is Volition, (6) and (7) are Introduction, and (8) is Provenance (Section~\ref{sec:informal-conditions}).  The three are decidable, and a checker for them accompanies this paper, at \url{https://github.com/EShapiro2/GLP/tree/main/programs/jurix}; it certifies both of the contracts we carry through.  The compilation into transactions is defined for contracts meeting them.
\fi

\iffull
We prove that the protocol realising a syntactically grassroots contract is grassroots, and volitionally so (Theorem~\ref{thm:grassroots}). \full{To do so, we first prove that \emph{politeness} --- any two parties can always come to interact, and no act joins two instances unless all its participants will it --- is sufficient for the protocol over a set of volition-guarded multiagent atomic transactions to be grassroots, and volitionally so: two groups coalesce only when a member of each wills it (Theorem~\ref{thm:polite}).   We then prove that a syntactically grassroots social contract compiles into a polite set of volition-guarded transactions (Section~\ref{sec:conditions}).}{Politeness of a set of volition-guarded transactions, a second theorem (Theorem~\ref{thm:polite}), makes the protocol over them volitionally grassroots, and a syntactically grassroots social contract compiles into a polite set (Section~\ref{sec:conditions}).}
\fi

\iffull
We distinguish three legal types of clauses of a grassroots social contract (Section~\ref{sec:modalities}), by whether a party can breach a clause and what record of it the contract leaves in another party's state: a clause is \emph{enforced}, if performing it is an act of the contract and no act of the contract can breach it; \emph{attested}, if a party may breach it, with the contract recording the breach; or \emph{undertaken}, if the contract records who gave the undertaking, but performing the undertaking is carried out outside the contract. A breach of an \emph{enforced} clause is impossible with a correct implementation of the contract; a breach of an \emph{attested} or \emph{undertaken} clause can be taken to court, with non-repudiably signed evidence of the breach in the case of an attested clause.  We illustrate the three legal types with \full{two grassroots social contracts, the grassroots social graph carrying signed items~\cite{shapiro2026gsg,shapiro2023gsn,golike2026digital} and grassroots currencies~\cite{shapiro2024gc}}{the grassroots social graph carrying signed items~\cite{shapiro2026gsg,shapiro2023gsn,golike2026digital}, and in the full version of this paper~\cite{shapiro2026formalising} also grassroots currencies~\cite{shapiro2024gc}}.  The compilation and the conditions are algorithmic, so the only informal step in the chain is the first: rendering a legal text as schemas.
\fi

\iffull
To join a grassroots platform, a person signs a one-sided undertaking to keep its grassroots social contract towards all its other participants, and presents it to every participant it interacts with, as enforced by the code realising the social contract or the platform hosting it.  Hence by transitivity every participant has signed it.  Enforced clauses are enforced by the code the parties run.  The code is certified by its creator and by its compiler, and each party undertakes to run the certified code.
\fi

\iffull
We carry two contracts through, a social graph carrying signed items and a currency of personal coins, and certify both from their schemas.  A coin can be handed on, so a holding may record a person its holder has never transacted with; the certification then rests on traceable provenance, and conservation of money is its special case.

\fi
\iffull
\mypara{Three languages} The paper employs three languages: natural language for the social contract; act schemas; and volition-guarded atomic transactions.  Rendering the first as the second is by judgement; compiling the second into the third is algorithmic; and the conditions of the main theorem, that a syntactically grassroots social contract is grassroots, are verified in the second.  Clause~1 of the social graph contract (Section~\ref{sec:sg}) reads ``Alice and Bob may jointly decide to become friends, if neither is the friend of the other''.  Its schema (Section~\ref{sec:sg-schemas}) names the act's two roles, marks each with $?$ as having to will it, and at each role forbids the atom that role gains and adds it:
\[
\mathit{befriend}(\Alice?,\Bob?) : \ \ \neg\mathit{friend}(\Bob),\,+\mathit{friend}(\Bob) \ \ \ \neg\mathit{friend}(\Alice),\,+\mathit{friend}(\Alice).
\]
Compiling it (Section~\ref{sec:compiled-examples}) yields the volition-guarded transaction $(d \rightarrow d',\{\Alice,\Bob\})$, for every $d$ with $\mathit{friend}(\Bob) \notin d_{\Alice}$ and $\mathit{friend}(\Alice) \notin d_{\Bob}$, where $d'_{\Alice} = d_{\Alice} \uplus \{\mathit{friend}(\Bob)\}$ and $d'_{\Bob} = d_{\Bob} \uplus \{\mathit{friend}(\Alice)\}$.  The rest of the paper concerns the middle line.

\fi
\iffull
\mypara{The abstraction cascade} The passage from a legal text to a running platform goes through the \emph{abstraction cascade}~\cite{shapiro2026volition} set out in Figure~\ref{fig:cascade}, from the legal text at the top to an app on each party's own phone at the bottom.  This paper provides (1)$\rightarrow$(3).  The steps below have been carried out by the \emph{Human--Science--AI methodology}, in which AI derives code from a paper written for peer review and the paper is repaired wherever coding finds a gap in it: the grassroots social graph~\cite{shapiro2026gsg}, the grassroots social network~\cite{shapiro2023gsn} and grassroots currencies~\cite{shapiro2026bonds} are each a volition-guarded Grassroots Logic Program generated from its volition-guarded transactions, and one Dart bridge renders the three as panels of a single app, deployed on a physical smartphone~\cite{shapiro2026volition}.  A child-safe social network~\cite{shapiro2026cssn} is specified the same way.

\fi
\iffull
\mypara{The Human--Science--AI methodology} The formal objects of the cascade are its abstractions and the implementations among them, each constructed, maintained, evolved and proved correct with AI, in tandem with the paper describing it.  The methodology has two kinds of artefact: papers written for competitive peer review, and code.  The papers serve as an evolving and cooperatively-constructed interface between us humans and AI, recording our shared understanding of the subject matter, and are the source of authority for coding.  Papers are written by people and reviewed, revised and extended with the help of AI; the code, with few exceptions, is written by AI; and both evolve in the process of harmonising them.  AI codes down the abstraction cascade, from a paper destined for peer review to tested code, and every gap coding finds is repaired in the paper before it is repaired in the code.  It is at the antipode of ``vibe coding'', where a person describes the intended behaviour informally and AI guesses the rest~\cite{sapkota2025vibe}.  Spec-driven development makes the specification the primary artefact, maintained by people while the code is regenerated from it~\cite{fowler2025sdd,github2025speckit}; here that artefact is a scientific paper, and the feedback from coding revises it.  Nine components were carried to tested code in a single year, each in tandem with the paper describing it.

\fi
\iffull
\mypara{Relation to contract specification languages} Formalising legal text has a large literature, on contracts as automata, as defeasible deontic rules and as domain-specific languages, and on statute as code, and Section~\ref{sec:related} reviews it.  Two languages formalise legal contracts in a way close enough to ours to say plainly how they differ.  \emph{Symboleo}~\cite{sharifi2020symboleo,parvizimosaed2022symboleo} specifies a contract as obligations and powers with statechart lifetimes, for verification and for monitoring; \emph{Stipula}~\cite{crafa2022stipula} is a domain-specific language whose contracts are state machines over parties, fields and linear assets, with an agreement operator and a formal semantics.  Both bind their parties at contract formation and both keep one global contract state.  Our question is about the family of systems a contract induces over every finite set of people, and about what happens when two groups that formed independently meet; it cannot be posed in either language, and each requires a global resource that closure forbids.  Section~\ref{sec:related} takes this up, together with the literature on norms and on contract languages more broadly.

\fi
\iffull
\begin{figure}[!t]
\centering
\begin{tikzpicture}[
  layer/.style={draw=black!55, rounded corners=2pt, fill=gray!5,
                text width=\dimexpr\linewidth-1.1cm\relax, inner sep=4pt,
                align=left, font=\small},
  low/.style={layer, fill=none, font=\scriptsize},
  arrbig/.style={-{Latex[length=2.6mm,width=3.2mm]}, line width=1.6pt},
  arrsmall/.style={-{Latex[length=1.6mm,width=2mm]}, line width=1pt},
  lab/.style={font=\tiny, inner sep=1pt, midway, right=1mm},
  labs/.style={lab, right=0.8mm}]
  \node[layer] (c1) {\textbf{(1) The legal text.}  The clauses a grassroots social contract's parties undertake towards one another, each describing an act or binding them without one.};
  \node[layer, below=5mm of c1] (c2) {\textbf{(2) Act schemas.}  A formal language for those acts: a schema names an act's roles, which of them must will it, and the precondition and effect at each role, the parties being filled in on signature.};
  \node[layer, below=5mm of c2] (c3) {\textbf{(3) Volition-guarded multiagent atomic transactions}~\cite{lewis2026volitional}.  The most abstract formalism used to specify grassroots platforms.};
  \node[low, below=6.5mm of c3] (c4) {\textbf{(4) Communicating volitional agents}~\cite{keidar2026ivmat}.  An implementation-ready restriction of the former in which the only non-unary transactions are discovery and message delivery.};
  \node[low, below=4mm of c4] (c5) {\textbf{(5) Volition-guarded Grassroots Logic Programs}~\cite{shapiro2026volition}, GLP~(6) extended with volition guards, whose volition-guarded clauses determine the platform's user interface.};
  \node[low, below=4mm of c5] (c6) {\textbf{(6) Grassroots Logic Programs}~\cite{shapiro2025glp,shapiro2026implementing,shapiro2026types}, a multiagent, concurrent, polymorphically typed logic programming language, designed for the implementation of grassroots platforms by AI.};
  \node[low, below=4mm of c6] (c7) {\textbf{(7) Dart and Flutter}~\cite{dart2024}, a programming language for smartphone applications, deployable on iPhone and Android, with Flutter for the user interface of both.};
  \node[low, below=4mm of c7] (c8) {\textbf{(8) The grassroots app.}  One app on each person's own phone, each platform it runs a panel of it.};
  \draw[arrbig] (c1) -- node[lab] {[informal]} (c2);
  \draw[arrbig] (c2) -- node[lab] {[this paper]} (c3);
  \draw[arrsmall] (c3) -- node[labs,pos=0.72] {\cite{keidar2026ivmat}} (c4);
  \draw[arrsmall] (c4) -- node[labs] {\cite{shapiro2026volition}} (c5);
  \draw[arrsmall] (c5) -- node[labs] {\cite{shapiro2026volition}} (c6);
  \draw[arrsmall] (c6) -- node[labs] {\cite{shapiro2026implementing}} (c7);
  \draw[arrsmall] (c7) -- node[labs] {\cite{dart2024}} (c8);
  \node[draw,line width=1pt,rounded corners=4pt,fit=(c1)(c2)(c3),inner sep=2.5mm] {};
\end{tikzpicture}
\caption{The abstraction cascade, from a legal text to a running platform.  The framed (1)--(3) are this paper.  Each arrow but the first is an implementation of the abstraction above it by the one below, carrying the reference in which that implementation is defined; the first, rendering a legal text as act schemas, is the one step left to judgement.  A condition on the transactions of~(3), politeness, is sufficient for the platform they specify to be a grassroots platform, and conditions on the schemas of~(2) are sufficient in turn for politeness.}
\label{fig:cascade}
\end{figure}

\fi
\mypara{Structure} Section~\ref{sec:contracts} presents the \full{two contracts}{contract} and Section~\ref{sec:schemas} the syntax of act schemas, with \full{the schemas of both}{its schemas}.  Section~\ref{sec:transactions} recalls what is needed of~\cite{lewis2026volitional} and Section~\ref{sec:compilation} compiles the schemas into its transactions.  Section~\ref{sec:polite} defines politeness and proves it sufficient for a protocol to be volitionally grassroots, and Section~\ref{sec:conditions} states the conditions on the schemas and certifies \full{both contracts}{the contract} from them.  Section~\ref{sec:modalities} takes up the clauses and their modalities, Section~\ref{sec:signing} how a person enters the contract and what the code it runs warrants, and Section~\ref{sec:related} reviews related work.  \full{Appendix~\ref{app:framework} states the framework of~\cite{lewis2026volitional} formally, citing its proofs.}{Proofs omitted here are in the full version of this paper~\cite{shapiro2026formalising}.}

%% file: sections/02-contracts.tex
\section{\full{Two Grassroots Social Contracts}{A Grassroots Social Contract}}\label{sec:contracts}

A grassroots social contract is a finite list of clauses its parties undertake towards one another.  A \emph{party} is a person, natural or legal, with the gender-neutral pronoun `they'; Alice and Bob are variables naming distinct parties.  A clause either describes an \emph{act}, which the protocol realising the contract carries out, or binds the parties without describing one.  \iffull The two contracts are examples: the syntax of Section~\ref{sec:schemas}, the compilation of Section~\ref{sec:compilation} and the conditions of Section~\ref{sec:conditions} apply to any contract written the same way, and these two are the ones the paper certifies.\fi

\iffull
Two contracts are carried through the paper: a social graph carrying signed items, and a currency of personal coins.  Their clauses are stated here in prose, as a contract states them.  The sections that follow give the acts a formal syntax, compile the schemas into transactions, and certify both contracts from the schemas.

\fi
\subsection{The Social Graph with Signed Items}\label{sec:sg}

The parties to this contract are the members of a social network, and its clauses are five.  The first two are the clauses of the grassroots social graph~\cite{shapiro2026gsg,shapiro2023gsn}, the platform the rest extends: they say who may become whose friend.  The next two carry signed items over it~\cite{golike2026digital}, and the fifth is added here as a clause that describes no act; Section~\ref{sec:modalities} says what a realisation does with such a clause.

\begin{enumerate}
\item \emph{Friendship.}  Alice and Bob may jointly decide to become friends, if neither is the friend of the other.  Each is thereafter a friend of the other.
\item \emph{Ending a friendship.}  Alice may decide to end their friendship with Bob, and it then ends for both.
\item \emph{Authorship.}  Alice may decide to sign an item of their own.
\item \emph{Forwarding.}  Alice may decide to forward an item they hold to their friend Bob, and warrants that they know personally the party they received the item from.
\item \emph{Confidence.}  Alice shall disclose an item they hold only by forwarding it.
\end{enumerate}

\noindent The first four clauses describe acts and determine the contract's transactions.  The warranty in the fourth describes no act of its own, and neither does the fifth; both bind the parties.

\iffull
\subsection{Currencies}\label{sec:cb}

A grassroots currencies contract~\cite{shapiro2024gc} has five clauses.  A \emph{coin} is a unit of debt naming its issuer, and an \emph{Alice-coin} is a coin issued by Alice.

\begin{enumerate}
\item \emph{Minting.}  Alice may decide to issue Alice-coins.
\item \emph{Swap.}  Alice and Bob may jointly decide to swap a coin Alice holds for a coin Bob holds.
\item \emph{Payment.}  Alice may decide to pay a Bob-coin they hold to Bob.
\item \emph{Redemption.}  Alice may decide to redeem a Bob-coin they hold against any coin Bob holds.
\item \emph{The issuer's obligation.}  Alice accepts Alice-coins, at the prices they advertise, for the goods and services they offer.
\end{enumerate}

\noindent The first four determine the contract's transactions.  A swap in which Alice gives an Alice-coin and Bob gives a Bob-coin opens a mutual credit line between them, each thereafter holding a coin of the other.  The fifth describes no act: it gives a coin its worth.

\fi

%% file: sections/03-schemas.tex
\section{Act Schemas}\label{sec:schemas}

A clause of a contract describes an \emph{act}, in the sense of a juridical act: a declaration by one or more parties intended to create a legal effect.  This section gives those acts a formal syntax.  A \emph{contract} $C$ is a finite set of act schemas over a set $\Sigma$ of predicates, and $S_C$ is its set of states.  A party's state is a multiset of atoms; a schema names an act's roles, which of them must will it, and the atoms it requires, forbids, adds and deletes at each role.

\subsection{Atoms and States}\label{sec:atoms}

Write $\Pi$ for a potentially infinite set of at least two \emph{people}, of which every contract and every run concerns a finite subset, and $D$ for a countable set of \emph{speech acts} disjoint from it: a speech act is a message, a post or an item a party signs, taken as an identity.  A speech act carries the signature of the party that signed it and so determines its signer~\cite{cardelli2020digital,golike2026digital}; the schemas take this as given.  Write $\Sigma$ for a set of \emph{predicate symbols}, \emph{predicates} for short, and $\alpha : \Sigma \rightarrow \mathbb{N}$ for their arities; a predicate of arity $n$ applied to $n$ arguments is an atom, a syntactic object with no truth value.  The arguments of an atom of a state are people and speech acts; those of an atom of a schema are roles and variables. 

\begin{definition}[Atoms and states]\label{def:states}
For any set $W$ of arguments put
\[
\Sigma[W] \;=\; \{\, e(u_1,\ldots,u_{\alpha(e)}) \;:\; e \in \Sigma,\ u_i \in W \,\},
\]
call its members the \emph{atoms over $W$}, and call $u_1,\ldots,u_{\alpha(e)}$ the \emph{arguments} and $e$ the \emph{predicate} of $e(u_1,\ldots,u_{\alpha(e)})$.  An atom over $\Pi \cup D$ is \emph{ground} , a \emph{local state} is a finite multiset of ground atoms, the initial state is $s_0 = \emptyset$, and
\[
S_C(P) \;=\; \{\, s \;:\; s \text{ a finite multiset over } \Sigma[P \cup D] \,\} \qquad (P \subset \Pi).
\]
\end{definition}

 Multiset operations are written $\uplus$, $\setminus$, $\cap$ and $\subseteq$: multiplicities are added, subtracted, minimised and compared.  A set is  a multiset in which every multiplicity is one.

A state is local, so its holder is no argument of its atoms: $\mathit{friend}(\Bob)$ in the state of Alice says that Bob is a friend of Alice, and \textcent$(u)$ in the state of Alice is a coin issued by $u$ and held by Alice; a friend set is a set of the former and a holding a multiset of the latter.

\subsection{The Syntax of a Schema}\label{sec:schemasyntax}

Fix disjoint sets $R$ of \emph{roles} and $X$ of \emph{party variables}, both disjoint from $\Pi$, and a set $Y$ of \emph{speech-act variables} disjoint from $D$.  A schema is written over the atoms over $R \cup X \cup Y$.

\iffull
A schema is a contract template: the role names are its parameters, and the identities of the parties assuming them are filled in on signature.  The mark $?$ is a question the machine puts to the person in that role~\cite{cardelli2020digital}.

\fi
For example, take one predicate $\mathit{friend}$ of arity one.  The clause that Alice and Bob may jointly decide to become friends is the schema
\[
\mathit{befriend}(\Alice?,\Bob?) : \ \ \neg\mathit{friend}(\Bob),\,+\mathit{friend}(\Bob) \ \ \ \neg\mathit{friend}(\Alice),\,+\mathit{friend}(\Alice).
\]
\iffull Read it role by role.  At Alice the atom $\mathit{friend}(\Bob)$ is forbidden, so Alice must not already have Bob as a friend, and the same atom is added, which is the effect of befriending; at Bob the same holds with Alice and Bob exchanged.  Both roles carry $?$, so the act takes the will of both.  Nothing is required and nothing deleted, and both parties change state, as every schema requires at every role.\fi

\begin{definition}[Act schema]\label{def:schema}
An \emph{act schema} is written
\[
\tau(\pi_1,\ldots,\pi_k;\,x_1,\ldots,x_m) : \quad E_1 \quad \cdots \quad E_k
\]
with $\tau$ its name, $\pi_1,\ldots,\pi_k \in R$ distinct, $k \ge 1$ its \emph{arity}, $x_1,\ldots,x_m \in X$ distinct, and each $E_i$ a finite multiset of \emph{marked atoms}, an atom over $\{\pi_1,\ldots,\pi_k,x_1,\ldots,x_m\} \cup Y$ carrying at most one of the marks $-$, $+$, $\neg$.  For each role $\pi_i$ the marks divide $E_i$ into four finite multisets of atoms,
$
\mathord{+}(i),\quad \mathord{-}(i),\quad \mathord{\neg}(i),\quad \mathord{=}(i),
$
its members marked $+$, marked $-$, marked $\neg$, and unmarked respectively; the atoms $\pi_i$ \emph{requires} are $\mathord{=}(i) \uplus \mathord{-}(i)$.  It is required that $\mathord{+}(i) \cap \mathord{-}(i) = \emptyset$ and $\mathord{+}(i) \uplus \mathord{-}(i) \neq \emptyset$.  A role may carry the mark $?$; the \emph{guarding roles} of $\tau$ are those that do.
\end{definition}

\begin{definition}[Binding]\label{def:binding}
A \emph{binding} $\beta$ for a schema $\tau$ maps its roles and party variables to $\Pi$, injectively on the roles, and $Y$ to $D$, and is extended to atoms and to multisets of them componentwise.
\end{definition}

\iffull
An act schema is a STRIPS operator~\cite{fikes1971strips} in multiagent form: what a role requires and forbids is a precondition, what it adds and deletes are an add list and a delete list, and a schema carries one of each per role, where STRIPS has one precondition and one pair of lists over a single global state.  The marks are also the four kinds of arc of a contextual net transition~\cite{montanari1995contextual}: an atom required and deleted is consumed, one required and kept is a positive context condition, one forbidden is a negative context condition, and one added is produced.  We take no result from either theory.

\fi\full{}{An act schema is a STRIPS operator~\cite{fikes1971strips} in multiagent form, and its marks are the four kinds of arc of a contextual net transition~\cite{montanari1995contextual}.
}
\subsection{The Schemas of the Social Graph}\label{sec:sg-schemas}

Three predicates.  $\mathit{friend}$, of arity one: $\mathit{friend}(\Bob)$ in the state of Alice says that Bob is a friend of Alice.  $\mathit{item}$, of arity three: $\mathit{item}(x,a,f)$ in the state of Alice is the speech act $x$, authored by $a$ and received from $f$; a speech act a party signs is authored by it and received from it.  And $\mathit{sent}$, of arity two: $\mathit{sent}(x,\Bob)$ says that its holder forwarded $x$ to Bob.  The roles are Alice and Bob, and the four clauses of Section~\ref{sec:sg} that describe acts become four schemas.  Sign carries the speech-act variable $x$; forward carries $x$ and two party variables, $a$ for its author and $f$ for the party it came from, neither of whom takes part in the act.

\begin{center}
\resizebox{\linewidth}{!}{\begin{tabular}{lll}
\hline
\emph{schema} & \emph{at Alice} & \emph{at Bob} \\
\hline
$\mathit{befriend}(\Alice?,\Bob?) :$ & $\neg\mathit{friend}(\Bob),\ +\mathit{friend}(\Bob)$ & $\neg\mathit{friend}(\Alice),\ +\mathit{friend}(\Alice)$ \\
$\mathit{unfriend}(\Alice?,\Bob) :$ & $-\mathit{friend}(\Bob)$ & $-\mathit{friend}(\Alice)$ \\
$\mathit{sign}(\Alice?;x) :$ & $+\mathit{item}(x,\Alice,\Alice)$ & \\
$\mathit{forward}(\Alice?,\Bob;x,a,f) :$ & $\mathit{item}(x,a,f),\ +\mathit{sent}(x,\Bob)$ & $\mathit{friend}(\Alice),\ +\mathit{item}(x,a,\Alice)$ \\
\hline
\end{tabular}}
\end{center}

\iffull
\noindent  ``Alice and Bob may jointly decide'' is the two guarding roles of befriend and ``if neither is the friend of the other'' its forbidden atom; ``Alice may decide'' is the single guarding role of unfriend, of sign and of forward; ``it then ends for both'' is that unfriend deletes in both roles.  ``An item they hold'' is the $\mathit{item}(x,a,f)$ forward requires and ``their friend Bob'' is the $\mathit{friend}(\Alice)$ its recipient requires; only the forwarder guards, since only the forwarder decides; and forwarding keeps the item and adds the record of the forward, as every role must change state.
\fi

\iffull
\mypara{The records} The schemas meet each of the five clauses differently.  Befriend is the only schema adding a $\mathit{friend}$ atom, that atom's argument is a role, and befriend guards in both, so a party's friends are only people who have willed the friendship, and no act of the contract can put one there against that person's will.  Unfriend is the only schema deleting a $\mathit{friend}$ atom and it deletes in both roles, so a friendship never ends for one party without the other.  A speech act a party signs is in its own state and records nothing against it; the record appears when it is first forwarded, and it is the recipient's $\mathit{item}(x,a,\Alice)$, naming the forwarder.  Sign is the only schema adding an $\mathit{item}$ atom whose author is not carried over from one it requires, and it names its own role there, so every speech act is authored by the party that signed it.  A party may forward what it received from a friend it does not personally know, so the schemas cannot stop the warranty of clause~4 being false.  Holding a $\mathit{friend}$ atom is not knowing a person.  Instead they leave the false warranty on record against the party that gave it, in the hands of the party it was given to.  Passing an item on outside the contract is no act of it, and forwarding, the one act that discloses an item, is permitted by clause~5, so no act can breach it; and a forward leaves $\mathit{item}(x,a,\Alice)$ in the recipient's hands, naming the party that passed it on, so the party bound is on record.

\fi
\iffull
\subsection{The Schemas of the Currency}\label{sec:cb-schemas}

One predicate, \textcent, of arity one: \textcent$(u)$ in the state of Alice is a coin issued by $u$ and held by Alice.  The roles are Alice and Bob.

\begin{center}
\begin{tabular}{lll}
\hline
\emph{schema} & \emph{at Alice} & \emph{at Bob} \\
\hline
$\mathit{mint}(\Alice?) :$ & $+\text{\textcent}(\Alice)$ & \\
$\mathit{swap}(\Alice?,\Bob?;u,v) :$ & $-\text{\textcent}(u),\ +\text{\textcent}(v)$ & $-\text{\textcent}(v),\ +\text{\textcent}(u)$ \\
$\mathit{pay}(\Alice?,\Bob) :$ & $-\text{\textcent}(\Bob)$ & $+\text{\textcent}(\Bob)$ \\
$\mathit{redeem}(\Alice?,\Bob;r) :$ & $-\text{\textcent}(\Bob),\ +\text{\textcent}(r)$ & $-\text{\textcent}(r),\ +\text{\textcent}(\Bob)$ \\
\hline
\end{tabular}
\end{center}

\noindent ``Alice may decide'' is the single guarding role of mint, of pay and of redeem; ``Alice and Bob may jointly decide'' is the two guarding roles of the swap; ``a Bob-coin they hold'' is that pay and redeem require of Alice a coin whose argument is Bob; ``any coin Bob holds'' is redeem's party variable $r$.  The mutual credit line of Section~\ref{sec:cb} is the swap at a binding sending $u$ to Alice and $v$ to Bob, where each gives a coin of its own issue.

\iffull
\mypara{The records} Selling at an advertised price is no act of the contract, so no act can breach clause~5 and no run shows whether the issuer honours it; but a mutual credit line leaves a coin of the issuer in another party's holding, and that coin, naming the issuer, is the record of the undertaking.  Minting records nothing against the issuer, putting the coin in its own holding; the undertaking becomes evidence only when the coin first passes to someone else, as a promissory note does when handed over.  A clause that a holding contains a coin only if its issuer willed the act that put it there is breached by pay, which adds \textcent$(\Bob)$ at Bob and guards only at Alice, and by the swap, which moves a coin between two parties by the will of the two.  An issuer is not asked when its coin moves.  A coin is therefore money rather than a promise between two people.

\fi

\fi\iffull
\subsection{The Conditions Every Act-Schema Contract Meets}\label{sec:proudhon}

Of the eight conditions of Section~\ref{sec:intro}, (1), (2) and (4) are met by every contract written as act schemas, and so is (5), in that nobody is bound who has not adopted the contract, by two properties of the language.  Both are about the protocol a contract realises, so both are stated and proved in Section~\ref{sec:compilation}, where that protocol is defined, and used here.  Conditions (3), (6), (7) and (8) are those of Section~\ref{sec:informal-conditions}.

The first property is that nothing but a party bears an effect of an act: an agent's state changes only under a schema of the contract and only in a role of it, which is Proposition~\ref{prop:consent}.

The second is that the language cannot name a person.  The roles $R$, the party variables $X$ and the speech-act variables $Y$ are disjoint from $\Pi$, and a schema is written over the atoms over them, so no schema mentions a party; and by Definition~\ref{def:states} every agent begins in the same state $s_0 = \emptyset$.  Permuting the people therefore carries the system to itself, which is Proposition~\ref{prop:anonymity}, and no party can reach a position another could not, which is Corollary~\ref{cor:privilege}.

\begin{enumerate}
\item \emph{(1) An agreement of man with man.}  Only people hold states, and by Proposition~\ref{prop:consent} an agent bears an effect of the contract only as a party and only in a role; by Corollary~\ref{cor:privilege} no party can reach a position another could not.  No text can require that an agent be a natural person, and this one does not.
\item \emph{(2) Reciprocal, (4) subject to no external authority.}  A party consents twice and both consents are its own: adopting the contract is consent to its schemas, and willing an act in a guarding role is consent to that act.  By Proposition~\ref{prop:consent} nothing binds a party outside a role of a schema, so nothing binds it that its adoption did not, and every participant in an act is a party, so no registry, no log and no timing authority takes part in one; by Corollary~\ref{cor:privilege} no party is privileged by the text.  There is nothing else in the model to be subject to.  That a party is bound only by its own promise, condition (3), is Volition, Definition~\ref{def:volition}; Proudhon's \emph{reciprocal delivery}, that each party has value for value, is a condition on the contract's content and is not carried over.
\item \emph{(5) Freely accepted and signed by all the participants.}  This is how a contract is entered into rather than what it says.  Section~\ref{sec:signing} says how: each participant signs a one-sided undertaking to keep the contract and presents it to every participant it interacts with, so every participant has signed it and, in an instance of two or more, another participant holds it (Proposition~\ref{prop:signed}); and by Proposition~\ref{prop:consent} nobody else is bound.
\end{enumerate}

\noindent The remaining conditions do not hold for every contract.  A contract with no act by which two strangers can come into relation is written as act schemas as readily as one that has such an act, and fails Introduction.  The rest of this paper takes them up.

\begin{center}
\resizebox{\linewidth}{!}{\begin{tabular}{lll}
\hline
\emph{condition} & \emph{formal counterpart} & \emph{established by} \\
\hline
(1) an agreement of man with man & every participant is a party; no privileged party & Proposition~\ref{prop:consent}, Corollary~\ref{cor:privilege} \\
(2) reciprocal & every participant is a party & Proposition~\ref{prop:consent} \\
(3) no obligation but by personal promise & volition, Definition~\ref{def:volition} & Proposition~\ref{prop:closed} \\
(4) subject to no external authority & every participant is a party & Proposition~\ref{prop:consent} \\
(5) freely accepted and signed by all & the undertaking, Section~\ref{sec:signing} & Proposition~\ref{prop:signed} \\
(6) of the nature of a contract of exchange & an unobstructed introductory act, Definitions~\ref{def:introduction}, \ref{def:unobstructed} & Proposition~\ref{prop:open} \\
(7) people free to deal with each other & an unobstructed introductory act & Proposition~\ref{prop:open} \\
(8) no external register of people & traceable provenance, Definition~\ref{def:grounded} & Lemma~\ref{lem:grounded} \\
\hline
\end{tabular}}
\end{center}
\fi

\subsection{What We Prove}\label{sec:informal-conditions}

Conditions (3), (6), (7) and (8) \full{of Section~\ref{sec:intro}}{of the abstract} become three conditions on a contract's schemas, and the protocol realising a contract meeting them is grassroots.  All three are properties of the text and are decidable, so whether a contract meets them is decided by a checker rather than left to its drafter.  This section states them and the theorem; the sections that follow provide what is needed to prove it, and Section~\ref{sec:conditions} proves it.

\iffull
A party's state names people: its friends, the issuers of the coins it holds, the authors of the items it was sent.  A state that names a person stores an identifier of them, in the words of Section~\ref{sec:intro}, and every condition below is about which people a state names.

\fi
\begin{definition}[Records]\label{def:records-atoms}
A local state \emph{records} a person $q$ if $q$ is an argument of one of its atoms.
\end{definition}

\iffull
A speech act is opaque.  $D$ is disjoint from $\Pi$, so a person named inside a signed item is not an argument of the atom carrying it, and a state holding the item does not record them.  A contract records, and its schemas can require, only the arguments of atoms, and an atom has finitely many; a chain of signatures is carried inside a speech act rather than required of one.

\fi
\mypara{Introduction} Conditions (6) and (7): any two parties must be able to come into relation by an act of the contract willed by both that leaves each recording the other, e.g.\ befriending, or a swap of coins issued by their givers.  

\begin{definition}[Introductory act]\label{def:introduction}
A schema $\tau$ of $C$ of arity two, guarded in both roles, is an \emph{introductory act} if any two people $p \neq q$ have an \emph{introduction}: a binding $\beta$ of $\tau$ into $\{p,q\}$ that sends $\pi_1$ to $p$ and $\pi_2$ to $q$, under which some atom added at $\pi_1$ names $q$ and some atom added at $\pi_2$ names $p$.
\end{definition}

\iffull
Having such an act is not enough, because it may be out of reach: it may require an atom a party cannot obtain, or forbid one that somebody else can put there and leave.  The condition is that neither happens.

\fi
\begin{definition}[Unobstructed]\label{def:unobstructed}
An introductory act $\tau$ of $C$ is \emph{unobstructed} if any two people $p \neq q$ have an introduction $\beta$ such that, at each role $\pi_i$ of $\tau$:
\begin{enumerate}
    \item every atom $\tau$ requires at $\pi_i$ is, under $\beta$, added by some schema of $C$ of arity one that is guarded in its role and requires and forbids nothing, at some binding that sends its role to $\beta(\pi_i)$;
    \item every schema of $C$ that, at some binding, adds at $\beta(\pi_i)$ an atom $\tau$ forbids at $\pi_i$ has, at that binding, the other of $p$ and $q$ in a role and adds at $\beta(\pi_i)$ an atom naming them. 
\end{enumerate}
\end{definition}

\iffull
The second clause concerns an atom somebody else can put in a party's state and leave there.  It excepts an act that puts it there while bringing the two parties together: its binding gives the other party a role, and the atom it adds names that party, so every transaction it determines changes the state of both and leaves the first recording the second, and is an interaction between them.  Befriending at the binding that swaps its two roles is one, and so is any more guarded act that specialises the introduction, e.g.\ a befriending that also takes the parents' will.  The clause requires only that the act leave one of the two recording the other, which is enough for the act to be an interaction between them, and is all the proof of Proposition~\ref{prop:open} uses of it.

\fi
\mypara{Provenance} Condition (8): a record may name a person its holder has never dealt with: a coin that changed hands, an item forwarded from a friend.  Such a record must be traceable to someone who was there: either the act that made it had that person as a party, or another party to the act already held a record naming them.

\begin{definition}[Traceable provenance]\label{def:grounded}
Write $\Phi_E$ for the atoms over $R \cup X \cup Y$ whose predicate is in $E$.  A set $E \subseteq \Sigma$ has \emph{traceable provenance} in a contract $C$ if for every schema $\tau$ of $C$ with roles $\pi_1,\ldots,\pi_k$, every $i$, every $\varphi \in \mathord{+}(i) \cap \Phi_E$ and every argument $u$ of $\varphi$ in $R \cup X$ with $u \neq \pi_i$,
\[
u \in \{\pi_1,\ldots,\pi_k\}
\quad\text{or}\quad
u \text{ is an argument of some } \psi \in (\mathord{=}(j) \uplus \mathord{-}(j)) \cap \Phi_E, \text{ some } j.
\]
A predicate has \emph{traceable provenance} in $C$ if it belongs to such a set, and an atom is said to have it when its predicate does.
\end{definition}

\iffull
A predicate whose atoms are added only with roles among their arguments has traceable provenance.  The condition is weaker: an act may give a party a record of someone it does not yet record, provided another party to that act records them already.

\fi
\iffull
\mypara{Volition} Condition (3): an act that not all its parties are asked about must run only between parties that already record each other, as do unfriending and forwarding between friends, and paying by a holder of the payee's coins.

\fi
\begin{definition}[Volition]\label{def:volition}
The \emph{role graph} of a schema is the undirected graph on its roles with an edge between $\pi_i$ and $\pi_j$ when one of the two requires an atom of traceable provenance with the other among its arguments.  A contract $C$ satisfies \emph{volition} if every schema of it of arity two or more is guarded in all its roles or has a connected role graph.
\end{definition}

\iffull
The role graph is the undirected graph formed by the identifiers the parties store, of Section~\ref{sec:intro}, taken on the roles of one schema.  For a schema of arity two it is connected exactly when one of its two roles requires such an atom naming the other.  Above that arity connectedness is weaker than requiring it of every pair, and Proposition~\ref{prop:closed} uses it: lying in one instance is an equivalence, so a path through the roles puts all the participants there.

\fi
\begin{definition}[Syntactically grassroots]\label{def:syntactically-grassroots}
A contract is \emph{syntactically grassroots} if it has an unobstructed introductory act and satisfies volition.
\end{definition}

A social contract written as act schemas is a grassroots social contract when it is syntactically grassroots, and the compilation of Section~\ref{sec:compilation} is defined for such contracts.

\begin{restatable}[Grassroots realisation]{theorem}{grassrootsthm}\label{thm:grassroots}
The protocol realising a syntactically grassroots social contract is volitionally grassroots.
\end{restatable}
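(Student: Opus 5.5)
The proof factors through politeness, as the introduction announces. I would invoke Theorem~\ref{thm:polite}, that a polite set of volition-guarded transactions yields a volitionally grassroots protocol. It then suffices to show that the compilation of Section~\ref{sec:compilation} maps a syntactically grassroots contract $C$ to a polite set of transactions. Politeness has two halves: \emph{openness}, that any two parties can always come to interact, and \emph{closure}, that no act joins two instances unless all its participants will it. I would prove each half from one part of Definition~\ref{def:syntactically-grassroots}, as Propositions~\ref{prop:open} and~\ref{prop:closed}, and then combine them.

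\textbf{Openness.} Fix people $p \neq q$ and any reachable configuration. Take the unobstructed introductory act $\tau$ and the introduction $\beta$ for $p,q$ given by Definition~\ref{def:unobstructed}. By its clause~1, every atom $\tau$ requires at $\pi_i$ is added by an arity-one schema that is guarded in its role, requires and forbids nothing, and can be bound to $\beta(\pi_i)$. Hence $\beta(\pi_i)$ can supply all these atoms by unary acts it alone wills, each of which is enabled in every state. The forbidden atoms are handled by clause~2. Any forbidden atom present at $\beta(\pi_i)$ either was put there by an act that already involved the other of $p,q$ and left a record naming them, in which case $p$ and $q$ have already interacted, or it was not there to begin with. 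So either the two have interacted, or the preconditions of $\tau$ under $\beta$ can be met and $\tau$ fires with both willing. In both cases each ends up recording the other. I expect this step to be the main obstacle. The difficulty is matching ``can always come to interact'' in the framework of~\cite{lewis2026volitional} against the state reached: a forbidden atom may have been added by a third party, and clause~2 must be read carefully to exclude this. Multiset multiplicities of required atoms also need care, which I would handle by repeating the unary act the required number of times.

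\textbf{Closure.} First I would prove Lemma~\ref{lem:grounded}, by induction on runs using Definition~\ref{def:grounded}: if a state's atom of traceable provenance names $q$, then its holder lies in the same instance as $q$, meaning they are connected through past interactions. In the inductive step, a newly added argument $u$ either is a role of the act, so the holder and $u$ both take part in it, or is an argument of a required atom of traceable provenance held by some co-party. In the latter case the induction hypothesis places that co-party with $u$, and the act places the co-party with the holder. Next, by Definition~\ref{def:volition}, each schema of arity at least two is either guarded in all roles, in which case joining instances is willed by all participants, or has a connected role graph. In the connected case every edge is a required traceable atom naming a co-party, so the lemma puts the endpoints of each edge in one instance. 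Since lying in one instance is an equivalence, the whole path does too, and the act joins no two instances. Unary acts join nothing. Proposition~\ref{prop:consent} ensures that only parties in roles change state, so the transactions are exactly those the schemas determine.

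Combining openness and closure gives politeness, and Theorem~\ref{thm:polite} then yields the statement.
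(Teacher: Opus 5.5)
Your proposal is correct and follows the paper's proof. The paper likewise reduces the theorem to politeness via Theorem~\ref{thm:polite}, derives openness from the unobstructed introductory act (unary acts supply the required atoms, and clause~2 turns any forbidden atom into a prior interaction), and derives closure from Lemma~\ref{lem:grounded} together with volition and the connectedness of the role graph. The one justification you leave implicit is why each act you invoke counts as an interaction: every participant's state changes, and the holder's new state records someone other than itself, so it lies outside $S_C(\{p\})$ and its restriction is no transition of $\calF(\{p\})$.
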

The rest of the paper is dedicated to proving this theorem.

Whether a contract is syntactically grassroots is decidable, and a checker deciding it accompanies this paper, at \url{https://github.com/EShapiro2/GLP/tree/main/programs/jurix}.  \iffull It certifies both contracts of Section~\ref{sec:contracts}, with the sets of predicates of traceable provenance that Section~\ref{sec:conditions} states.  Everything the three conditions quantify over is finite once the bindings are taken up to renaming: an introductory act is an assignment of a schema's party variables to its two roles; unobstructed is the matching of one atom against those the finitely many schemas add; volition is a connectivity check on the roles.  Traceable provenance is not decided in a single pass: the largest set of predicates having it is reached by dropping any predicate that fails the condition and repeating, which ends because there are finitely many.\fi

\iffull
Section~\ref{sec:transactions} recalls what \emph{grassroots} means and Section~\ref{sec:compilation} says what the protocol realising a contract is.  Section~\ref{sec:polite} proves the theorem the proof rests on, and Section~\ref{sec:conditions} proves this one.

\fi

%% file: sections/04-transactions.tex
\section{Volition-Guarded Transactions and Grassroots Protocols}\label{sec:transactions}

This section recalls from~\cite{lewis2026volitional} what the statements below use, in brief.  Nothing in it is ours.  \full{Appendix~\ref{app:framework} states the same notions formally, together with the results used in the proofs of Section~\ref{sec:polite}, so that the paper is self-contained; the proofs of those results are in~\cite{lewis2026volitional} and are not repeated.}{The same notions are stated formally, with the results the proofs use, in the full version of this paper~\cite{shapiro2026formalising}.}
\full{}{Volition-guarded transactions follow multiagent transition systems~\cite{shapiro2021multiagent} and multiagent atomic transactions~\cite{shapiro2025atomic}.  }Throughout, $\Pi$ is a potentially infinite set of \emph{agents}, an agent being a person operating a machine, and $P \subset \Pi$ ranges over its finite nonempty subsets.  For a set $\calS$, $\calS^P$ is the set of total functions from $P$ to $\calS$, and $c_p$ is the value of $c$ at $p$.

A \emph{local-states function} $S$ maps every $P \subset \Pi$ to a set $S(P)$ of local machine states containing the initial state $s_0$, with $P \subset P'$ implying $S(P) \subset S(P')$.  A \emph{machine transaction} over \emph{participants} $Q$ is a pair $d \rightarrow d' \in (\calS^Q)^2$ with $d \neq d'$; it is \emph{unary} if $|Q| = 1$.  A \emph{volition-guarded transaction} is a pair $(t,Q')$ of a machine transaction $t$ over $Q$ with its \emph{guards} $Q' \subseteq Q$: the people in $Q'$ must be willing for it to be taken.  A \emph{transaction equivalence} $\sim$ relates transactions with the same participants and has countably many classes; a person wills a class, and taking one member of it fulfils that will.  For a set $R$ of volition-guarded transactions over $S$ and $P \subset \Pi$, we write $R(P)$ for those of them whose transactions are over $S(P')$ for some $P' \subseteq P$.

An \emph{agent state} is a pair of a \emph{volitional state}, a set of classes the person is willing their machine to take part in, and a machine state.  A set of volition-guarded transactions over a local-states function induces, for each $P$, a transition system whose configurations are agent configurations over $P$ and whose transitions are \emph{volition changes}, by which a person by themselves alters their volitional state, and \emph{volitional machine transactions}, each induced by a volition-guarded transaction whose guards all hold its class.  A transaction is \emph{machine-enabled} in a configuration whose machine states restrict on the participants to its source, and \emph{enabled} if in addition every guard holds its class.  The family of these systems, one per $P$, is the \emph{protocol} $\calF$ over the set, $\calF(P)$ being its member at $P$; runs are \emph{safe} when consecutive states form transitions, \emph{live} when no class stays enabled forever without a member being taken, and \emph{correct} when both.

\full{\begin{definition}[Interaction, Interactive Run, First Interaction~\cite{lewis2026volitional}]\label{def:first-interaction}
Let $Q\subset\Pi$.  For a configuration $c$ over $Q$ and $Q'\subseteq Q$, write $c|_{Q'}$ for the restriction of $c$ to $Q'$.  A transition $e\to e'$ of $\calF(Q)$ is an \emph{interaction between} $x\ne y\in Q$ if $e|_{\{x\}} \ne e'|_{\{x\}}$ and $e|_{\{y\}} \ne e'|_{\{y\}}$, and $e|_{\{x\}} \to e'|_{\{x\}}$ is not a transition of $\calF(\{x\})$ or $e|_{\{y\}} \to e'|_{\{y\}}$ is not a transition of $\calF(\{y\})$.  It is an \emph{interaction between} disjoint nonempty $P, P'\subseteq Q$ if it is an interaction between some $x\in P$ and some $y\in P'$.  A run $\hat r$ of $\calF(Q)$ is \emph{interactive between $P$ and $P'$} if some transition of $\hat r$ is an interaction between them; the \emph{first interaction} of $P$ and $P'$ in $\hat r$ is the first such transition.
\end{definition}

\begin{definition}[Interaction Graph, Instance, Coalescence~\cite{lewis2026volitional}]\label{def:coalescence}
Let $\calF$ be a protocol, $Q\subset\Pi$, and $r$ a prefix of a run of $\calF(Q)$.  The \emph{interaction graph} of $r$ has vertices $Q$ and an edge between $a\ne b\in Q$ whenever some transition of $r$ is an interaction between $a$ and $b$.  A nonempty $P\subseteq Q$ is an \emph{instance} in $r$ if it is a connected component of the interaction graph of $r$.  Disjoint nonempty $P, P'\subseteq Q$ \emph{coalesce} in a run $\hat r$ of $\calF(Q)$ if they are instances in some prefix of $\hat r$ and lie in one component of the interaction graph of a longer prefix of $\hat r$.
\end{definition}}{A transition of $\calF(Q)$ is an \emph{interaction} between two agents if it changes the state of each and its restriction to one of them is no transition of that agent's own system; it is an interaction between two disjoint groups if it is one between a member of each.  In the \emph{interaction graph} of a run, two agents are joined whenever some transition of the run is an interaction between them; an \emph{instance} in the run is a connected component of that graph, and two instances \emph{coalesce} when they come to lie in one component.}

A protocol is \emph{oblivious} if any interleaving of correct runs of two disjoint groups is a correct run of the two together, \emph{interactive} if two groups that have not yet interacted can always still do so, and \emph{grassroots} if it is both.  It is \emph{volitionally grassroots} if it is grassroots and the first interaction of two groups is induced by a transaction guarded by a member of each, so groups become connected only by mutual consent.  In a grassroots protocol two instances can always coalesce.

%% file: sections/05-compilation.tex
\section{Compiling Schemas into Transactions}\label{sec:compilation}

Act schemas have no operational semantics of their own: a schema of a syntactically grassroots contract means the volition-guarded transactions it compiles to, and those have the semantics of~\cite{lewis2026volitional}.  The compilation is defined for syntactically grassroots contracts, so a contract is checked before it is compiled.  $S_C$ is a local-states function\full{ in the sense of Appendix~\ref{app:framework}}{}: $s_0 \in S_C(P)$ for every $P$, and $P \subset P'$ implies $S_C(P) \subset S_C(P')$, strictly, provided some predicate has positive arity, as one does in every syntactically grassroots contract, its introductory act adding at $\pi_1$ an atom naming $\pi_2$: for $q \in P' \setminus P$ and such a predicate, an atom with $q$ among its arguments gives a state in $S_C(P')$ and not in $S_C(P)$.

\subsection{The Compilation}\label{sec:compile}

Compilation maps an act schema of a syntactically grassroots contract to a volition-guarded multiagent atomic transaction~\cite{lewis2026volitional}.

\begin{definition}[Compilation]\label{def:compile}\label{def:instance}
Let $\tau$ be a schema with roles $\pi_1,\ldots,\pi_k$, $\beta$ a binding for it, and $p_i := \beta(\pi_i)$.  A machine transaction $d \rightarrow d'$ over $\{p_1,\ldots,p_k\}$ is \emph{determined by $\tau$ at $\beta$} if for every $i$
\[
\begin{gathered}
\beta(\mathord{=}(i) \uplus \mathord{-}(i)) \subseteq d_{p_i}, \qquad
\beta(\mathord{\neg}(i)) \cap d_{p_i} = \emptyset, \qquad
\beta(\mathord{+}(i)) \cap \beta(\mathord{-}(i)) = \emptyset, \\
d'_{p_i} = \bigl(d_{p_i} \setminus \beta(\mathord{-}(i))\bigr) \uplus \beta(\mathord{+}(i)),
\end{gathered}
\]
and the volition-guarded transaction it determines is the pair of it with $\{\beta(\pi) : \pi \text{ a guarding role of } \tau\}$.\iffull{}  In the notation of~\cite{lewis2026volitional}, $\tau$ \emph{compiles to}
\[
\begin{array}{ll}
c'_{\pi_i} := (c_{\pi_i} \setminus \mathord{-}(i)) \uplus \mathord{+}(i) & (1 \le i \le k),\\[2pt]
\text{provided } \mathord{=}(i) \uplus \mathord{-}(i) \subseteq c_{\pi_i} \text{ and } \mathord{\neg}(i) \cap c_{\pi_i} = \emptyset & (1 \le i \le k),\\[2pt]
\text{guarded by the guarding roles of } \tau, &
\end{array}
\]
which at a binding $\beta$ denotes the volition-guarded transactions determined by $\tau$ at $\beta$.\fi
\end{definition}

A contract compiles schema by schema.  A schema with no guarding role compiles to transactions with an empty guard, which are enabled whenever they are machine-enabled and are then eventually taken: no person wills such an act.

\begin{definition}[Realisation of a contract]\label{def:realisation}
Let $C$ be a syntactically grassroots contract.  Its \emph{realisation} $R_C$ is the set of volition-guarded transactions determined by the schemas of $C$ at all bindings for them, and $\sim$ is the equivalence on their underlying machine transactions generated by relating two when one schema at one binding determines both.  The \emph{protocol realising $C$} is the volitional transactions-based protocol over $R_C$, $S_C$ and $\sim$.
\end{definition}

\iffull
A set of volition-guarded transactions over $S_C$ with an equivalence $\sim$ is \emph{well formed} if distinct machine transactions underlying it have disjoint $P$-closures for every $P \subset \Pi$, and $\sim$ is a transaction equivalence with countably many classes over any $P$, which is what Definitions~\ref{def:equivalence} and~\ref{def:vmts} require.

\begin{lemma}[The compilation is well defined]\label{lem:translation}
$R_C$ with $\sim$ is well formed.
\end{lemma}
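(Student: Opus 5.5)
The plan is to check the two clauses of well-formedness separately. Both rest on one observation about Definition~\ref{def:compile}: a machine transaction determined by a schema changes the state of every participant, and of no one else.

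\textbf{Every participant changes state.} Let $d \rightarrow d'$ be determined by $\tau$ at $\beta$, and put $A = \beta(\mathord{-}(i))$ and $B = \beta(\mathord{+}(i))$. Then
\[
d'_{p_i} = (d_{p_i} \setminus A) \uplus B, \qquad A \subseteq d_{p_i}, \qquad A \cap B = \emptyset, \qquad A \uplus B \neq \emptyset .
\]
The first three facts are the conditions of Definition~\ref{def:compile}. The last follows from $\mathord{+}(i) \uplus \mathord{-}(i) \neq \emptyset$ in Definition~\ref{def:schema}, since a binding maps a nonempty multiset to a nonempty one. Pick an atom $a$ of $A \uplus B$. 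If $a \in A$, then $a \notin B$, so its multiplicity in $d'_{p_i}$ drops by its multiplicity in $A$. Otherwise $a \in B \setminus A$ and its multiplicity rises. Either way $d'_{p_i} \neq d_{p_i}$ for every $i$. In particular $d \neq d'$, so the pair is a genuine machine transaction. Its states lie in $S_C(P)$ for any $P \supseteq \{p_1,\ldots,p_k\}$ that contains the people the binding uses, so it belongs to $R(P)$ in the sense of Section~\ref{sec:transactions}.

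\textbf{Disjoint $P$-closures.} I would take the $P$-closure of a transaction $t = d \rightarrow d'$ over $Q \subseteq P$ as in Appendix~\ref{app:framework}: the configuration pairs $c \rightarrow c'$ over $P$ with $c|_Q = d$, $c'|_Q = d'$ and $c_p = c'_p$ for $p \in P \setminus Q$. Suppose $c \rightarrow c'$ lies in the $P$-closures of two transactions $t_1$ over $Q_1$ and $t_2$ over $Q_2$ from $R_C$. By the observation above, $Q_j$ is exactly the set of agents whose machine state differs between $c$ and $c'$, so $Q_1 = Q_2$. Then $t_1$ and $t_2$ are both the restriction of $c \rightarrow c'$ to that set, so $t_1 = t_2$. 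Hence distinct transactions have disjoint closures. If the appendix's notion of closure also carries volitional states along unchanged, the same argument applies to the machine component alone.

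\textbf{The equivalence.} The generating relation relates two transactions determined by one schema at one binding. Both then have participants $\{\beta(\pi_1),\ldots,\beta(\pi_k)\}$, and equality of participants is preserved by reflexive, symmetric and transitive closure, so $\sim$ is a transaction equivalence. For countability over a given $P$: there are finitely many schemas. A schema's bindings send its roles and party variables into the finite set $P$ and its speech-act variables into the countable set $D$, so there are countably many bindings. Each binding determines at most countably many transactions over the countable set $S_C(P)$, since states are finite multisets over the countable set $\Sigma[P \cup D]$. So countably many transactions underlie $R_C(P)$, and therefore countably many classes.

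\textbf{Main obstacle.} I expect the only delicate step to be matching the exact definition of $P$-closure and of transaction equivalence in Appendix~\ref{app:framework}; in particular, whether classes are required to be countable per $P$ or globally. The multiset argument that every role changes state is the substantive fact that the whole proof relies on.
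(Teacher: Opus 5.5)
Your argument is the paper's. Every participant of a determined transaction changes state, so any transition in its $P$-closure recovers the participants as its non-stationary agents, which gives disjoint closures. The generating relation relates only transactions with the participants of one binding, so $\sim$ preserves participants. A counting argument then gives countably many classes over $P$. Your multiset computation that each role changes state, using $\beta(\mathord{+}(i)) \cap \beta(\mathord{-}(i)) = \emptyset$ from Definition~\ref{def:compile} and $\mathord{+}(i) \uplus \mathord{-}(i) \neq \emptyset$ from Definition~\ref{def:schema}, fills in a step the paper simply attributes to Definition~\ref{def:compile}.

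Two details need repair, both minor.

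First, when you place the transaction over $S_C(P)$, a $P$ containing only the participants and the people the binding uses is not enough. The source configuration $d$ is arbitrary apart from the precondition, so it may record third parties. For example, befriend at $p,q$ with $\mathit{friend}(z) \in d_p$ gives $d_p \notin S_C(\{p,q\})$. The paper takes $P'$ to be the participants, the people $\beta$ assigns to party variables, and the arguments of the atoms of $d$. Every atom of $d'$ is then an atom of $d$ or a $\beta$-image of an added atom, so $d \rightarrow d'$ is over its participants and $S_C(P')$.

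Second, your count of classes goes through the number of transactions, so it needs $\Sigma[P \cup D]$ to be countable, that is, $\Sigma$ countable. The paper never assumes this. The paper avoids it by noting that all transactions determined by one schema at one binding lie in a single class. It can then bound the classes over $P$ by the finitely many schemas, the finitely many assignments of their roles and party variables into $P$, and the countably many assignments of speech-act variables into $D$. You already state that ingredient, so drop the per-binding count of transactions and the argument no longer depends on $\Sigma$.
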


\fi
\full{}{$R_C$ is a set of volition-guarded transactions over $S_C$ whose distinct machine transactions have disjoint $P$-closures, and $\sim$ is a transaction equivalence with finitely many classes over any $P$.}
\iffull
\begin{proof}
Let $d \rightarrow d'$ be determined by $\tau$ at $\beta$, with participants $Q$ the people in its roles, and let $P'$ be $Q$ together with the people $\beta$ assigns to the party variables of $\tau$ and the arguments of the atoms of $d$.  Every atom of every $d'_{p}$ is an atom of $d_p$ or the image under $\beta$ of an atom added in a role, so its arguments lie in $P'$, as do those of every atom of $d$, and $d \rightarrow d'$ is a machine transaction over $Q$ and $S_C(P')$.

By Definition~\ref{def:compile} the state of every participant changes, so the participants of a machine transaction are exactly the agents that are not stationary in any transition of its $P$-closure.  A transition lying in the closures of two of them therefore forces the two to have the same participants and the same restriction to them, hence to be equal, and the closures of distinct machine transactions are disjoint, which is what Definition~\ref{def:vmts} requires of $R_C$.

Two machine transactions determined by one schema at one binding have the participants of that binding, so the relation generating $\sim$ holds only between transactions with the same participants, and so does $\sim$.  A class over $P$ is a union of sets of transactions each determined by a schema at a binding sending its roles and party variables into $P$ and its speech-act variables into $D$; a contract has finitely many schemas, each has finitely many such bindings of its roles and party variables, and $D$ is countable, so the classes over $P$ are countably many, as Definition~\ref{def:equivalence} requires.
\end{proof}
\fi

\iffull
\subsection{Two Examples}\label{sec:compiled-examples}

With Alice and Bob standing for the parties in those roles, befriend (Section~\ref{sec:sg-schemas}) compiles to
\begin{align*}
& c'_{\Alice} := c_{\Alice} \uplus \{\mathit{friend}(\Bob)\}, \qquad c'_{\Bob} := c_{\Bob} \uplus \{\mathit{friend}(\Alice)\},\\
& \text{provided } \mathit{friend}(\Bob) \notin c_{\Alice} \text{ and } \mathit{friend}(\Alice) \notin c_{\Bob}, \qquad \text{guarded by } \{\Alice,\Bob\},
\end{align*}
and denotes, for every $d$ with $\mathit{friend}(\Bob) \notin d_{\Alice}$ and $\mathit{friend}(\Alice) \notin d_{\Bob}$, the volition-guarded transaction $(d \rightarrow d',\{\Alice,\Bob\})$ with $d'_{\Alice} = d_{\Alice} \uplus \{\mathit{friend}(\Bob)\}$ and $d'_{\Bob} = d_{\Bob} \uplus \{\mathit{friend}(\Alice)\}$.  There is one for every admissible $d$, since the two may befriend whatever other friends each has, and $\sim$ relates them all: they are one act of befriending, available at different points of a run.  \iffull The swap (Section~\ref{sec:cb-schemas}) compiles to
\begin{align*}
& c'_{\Alice} := (c_{\Alice} \setminus \{\text{\textcent}(u)\}) \uplus \{\text{\textcent}(v)\}, \qquad c'_{\Bob} := (c_{\Bob} \setminus \{\text{\textcent}(v)\}) \uplus \{\text{\textcent}(u)\},\\
& \text{provided } \text{\textcent}(u) \in c_{\Alice} \text{ and } \text{\textcent}(v) \in c_{\Bob}, \qquad \text{guarded by } \{\Alice,\Bob\}.
\end{align*}
The bindings with $u = v$ determine nothing, by Definition~\ref{def:compile}: a swap of a coin for a coin of the same issue changes no state.  The swap at $u = \Alice$ and $v = \Bob$ is the mutual credit line.\fi

\fi
\subsection{Every Participant Is a Party}\label{sec:party}

\iffull
Throughout, $C$ is a syntactically grassroots contract, $\calF$ the protocol realising it and $P \subset \Pi$.  No act reaches anyone who is not a party.

\begin{proposition}[Every participant is a party]\label{prop:consent}
Every agent whose machine state a transition of $\calF(P)$ changes is assigned to a role of a schema of $C$ by some binding.
\end{proposition}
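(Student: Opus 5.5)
The plan is to argue by cases on the two kinds of transition of $\calF(P)$, as recalled in Section~\ref{sec:transactions}: volition changes and volitional machine transactions.

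A volition change alters only the volitional state of one person and leaves every machine state as it was. So no agent's machine state changes in it, and the proposition holds vacuously for such transitions. The rest of the proof concerns the other kind.

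A volitional machine transaction of $\calF(P)$ is induced by some volition-guarded transaction $(d \rightarrow d', Q')$ in $R_C(P)$, applied to a configuration whose restriction to the participants $Q$ is $d$. I will fix from the transition semantics two facts, taken from~\cite{lewis2026volitional} as stated in Appendix~\ref{app:framework}:
\begin{itemize}
\item the machine states of agents in $P \setminus Q$ are left unchanged, since the transition is in the $P$-closure of $d \rightarrow d'$;
\item the machine states of agents in $Q$ become those of $d'$.
\end{itemize}
Hence any agent whose machine state changes lies in $Q$.

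By Definition~\ref{def:realisation}, $d \rightarrow d'$ is determined by some schema $\tau$ of $C$ at some binding $\beta$. By Definition~\ref{def:compile}, its participants are exactly $\{p_1,\ldots,p_k\}$ with $p_i = \beta(\pi_i)$. So every such agent is $\beta(\pi_i)$ for some role $\pi_i$ of $\tau$, which is the claim.

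The only step that needs care is the first bullet: that agents outside $Q$ are stationary. This depends on the precise definition of the $P$-closure of a machine transaction in the framework. I would cite it directly from the appendix rather than reprove it.

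As a side remark, the converse is also true, though the statement does not need it. Every participant does change, because $\mathord{+}(i) \uplus \mathord{-}(i) \neq \emptyset$ and $\beta(\mathord{+}(i)) \cap \beta(\mathord{-}(i)) = \emptyset$. This was already used in the proof of Lemma~\ref{lem:translation}.
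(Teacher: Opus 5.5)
Your proof is correct and follows the paper's own argument. A volition change fixes every machine state, and a volitional machine transaction changes only participants of the inducing transaction, who by Definitions~\ref{def:realisation} and~\ref{def:compile} are exactly the people a binding assigns to roles of a schema. The step you flag needs no appeal to the $P$-closure: Definition~\ref{def:vmat}(2) states outright that every agent outside the participants keeps its machine state, and that is what the paper cites.
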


\iffull
\begin{proof}
A volition change leaves every machine state unchanged (Definition~\ref{def:vmat}), so a transition changing one is a volitional machine transaction induced by some $(t,Q')\in R_C(P)$, and by that definition it leaves the machine state of every agent outside the participants of $t$ unchanged.  By Definition~\ref{def:realisation} $t$ is determined by a schema $\tau$ of $C$ at a binding $\beta$, and by Definition~\ref{def:compile} the participants of $t$ are the people $\beta$ assigns to the roles of $\tau$.
\end{proof}
\fi

An agent therefore bears an effect of the contract only under a schema of it, and only in a role.  The second property\full{ of Section~\ref{sec:proudhon}}{} is that the schemas single out no agent.

A permutation $\sigma$ of $\Pi$ extends to atoms, by permuting the arguments in $\Pi$ and fixing those in $D$, and to states, configurations, transactions and runs componentwise.

\begin{proposition}[Anonymity]\label{prop:anonymity}
For every permutation $\sigma$ of $\Pi$, $\sigma(R_C) = R_C$, $\sigma(c_0(P)) = c_0(\sigma(P))$, and $\sigma$ carries the runs of $\calF(P)$ to the runs of $\calF(\sigma(P))$.
\end{proposition}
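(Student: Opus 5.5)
The proof has three parts: show that $\sigma$ maps $R_C$ onto itself, then that it preserves the initial configuration, and finally that it carries the transition system of $\calF(P)$ isomorphically onto that of $\calF(\sigma(P))$. Since $\sigma^{-1}$ is also a permutation, it suffices to prove $\sigma(R_C)\subseteq R_C$; applying the same inclusion to $\sigma^{-1}$ gives equality.

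\textbf{$R_C$ is invariant.} Let $(d\rightarrow d',G)\in R_C$ be determined by a schema $\tau$ at a binding $\beta$. Put $\beta' := \sigma\circ\beta$ on roles and party variables, and leave $\beta'$ equal to $\beta$ on $Y$. Since $\sigma$ is a bijection, $\beta'$ is still injective on roles, so it is a binding. Schemas are written over $R\cup X\cup Y$, which is disjoint from $\Pi$, so no schema contains a constant person. Hence $\beta'(\varphi)=\sigma(\beta(\varphi))$ for every schema atom $\varphi$, and the same holds for multisets.

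Next check the conditions of Definition~\ref{def:compile} for $\sigma(d)\rightarrow\sigma(d')$ at $\beta'$. On the multisets of ground atoms over $\Pi\cup D$, $\sigma$ is a bijection that respects multiplicities. It therefore commutes with $\uplus$, $\setminus$ and $\cap$, and it preserves $\subseteq$ and emptiness. Each condition at $\beta$ thus transfers verbatim to $\beta'$, with participants $\sigma(p_i)=\beta'(\pi_i)$. The guard set becomes $\{\beta'(\pi):\pi \text{ guarding}\}=\sigma(G)$, so $\sigma$ maps the pair into $R_C$. In the same way, "determined by one schema at one binding" is preserved, so $\sigma$ respects the generators of $\sim$ and therefore $\sim$ itself. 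Consequently $\sigma$ maps classes to classes, and it maps volitional states to volitional states.

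\textbf{Initial configuration.} By Definition~\ref{def:states} every agent starts with machine state $s_0=\emptyset$, and the initial volitional state is likewise agent-independent. Since $\sigma(\emptyset)=\emptyset$, we get $\sigma(c_0(P))=c_0(\sigma(P))$. We also need $\sigma(S_C(P))=S_C(\sigma(P))$, which is immediate from the definition of $S_C(P)$ as the multisets over $\Sigma[P\cup D]$. It follows that $\sigma(R_C(P))=R_C(\sigma(P))$.

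\textbf{Runs.} We show that $\sigma$ maps transitions of $\calF(P)$ to transitions of $\calF(\sigma(P))$.

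\emph{Volition changes.} These alter one agent's volitional state only. They map to volition changes by the agent $\sigma(p)$, using that $\sigma$ maps classes to classes.

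\emph{Volitional machine transactions.} Such a transaction is induced by some $(t,Q')\in R_C(P)$ whose guards all hold its class. Its image is induced by $(\sigma(t),\sigma(Q'))\in R_C(\sigma(P))$, and each guard $\sigma(q)$ holds $\sigma$ of the class. Machine-enabledness transfers because restriction to participants commutes with $\sigma$, and non-participants stay unchanged under the image as well. Enabledness transfers too, so $\sigma$ preserves liveness, and hence safety, liveness and correctness of runs.

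Applying the same argument to $\sigma^{-1}$ gives surjectivity onto the runs of $\calF(\sigma(P))$.

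The main obstacle is the bookkeeping on the framework side rather than the schema side. We must confirm that every ingredient of the protocol in Appendix~\ref{app:framework} is defined without reference to particular agents: the $P$-closures, the volition changes, and the definition of a live run. Each of them must therefore commute with a renaming of agents. I would establish this first as a general lemma: any bijection of $\Pi$ that preserves the local-states function, the transaction set and $\sim$ induces an isomorphism of the protocols. Proposition~\ref{prop:anonymity} then reduces to the three invariance facts above.
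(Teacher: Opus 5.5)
Your proposal is correct and takes essentially the paper's route: $\sigma\circ\beta$ is again a binding, and each clause of the compilation transfers because $\sigma$ is a bijection on atoms. Guards and $\sim$ are carried along, $\sigma^{-1}$ gives the reverse inclusion, and $\sigma(S_C(P'))=S_C(\sigma(P'))$ yields $\sigma(R_C(P))=R_C(\sigma(P))$. The paper disposes of the runs in one sentence, noting that the volitional construction and the $P$-closure are defined from $R_C(P)$, $\sim$ and $c_0(P)$ alone; this is the agent-neutrality lemma you propose to check transition by transition.
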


\iffull
\begin{proof}
$\sigma$ maps $\Sigma[P \cup D]$ bijectively onto $\Sigma[\sigma(P) \cup D]$, hence $S_C(P)$ onto $S_C(\sigma(P))$, and fixes $s_0 = \emptyset$; so $\sigma(c_0(P)) = c_0(\sigma(P))$.  Let $t$ be determined by a schema $\tau$ at a binding $\beta$.  Then $\sigma \circ \beta$ is again a binding for $\tau$, being injective on the roles and leaving $Y$ mapped into $D$.  Each clause of Definition~\ref{def:compile} is an inclusion, a disjointness or an identity between images under $\beta$ and the local states of $d$ and $d'$, and $\sigma$ is a bijection on atoms, so applying it to both sides of each shows $\sigma(t)$ determined by $\tau$ at $\sigma \circ \beta$.  The guarding roles of $\tau$ do not depend on the binding, so $\sigma$ carries the volition-guarded transaction at $\beta$ to the one at $\sigma \circ \beta$; by Definition~\ref{def:realisation} $\sigma(R_C) \subseteq R_C$, and $\sigma^{-1}$ gives the reverse inclusion.  Two members of $R_C$ are $\sim$-related exactly when one schema at one binding determines both, a condition $\sigma$ preserves, and $R_C(P)$ is carried to $R_C(\sigma(P))$ because $\sigma$ carries $S_C(P')$ to $S_C(\sigma(P'))$ for every $P' \subseteq P$.  The volitional construction and the $P$-closure are defined from $R_C(P)$, $\sim$ and $c_0(P)$ only, so $\sigma$ carries $\calF(P)$ to $\calF(\sigma(P))$ and runs to runs.
\end{proof}
\fi

\iffull
\begin{corollary}[No privileged party]\label{cor:privilege}
If a configuration is reachable in a run over $P$, so is its image under every permutation of $P$.
\end{corollary}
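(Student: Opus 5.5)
The plan is to derive the corollary directly from Proposition~\ref{prop:anonymity}, restricted to permutations of $\Pi$ that fix $P$ setwise.  Let $\rho$ be a permutation of $P$, and extend it to a permutation $\sigma$ of $\Pi$ by letting it act as the identity outside $P$.  Then $\sigma(P) = P$, so the proposition gives two facts: $\sigma(c_0(P)) = c_0(P)$, and $\sigma$ carries the runs of $\calF(P)$ to the runs of $\calF(P)$.

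Now suppose a configuration $c$ is reachable in a run $r$ of $\calF(P)$, that is, $c$ occurs in $r$.  The image $\sigma(r)$ is again a run of $\calF(P)$.  Because $\sigma$ acts on runs componentwise, $\sigma(c)$ occurs in $\sigma(r)$ at the same position that $c$ occupies in $r$.  The run $\sigma(r)$ starts from $\sigma(c_0(P)) = c_0(P)$, so $\sigma(c)$ is reachable from the initial configuration over $P$.

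It remains to identify $\sigma(c)$ with the image of $c$ under $\rho$.  A configuration over $P$ assigns to each agent $p \in P$ an agent state consisting of a volitional state and a machine state.  Its image under $\sigma$ is the configuration whose value at $\sigma(p)$ is the agent state of $p$, with every argument of every atom renamed by $\sigma$.  The volitional state is a set of transaction classes, and these are renamed by $\sigma$ as well.  This is well defined because the proof of Proposition~\ref{prop:anonymity} shows that $\sigma$ preserves $\sim$ and carries $R_C$ to itself.  Moreover, every atom of a state in $S_C(P)$ has its person-arguments in $P$, and the classes over $P$ involve only agents of $P$.  So the values of $\sigma$ outside $P$ never enter, and $\sigma(c)$ depends only on $\rho$.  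This is exactly what ``its image under a permutation of $P$'' means.

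I expect no step to be a real obstacle, since all the substantive content is in Proposition~\ref{prop:anonymity}.  The one point to check with care is that the componentwise extension of $\sigma$ to configurations moves the agent's position as well as renaming the agent's contents, so that the image is again a configuration indexed by $P$.  If runs are taken to include volition changes, one should also confirm that the extension to volitional states was part of ``componentwise'' in the proposition.  The proof of the proposition already covers this, since it notes that the volitional construction is defined from $R_C(P)$, $\sim$ and $c_0(P)$ only.
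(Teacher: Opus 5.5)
Your proposal is correct and is the paper's proof: extend the permutation of $P$ by the identity on $\Pi \setminus P$ and apply Proposition~\ref{prop:anonymity}, which fixes $c_0(P)$ and carries the run to a run over $P$ through the image configuration. Your extra check that the image depends only on the permutation $\rho$ of $P$ is sound, with one imprecision: a class in a volitional state can contain transactions whose states name agents outside $P$, so it is not true that such classes ``involve only agents of $P$''; the correct justification is that $\sigma([t]) = [\sigma(t)]$ for a representative $t \in T_{R_C(P)}$, and $\sigma(t)$ depends only on $\rho$.
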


\fi
\iffull
\begin{proof}
Extend the permutation of $P$ by the identity on $\Pi \setminus P$ and apply Proposition~\ref{prop:anonymity}, which carries the run to a run over $P$ ending in the image configuration.
\end{proof}
\fi
\fi
\full{}{No act reaches anyone who is not a party: every agent whose state a transition of the protocol realising $C$ changes is the person some binding assigns to a role of some schema of $C$.  And the schemas single out no agent: a permutation of the people carries the protocol to itself, so a configuration reachable in a run over $P$ has every image under a permutation of $P$ reachable too.}

Conditions (1), (2), (4) and (5) \full{of Section~\ref{sec:intro}}{of the abstract} follow from these two properties\full{, as Section~\ref{sec:proudhon} sets out}{}.  That the protocol realising the contract is grassroots does not follow from them.

%% file: sections/06-polite.tex
\section{Polite Sets of Volition-Guarded Transactions}\label{sec:polite}

Here we prove that the following condition on a set of volition-guarded transactions makes the protocol over it volitionally grassroots: (1) any two agents can always come to interact, and (2) no act joins two instances unless all its participants will it.  Section~\ref{sec:conditions} states conditions on the acts of a contract sufficient for its compiled volition-guarded transactions to satisfy it, and both contracts of Section~\ref{sec:contracts} satisfy it.

\iffull
The proofs use two graphs: the interaction graph of Section~\ref{sec:transactions}, undirected, whose connected components are the instances; and the directed graph of who holds a record of whom, defined next.

\fi
Throughout this section, $S$ is a local-states function, $R$ a set of volition-guarded transactions over $S$ with equivalence $\sim$, and $\calF$ the protocol over $R$, $S$ and $\sim$.  \full{The proofs use four results of~\cite{lewis2026volitional} recalled in Appendix~\ref{app:framework}: obliviousness rests on Lemmas~\ref{lem:interleaving-safety} and~\ref{lem:volitional-containment} through Proposition~\ref{prop:volitional-oblivious}, and interactivity on Lemma~\ref{lem:extension}; openness yields interactivity and closure yields consent.}{} 

\begin{definition}[Records]\label{def:records}
A machine state $s$ \emph{records} an agent $q\in\Pi$ if $s\notin S(P)$ for every $P\subset\Pi$ with $q\notin P$.
\end{definition}

\begin{definition}[Recording graph]\label{def:recording-graph}
The \emph{recording graph} of a machine configuration $d$ is the directed graph on $\Pi$ with an edge $p\rightarrow q$ whenever $d_p$ records $q$.  The recording graph of an agent configuration is that of its machine states, and the recording graph of a run is the union of the recording graphs of its configurations.
\end{definition}

\begin{definition}[Mutual recording]\label{def:mutual}
A volition-guarded transaction $(d\rightarrow d',\{p,q\})$ over participants $\{p,q\}$, $p\ne q\in\Pi$, is a \emph{mutual recording of $p$ and $q$} if it changes the state of each and the recording graph of $d'$ has the edges $p\rightarrow q$ and $q\rightarrow p$. 
\end{definition}

\iffull
A state records $q$ when it could not have arisen among agents that leave $q$ out: every set of agents whose states include it contains $q$.  The recording graph of the initial configuration is empty, since $s_0\in S(P)$ for every $P$.  In the social graph of Section~\ref{sec:sg-schemas} the edges out of a party are its friends\full{; in the currency of Section~\ref{sec:cb-schemas} they are the issuers of the coins it holds}{}.  A mutual recording is an act of two agents, willed by both, after which there is an edge each way between them: befriending is one\full{, and a swap of coins the two have issued is another}{}.  An edge may point at an agent its source has never transacted with, as when a message signed by one friend is forwarded by another or a coin changes hands, and nothing below forbids it.  The recording graph of a configuration is not monotone along a run, as an act may delete the atoms that put an edge there; the recording graph of a run and the interaction graph are monotone by construction.

Let $Q\subset\Pi$.

\begin{lemma}[A Mutual Recording is an Interaction]\label{lem:recording-interaction}
Every transition of $\calF(Q)$ induced by a mutual recording of $p$ and $q$ in $R(Q)$ is an interaction between $p$ and $q$.
\end{lemma}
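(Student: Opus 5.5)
Let $e\to e'$ be a transition of $\calF(Q)$ induced by a mutual recording $(d\rightarrow d',\{p,q\})\in R(Q)$. By Definition~\ref{def:first-interaction} we must show two things: the transition changes the states of both $p$ and $q$, and at least one of the restrictions $e|_{\{p\}}\to e'|_{\{p\}}$, $e|_{\{q\}}\to e'|_{\{q\}}$ is not a transition of the singleton system.

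The first is immediate. A volitional machine transaction takes each participant's machine state from its value in $d$ to its value in $d'$. Definition~\ref{def:mutual} requires the transaction to change the state of each of $p$ and $q$, so $e|_{\{p\}}\ne e'|_{\{p\}}$ and $e|_{\{q\}}\ne e'|_{\{q\}}$.

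For the second, I argue by contradiction and show that $e|_{\{p\}}\to e'|_{\{p\}}$ is not a transition of $\calF(\{p\})$. Suppose it were. Every configuration reachable in $\calF(\{p\})$, and every state a transition of it produces, has its machine state in $S(\{p\})$. In particular $d'_p\in S(\{p\})$.

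Now $d'_p$ records $q$, since the recording graph of $d'$ has the edge $p\to q$. By Definition~\ref{def:records}, $d'_p\notin S(P)$ for every $P$ with $q\notin P$. Since $p\ne q$, the set $\{p\}$ does not contain $q$, so $d'_p\notin S(\{p\})$, a contradiction. Hence the restriction to $p$ is not a transition of $\calF(\{p\})$, and the transition is an interaction between $p$ and $q$.

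The main obstacle is justifying that a transition of $\calF(\{p\})$ must land in $S(\{p\})$. I would take this from the framework's definition of the protocol over $R(\{p\})$ (Appendix~\ref{app:framework}). Its transitions are volition changes, which leave machine states unchanged, or $P$-closures of transactions over $S(P')$ with $P'\subseteq\{p\}$. The target local state of such a transaction lies in $S(P')\subseteq S(\{p\})$, by monotonicity of the local-states function. A volition change is excluded anyway, because the machine state of $p$ changes.
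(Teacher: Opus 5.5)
Your proof is correct and is the paper's argument: both states change by Definition~\ref{def:mutual}, and since $d'_p$ records $q$ it lies outside $S(\{p\})$, so the restriction to $p$ is no transition of $\calF(\{p\})$. The step you flag as the main obstacle is immediate, as the paper notes: by Definition~\ref{def:protocol-transactions} every configuration of $\calF(\{p\})$ is a member of $\calA(\{p\})^{\{p\}}$ and so has its machine state in $S(\{p\})$, and no argument about which transactions generate its transitions is needed.
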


By Definition~\ref{def:first-interaction} it is then an interaction between any two disjoint sets of agents one of which holds $p$ and the other $q$.
\fi

\iffull
\begin{proof}
The machine states of $p$ and of $q$ both change, by Definition~\ref{def:mutual}.  Since $d'_p$ records $q$, $d'_p\notin S(P)$ for every $P$ with $q\notin P$, and in particular $d'_p\notin S(\{p\})$; the machine states of every configuration of $\calF(\{p\})$ lie in $S(\{p\})$ by Definition~\ref{def:protocol-transactions}, so the restriction of the transition to $p$ is no transition of $\calF(\{p\})$.  By Definition~\ref{def:first-interaction} the transition is an interaction between $p$ and $q$, hence between any two disjoint sets of agents one of which holds $p$ and the other $q$.
\end{proof}
\fi

\begin{definition}[Open, Closed, Polite]\label{def:polite}
The set $R$ is:
\begin{enumerate}
    \item \emph{open} if for any two agents $p\ne q$ in any $Q\subset\Pi$, every finite safe run of $\calF(Q)$ with no interaction between $p$ and $q$ has a finite safe extension, by volition changes and unary transitions, at whose end some transaction of $R(Q)$ whose induced transitions are interactions between $p$ and $q$ is machine-enabled;
    \item \emph{closed} if for every $Q\subset\Pi$, every transaction of $R$ enabled at the end of a finite safe run of $\calF(Q)$ is guarded by all its participants or has them all in one instance in that run;
    \item \emph{polite} if it is open and closed.
\end{enumerate}
\end{definition}

\iffull
Openness says that any two agents can always be brought, by zero or more acts each takes independently, to a configuration at which an act coupling the two can be taken if all its guards are willing.  It does not require that no one else take part in the act: a mutual recording is one, and so is any act that leaves one of them holding a record of the other, whoever else takes part.  Of two agents that have already taken part in one act nothing is required: they have interacted, and being interactive requires no more.  Closure says that an act not willed by all its participants is enabled only where they already lie in one instance.

\fi

\begin{theorem}[Polite]\label{thm:polite}
The protocol over a polite set of volition-guarded transactions is volitionally grassroots.
\end{theorem}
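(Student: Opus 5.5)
We prove the three components of ``volitionally grassroots'' separately: obliviousness, interactivity, and that every first interaction is mutually willed. Obliviousness needs neither openness nor closure. For any set $R$ of volition-guarded transactions over a local-states function, the protocol is oblivious by Proposition~\ref{prop:volitional-oblivious} of~\cite{lewis2026volitional}, recalled in Appendix~\ref{app:framework}. That proposition rests on Lemma~\ref{lem:interleaving-safety}, by which an interleaving of safe runs of disjoint $P$ and $P'$ is a safe run of $P\cup P'$, and on Lemma~\ref{lem:volitional-containment}, which carries liveness across the interleaving. So this step amounts to citing the framework.

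\emph{Interactivity from openness.} Take disjoint nonempty $P,P'$ and $Q\supseteq P\cup P'$, and a finite safe run $r$ of $\calF(Q)$ with no interaction between $P$ and $P'$. We must exhibit a correct extension of $r$ that is interactive between them.
\begin{enumerate}
\item Pick $p\in P$ and $q\in P'$. Since $r$ has no interaction between $p$ and $q$, openness supplies a finite safe extension $r'$, by volition changes and unary transitions, at whose end some $t\in R(Q)$ is machine-enabled, every transition induced by $t$ being an interaction between $p$ and $q$.
\item Append volition changes by which each guard of $t$ comes to hold the $\sim$-class of $t$. The result is safe, and $t$ is now enabled.
\item Take $t$, obtaining a finite safe run containing an interaction between $P$ and $P'$.
\item Extend to a correct run using Lemma~\ref{lem:extension}.
\end{enumerate}
A detail to check in step 2 is that these volition changes leave $t$ machine-enabled. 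They do, since volition changes touch only volitional states.

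\emph{Mutual consent from closure.} Let $e\to e'$ be the first interaction of $P$ and $P'$ in a correct run, and let $r$ be the prefix before it, in which $P$ and $P'$ are instances. The transition is induced by some $t\in R(Q)$ enabled at the end of $r$, and it is an interaction between some $x\in P$ and $y\in P'$. Both $x$ and $y$ change state, so by the definition of induced transitions both are participants of $t$. By closure, either all participants of $t$ guard it, so $x$ and $y$ are guards and a member of each group wills it, or all participants lie in one instance of $r$. The second case would put $x$ and $y$ in one component, contradicting that $P$ and $P'$ are distinct instances of $r$ and hence distinct components.

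\emph{Main obstacle.} I expect the hardest step to be relating ``participant'' to ``agent whose state changes'' for the agents of an interaction. An interaction changes the state of $x$, and the $P$-closure of $t$ leaves non-participants stationary, so $x$ is a participant; I would cite this fact from the definition of volitional machine transactions in the Appendix. A second delicate point is the coalescence convention that $P$ and $P'$ are instances in a prefix. If they are not instances of exactly $r$, I would argue instead that $x$ and $y$ lie in different components of $r$, since no earlier transition interacts across the two groups and they were instances earlier.
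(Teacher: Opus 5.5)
Your interactivity step is the paper's, but your obliviousness step has a genuine gap. Proposition~\ref{prop:volitional-oblivious} is conditional. It yields obliviousness \emph{provided} no class whose transactions have participants in both $P$ and $P'$ is ever enabled in an interleaving of correct runs of $\calF(P)$ and $\calF(P')$, and you never verify this. The hypothesis fails for arbitrary $R$, so obliviousness does need closure. Take the befriend schema of Section~\ref{sec:sg-schemas} with both guard marks dropped. Then $R(\{p\})$ is empty, so the only correct run of $\calF(\{p\})$, and likewise of $\calF(\{q\})$, is its initial configuration. Their interleaving is the initial configuration of $\calF(\{p,q\})$, where the unguarded befriending of $p$ and $q$ is enabled and never taken, so the interleaving is not live. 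Lemma~\ref{lem:volitional-containment} also does not carry liveness across an interleaving; it is the invariant that every volitional state of $\calF(P)$ lies in $T_{R(P)}/{\sim}$.

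The missing argument, which is the paper's, combines closure with that invariant. Every step of an interleaving is a $P$-step or a $P'$-step. So no prefix of it contains an interaction across the groups, and no instance in such a prefix meets both; each prefix is a finite safe run of $\calF(P\cup P')$ by Lemma~\ref{lem:interleaving-safety}. Suppose some $(t'',Q'')$ whose participants $Q$ meet both groups were enabled at $e_k$. Closure would then give $Q''=Q$, so some $a\in Q\cap P$ guards it. But $(e_k)^v_a$ is $a$'s volitional state in the correct run of $\calF(P)$, so it lies in $T_{R(P)}/{\sim}$. That set omits $[t'']$ because $Q\not\subseteq P$, so the guard at $a$ fails.

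There are two smaller points in the consent step. First, you cannot assume $P$ and $P'$ are instances of the prefix $\rho$ before the first interaction, since a group need not be connected. You need only that no instance in $\rho$ meets both groups, and your fallback gives the right reason: every agent lies in $P\cup P'$, so a component meeting both would contain an edge across, which is an interaction between $P$ and $P'$ earlier than the first. Second, showing that $x$ and $y$ are participants needs the volitional half as well as the machine half. Interaction is defined on agent states, and a volitional machine transaction removes $[t]$ from every agent's volitional state. So you need that a non-participant cannot hold $[t]$, which is the constraint $c^v_p\subseteq(T/{\sim})_p$ of Definition~\ref{def:agent-state}. You should also note that the first interaction is not a volition change, since a volition change alters one agent only.
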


\iffull
\begin{proof}
Let $R$ be polite.  One fact is used throughout: an agent whose state a volitional machine transaction changes is a participant of the transaction inducing it, as only a participant changes machine state and only a participant holds its class in its volitional state (Definitions~\ref{def:agent-state} and~\ref{def:vmat}).

\mypara{Oblivious} Let $P, P'\subset\Pi$ be disjoint and nonempty, $r$ a correct run of $\calF(P)$, $r'$ a correct run of $\calF(P')$, and $e = e_0,e_1,\ldots$ an interleaving of $r$ and $r'$, with index sequences $(i_k)$ and $(j_k)$.  By Lemma~\ref{lem:interleaving-safety}, every finite prefix of $e$ is a finite safe run of $\calF(P\cup P')$.  Every transition of $e$ is a $P$-step or a $P'$-step and so alters the state of agents of one group only, so no transition of $e$ is an interaction between an agent of $P$ and an agent of $P'$, and no instance in a prefix of $e$ holds agents of both.

We verify the hypothesis of Proposition~\ref{prop:volitional-oblivious}.  Let $[t]$ be a class of $T_{R(P\cup P')}/{\sim}$ whose transactions have participants $Q$ meeting both $P$ and $P'$---the same $Q$ for every member of the class, by Definition~\ref{def:equivalence}---let $(t'',Q'')\in R$ with $t''\in[t]$, and take $a\in Q\cap P$.  Suppose $(t'',Q'')$ is enabled at some $e_k$.  Its participants $Q$ meet both groups and no instance in the prefix of $e$ ending at $e_k$ holds agents of both, so they do not all lie in one instance in that prefix; by closure $Q''=Q$, and hence $a\in Q''$.  Since $Q$ meets $P'$, $t''\notin R(P)$, so $[t]\notin T_{R(P)}/{\sim}$; and by Lemma~\ref{lem:volitional-containment} applied to $r$, $e_k{^v_a} = (c_{i_k})^v_a\subseteq T_{R(P)}/{\sim}$.  Hence $[t]\notin e_k{^v_a}$ and the guard on $a$ fails, so $(t'',Q'')$ is not enabled at $e_k$---a contradiction.

So no class whose transactions have participants in both groups is ever enabled in $e$, and by Proposition~\ref{prop:volitional-oblivious}, $\calF$ is oblivious.

\mypara{Interactive} Let $Q\subset\Pi$, let $P, P'\subseteq Q$ be disjoint and nonempty, let $\hat r$ be a prefix of a correct run of $\calF(Q)$ containing no interaction between $P$ and $P'$, and take $p\in P$ and $q\in P'$.  An interaction between $p$ and $q$ is one between $P$ and $P'$ by Definition~\ref{def:first-interaction}, so $\hat r$ contains none, and by openness some finite safe extension of $\hat r$ by volition changes and unary transitions ends where some $(t,Q')\in R(Q)$ whose induced transitions are interactions between $p$ and $q$ is machine-enabled.  A volition change and a unary transition each alter one agent only, and are therefore no interaction between $P$ and $P'$.

Extend further by a volition change of each agent of $Q'$, where needed, so that $[t]$ lies in the volitional state of each; these leave every machine state unchanged, so $(t,Q')$ is now enabled.  Extend by the volitional machine transaction it induces, an interaction between $p$ and $q$ and hence between $P$ and $P'$.  The result is a finite safe run of $\calF(Q)$ having $\hat r$ as a prefix and containing an interaction between $P$ and $P'$, and by Lemma~\ref{lem:extension} it is a prefix of a correct run of $\calF(Q)$.  Hence $\calF$ is interactive, and with obliviousness, grassroots.

\mypara{Volitionally grassroots} Let $P, P'\subset\Pi$ be disjoint and nonempty, let $\hat r$ be a safe run of $\calF(P\cup P')$ interactive between $P$ and $P'$, and let $e\rightarrow e'$ be its first interaction.  It changes the state of an agent of $P$ and of an agent of $P'$, and a volition change alters one agent only, so it is a volitional machine transaction induced by some $(t,Q')\in R(P\cup P')$ whose participants $Q$ meet both groups; take $a\in Q\cap P$ and $b\in Q\cap P'$.

The participants of $t$ do not all lie in one instance in the prefix $\rho$ of $\hat r$ ending at $e$: were they, $a$ and $b$ would lie in one connected component of the interaction graph of $\rho$; every agent of $\rho$ is in $P\cup P'$, so a path from $a$ to $b$ in that component has an edge between some $x\in P$ and some $y\in P'$, and the transition of $\rho$ that puts that edge there is an interaction between $x$ and $y$, hence between $P$ and $P'$, and it precedes $e\rightarrow e'$---contradicting that $e\rightarrow e'$ is the first.  As $(t,Q')$ is enabled at $e$, by closure $Q'=Q$, which meets both $P$ and $P'$.  Hence $\calF$ is volitionally grassroots.
\end{proof}
\fi

\iffull
Politeness is a condition on the transactions of a protocol.  Being grassroots is a property of its runs.  The theorem states that the first is sufficient for the second.  Section~\ref{sec:conditions} takes the step before it, from the acts of a contract to the transactions realising them.

\fi

%% file: sections/07-conditions.tex
\section{Conditions Sufficient for Openness and Closure}\label{sec:conditions}

We are now ready to prove Theorem~\ref{thm:grassroots}, restated.  Openness and closure are conditions on runs; the three conditions are on the schemas, and the proof has two halves: openness follows from the unobstructed introductory act, and closure from volition.  Throughout this section $C$ is a syntactically grassroots contract and $\calF$ the protocol realising it.

\grassrootsthm*

\begin{lemma}[Records]\label{lem:records}
On local states, Definitions~\ref{def:records} and~\ref{def:records-atoms} agree.
\end{lemma}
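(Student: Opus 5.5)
The plan is to show that the two notions of recording pick out the same people, proving each inclusion separately. Fix a local state $s$, a finite multiset of ground atoms over $\Pi\cup D$. Write $A(s)\subseteq\Pi$ for the set of people occurring as arguments of its atoms; this is the set Definition~\ref{def:records-atoms} says $s$ records. By Definition~\ref{def:states}, $s\in S_C(P)$ holds exactly when every atom of $s$ lies in $\Sigma[P\cup D]$. Since $D$ is disjoint from $\Pi$, this is the same as $A(s)\subseteq P$. The whole lemma rests on this one characterisation of membership, read against Definition~\ref{def:records}: $s$ records $q$ in that sense iff $s\notin S_C(P)$ for every finite $P\subset\Pi$ with $q\notin P$.

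\emph{Atoms to states.} Suppose $q\in A(s)$. Then any $P$ with $q\notin P$ fails $A(s)\subseteq P$, so $s\notin S_C(P)$. Hence $s$ records $q$ in the sense of Definition~\ref{def:records}.

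\emph{States to atoms.} Suppose $q\notin A(s)$. I would exhibit a single $P$ with $q\notin P$ and $s\in S_C(P)$; this refutes recording in the sense of Definition~\ref{def:records}. The choice is $P=A(s)$. It is finite, because $s$ has finitely many atoms and each atom has finitely many arguments. It omits $q$ by assumption, and $s\in S_C(P)$ by the characterisation.

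The one step needing care is that $P$ must range over \emph{nonempty} finite subsets (Section~\ref{sec:transactions}), while $A(s)$ may be empty, for instance when $s=s_0$ or $s$ holds only atoms over $D$. In that case I would instead take $P=\{p\}$ for some $p\neq q$. Such a $p$ exists because $\Pi$ has at least two people. Since $A(s)=\emptyset\subseteq\{p\}$, we again get $s\in S_C(P)$.
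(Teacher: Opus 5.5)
Your proof is correct and is essentially the paper's argument: both directions rest on the fact that $s\in S_C(P)$ exactly when every person named by an atom of $s$ lies in $P$. For the converse direction, both exhibit a single witness $P$ that omits $q$, using that $\Pi$ has at least two people to keep $P$ nonempty; the only cosmetic difference is that the paper always adds some person other than $q$ to the named people, while you add one only when $A(s)$ is empty.
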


\iffull
\begin{proof}
Suppose $q$ is an argument of an atom of $s$, as Definition~\ref{def:records-atoms} requires, and let $P \subset \Pi$ with $q \notin P$.  That atom has an argument outside $P$, so $s \notin S_C(P)$ by Definition~\ref{def:states}.  As $P$ was arbitrary, $s$ records $q$.  Suppose $q$ is an argument of no atom of $s$, and let $P$ be the people among the arguments of the atoms of $s$ together with some person other than $q$, which exists as $\Pi$ has at least two people.  Then $P$ is nonempty, $q \notin P$ and $s \in S_C(P)$, so $s$ does not record $q$.
\end{proof}
\fi

\iffull
The recording graph of a configuration of the protocol realising $C$ therefore has an edge $p \rightarrow q$ exactly when some atom of the state of $p$ has $q$ among its arguments.  Its edges are classified by the predicate of the atom.

\begin{definition}[$E$-recording graph]\label{def:erecording}
Let $E$ be a set of predicates.  The \emph{$E$-recording graph} of a configuration is the spanning subgraph of its recording graph whose edges $p \rightarrow q$ are those for which the state of $p$ contains an atom with predicate in $E$ having $q$ among its arguments; the $E$-recording graph of a run is the union of those of its configurations.
\end{definition}

\fi
\begin{proposition}[Openness]\label{prop:open}
$R_C$ is open.
\end{proposition}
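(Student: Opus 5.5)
The plan is to take a finite safe run $r$ of $\calF(Q)$ with no interaction between $p\ne q\in Q$, and extend it using only volition changes and unary transitions until a transaction determined by the unobstructed introductory act $\tau$ at an introduction $\beta$ for $p,q$ is machine-enabled. Definition~\ref{def:syntactically-grassroots} supplies such a $\tau$. Definition~\ref{def:unobstructed} lets us fix $\beta$ sending $\pi_1\mapsto p$ and $\pi_2\mapsto q$, and we take $\beta$ into $\{p,q\}$ so that the determined transactions lie in $R_C(Q)$. Let $c$ be the last configuration of $r$. There are two things to arrange at each role $\pi_i$: the required atoms $\beta(\mathord{=}(i)\uplus\mathord{-}(i))$ must be present in $c_{p_i}$, and the forbidden atoms $\beta(\mathord{\neg}(i))$ must be absent.

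\textbf{Required atoms.} For each required atom, in the multiplicity required, clause~1 of Definition~\ref{def:unobstructed} gives a unary schema, guarded in its role, requiring and forbidding nothing, which adds that atom at $p_i$ under some binding. Its determined transaction is therefore machine-enabled in any configuration. We extend by a volition change of $p_i$ willing its class, and then by the induced unary volitional machine transaction. Repeating this finitely often, multiplicities included, puts all required atoms in place. These steps are unary, so safety is preserved, and they add atoms without deleting any, so each required atom obtained stays present.

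\textbf{Forbidden atoms.} We then show that no forbidden atom $\beta(\varphi)$ with $\varphi\in\mathord{\neg}(i)$ is present. Since $s_0=\emptyset$, any such atom in $c_{p_i}$ was put there by some transition of $r$ or of the extension. The transitions we appended add only required atoms; if one of those were also forbidden, the introduction would be unusable, and I expect this case to be excluded using $\mathord{+}(i)\cap\mathord{-}(i)=\emptyset$ together with the fact that a unary schema cannot add an atom naming the other party. For a transition of $r$, Proposition~\ref{prop:consent} says it is determined by a schema at a binding that adds $\beta(\varphi)$ at $p_i$. Clause~2 then says that this binding has the other of $p,q$ in a role and adds at $p_i$ an atom naming them. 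The transition thus changes the states of both $p$ and $q$ and leaves $p_i$ recording the other. By Lemma~\ref{lem:records} that state lies outside $S_C(\{p_i\})$, so, as in the proof of Lemma~\ref{lem:recording-interaction}, its restriction to $p_i$ is no transition of $\calF(\{p_i\})$. That makes it an interaction between $p$ and $q$, contradicting the hypothesis on $r$.

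\textbf{Conclusion and main obstacle.} At the end of the extension the determined transaction is machine-enabled. It remains to show that its induced transitions are interactions between $p$ and $q$. Each role changes state, since $\mathord{+}(i)\uplus\mathord{-}(i)\neq\emptyset$, and the added atom at $\pi_1$ names $q$, so the same Lemma~\ref{lem:records} argument applies. The main obstacle is the forbidden-atom step. The delicate points are that "adds at $\beta(\pi_i)$" in clause~2 must be matched against multiset deletion and re-addition by the same transition, and that the disjointness condition $\beta(\mathord{+}(i))\cap\beta(\mathord{-}(i))=\emptyset$ in Definition~\ref{def:compile} must hold at $\beta$. My first attempt at the latter is to choose $\beta$ injective on party variables as well as roles wherever possible, and otherwise to read the requirement into the definition of an introduction.
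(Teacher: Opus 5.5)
Your route is the paper's: unary transitions supplied by clause~1 of Definition~\ref{def:unobstructed} put the required atoms in place. Clause~2, together with the absence of an interaction in $r$, rules out forbidden atoms. Lemma~\ref{lem:records} then makes the final transaction an interaction.

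\textbf{The gap: forbidden atoms added by the extension.} Two of your claims about the extension are false.
\begin{enumerate}
\item It is not true that the appended transitions add only required atoms. A schema supplied by clause~1 may add further atoms alongside the one you need.
\item A unary schema can name the other party, since one of its party variables may be bound to that party. Also, $\mathord{+}(i)\cap\mathord{-}(i)=\emptyset$ concerns $\tau$'s add and delete lists, not its required versus forbidden atoms.
\end{enumerate}
The missing step is clause~2 itself, applied to the extension exactly as you apply it to $r$. A schema that at some binding adds at $\beta(\pi_i)$ an atom $\tau$ forbids at $\pi_i$ must have the other of $p$ and $q$ in a role at that binding. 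A schema of arity one whose role is sent to $\beta(\pi_i)$ has no other role. Hence no transition of the extension adds a forbidden atom, whether required or incidental, and in particular no required atom is forbidden. You should also bind that schema's remaining party variables into $Q$, so that its transactions lie in $R_C(Q)$.

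\textbf{Your two flagged worries.} The multiset worry dissolves. Definition~\ref{def:compile} requires $\beta'(\mathord{+}(j))\cap\beta'(\mathord{-}(j))=\emptyset$, and with $s_0=\emptyset$ a present atom lies in $\beta'(\mathord{+}(j))$ for some earlier transition, which is exactly what clause~2 quantifies over.

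The disjointness at $\beta$ itself is a real point, and the paper's proof passes over it when it asserts that $\tau$ at $\beta$ is machine-enabled. Injectivity cannot supply it: $\beta$ maps into $\{p,q\}$, so every party variable lands on $p$ or $q$, and clauses~1 and~2 may leave only one admissible binding. For a counterexample, take three schemas:
\begin{enumerate}
\item $\tau(\Alice?,\Bob?;x)$, with $-e(x),\ \neg h(x),\ +e(\Bob),\ +f(\Bob)$ at Alice and $+f(\Alice)$ at Bob;
\item $\mathit{get}(\pi?;y) : +e(y)$;
\item $\mathit{block}(\pi?) : +h(\pi)$.
\end{enumerate}
This contract is syntactically grassroots via the introduction $x\mapsto q$; clause~2 fails at $x\mapsto p$ because of $\mathit{block}$. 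Yet at $x\mapsto q$ one has $\beta(\mathord{+}(1))\cap\beta(\mathord{-}(1))=\{e(q)\}$. Over $Q=\{p,q\}$, once both have taken $\mathit{block}$, no transaction coupling them can ever become machine-enabled, in either role assignment.

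So your second option is the necessary one: require of an introduction that $\beta(\mathord{+}(i))\cap\beta(\mathord{-}(i))=\emptyset$ at each role. The paper tacitly does this when it certifies the swap at the binding sending $u$ to Alice and $v$ to Bob.
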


\iffull
\begin{proof}
Let $\tau$ be an unobstructed introductory act of $C$, let $p \neq q$ in $Q$, let $r$ be a finite safe run of the protocol over $Q$ containing no interaction between $p$ and $q$, and let $\beta$ be an introduction of $p$ and $q$ meeting Definition~\ref{def:unobstructed}.  Extend $r$, for each role $\pi_i$ of $\tau$ and each $\varphi \in \mathord{=}(i) \uplus \mathord{-}(i)$, by a volition change of $\beta(\pi_i)$ and a unary transition adding $\beta(\varphi)$ there, induced by a schema of arity one at a binding sending its role to $\beta(\pi_i)$ and its party variables into $Q$, which that agent takes by itself and which is machine-enabled, the schema requiring and forbidding nothing.  Such a schema deletes nothing, so the extension is finite and safe and every atom $\tau$ requires at $\beta$ is present at its last configuration.  A schema of arity one has neither of $p$ and $q$ in a role other than the one sent to $\beta(\pi_i)$, so by clause~2 of Definition~\ref{def:unobstructed} no atom it adds at $\beta(\pi_i)$ is one $\tau$ forbids at $\pi_i$: the extension adds no forbidden atom, and no atom $\tau$ requires at $\beta$ is one it forbids.

Let $\varphi \in \mathord{\neg}(i)$ and suppose the state of $\beta(\pi_i)$ there contains $\beta(\varphi)$.  The initial state is empty and the extension added no forbidden atom, so some transition of $r$ added it; that transition is induced by a transaction of $R_C$ determined by a schema $\tau'$ at a binding $\beta'$ with $\beta(\varphi) \in \beta'(\mathord{+}(j))$ for a role sent to $\beta(\pi_i)$.  By Definition~\ref{def:unobstructed} $\beta'$ sends a role of $\tau'$ to the other of $p$ and $q$, and $\beta'(\mathord{+}(j))$ names that party.  Both are then participants, so by Definition~\ref{def:instance} the state of each changes, and the state of $\beta(\pi_i)$ afterwards records the other, so it lies in no $S(P)$ omitting them and the transition restricted to $\beta(\pi_i)$ is no transition of that agent's own transition system: the transition is an interaction between $p$ and $q$, which $r$ does not contain.  So no forbidden atom is present, $\tau$ at $\beta$ is machine-enabled at the last configuration of the extension, and every transaction it determines there leaves $p$ holding an atom naming $q$ and $q$ one naming $p$, changing both states, and is guarded by both; by Lemma~\ref{lem:records} each records the other, so it is a mutual recording of $p$ and $q$ in $R_C(Q)$, $\beta$ mapping into $\{p,q\} \subseteq Q$, and by Lemma~\ref{lem:recording-interaction} the transitions it induces are interactions between $p$ and $q$.
\end{proof}
\fi

\iffull
Closure requires that the participants of an unguarded act already lie in one instance.  Where an atom can be added only by an act of the two people it relates, that follows directly.  An edge may also come from a transfer: a coin passes from hand to hand, and the person it points at is reached through the parties that carried it rather than by a single act.  The lemma below covers both: no act ever leaves an edge whose ends lie in different instances.

Let $E$ have traceable provenance in $C$, $Q \subset \Pi$, and $r$ a finite safe run of $\calF(Q)$.

\begin{lemma}[No edge crosses instances]\label{lem:grounded}
Every edge of the $E$-recording graph of $r$ joins two agents that lie in one instance in $r$.
\end{lemma}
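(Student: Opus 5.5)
We argue by induction on the length of $r$, proving the stronger invariant that for every prefix $\rho$ of $r$, every edge $p \rightarrow q$ of the $E$-recording graph of the \emph{last configuration} of $\rho$ joins two agents lying in one instance in $\rho$. Since the interaction graph is monotone along a run, an edge present at some configuration stays witnessed by an instance of every longer prefix. Hence the invariant for all prefixes gives the lemma for the recording graph of $r$, which is the union over its configurations. The base case is immediate: the initial configuration has every state $s_0 = \emptyset$, so its $E$-recording graph is empty.

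For the inductive step, let $\rho$ extend $\rho_0$ by one transition $e \rightarrow e'$. A volition change alters no machine state, so nothing changes. Otherwise the transition is induced by a transaction of $R_C$ determined by a schema $\tau$ with roles $\pi_1,\ldots,\pi_k$ at a binding $\beta$, with participants $p_i = \beta(\pi_i)$. Consider an edge $p \rightarrow q$ of the $E$-recording graph of $e'$. If $p$ is not a participant, its state is unchanged, and the edge was already present at $e$; the hypothesis and monotonicity then finish this case. If $p = p_i$ and the witnessing atom of $e'_{p_i}$ already lay in $e_{p_i}$, the same argument applies. Otherwise the atom is $\beta(\varphi)$ for some $\varphi \in \mathord{+}(i) \cap \Phi_E$, and $q = \beta(u)$ for an argument $u \in R \cup X$ of $\varphi$. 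If $u = \pi_i$ then $q = p$, which is no edge worth considering; we take edges between distinct agents, or note that $p$ trivially shares its own instance. Traceable provenance gives two cases. In the first, $u$ is a role $\pi_j$, so $q = p_j$ is a participant. In the second, $u$ is an argument of some $\psi \in (\mathord{=}(j) \uplus \mathord{-}(j)) \cap \Phi_E$. Then $\beta(\psi) \in e_{p_j}$ because the transaction was machine-enabled (Definition~\ref{def:compile}). So $p_j \rightarrow q$ is an edge of the $E$-recording graph of $e$ (or $q = p_j$), and by hypothesis $p_j$ and $q$ lie in one instance in $\rho_0$, hence in $\rho$.

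It remains, in both cases, to put $p = p_i$ and $p_j$ in one instance in $\rho$, and this is the main obstacle. The transition changes the state of every participant (Definition~\ref{def:compile}), but an interaction also requires that the restriction to one of the two agents be no transition of its own system $\calF(\{x\})$. My plan is to use the new atom itself. After the step, $e'_{p_i}$ records $q$ by Lemma~\ref{lem:records}, so when $q \neq p_i$ we have $e'_{p_i} \notin S_C(\{p_i\})$, and the restriction to $p_i$ is no transition of $\calF(\{p_i\})$. In the first case, with $q = p_j$, this already makes the transition an interaction between $p_i$ and $p_j$, exactly as in the proof of Lemma~\ref{lem:recording-interaction}. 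In the second case the transition is likewise an interaction between $p_i$ and $p_j$ whenever $p_i \neq p_j$, so $p_i$, $p_j$ and $q$ share an instance. If instead $p_i = p_j$, the edge $p_i \rightarrow q$ was already inherited from $e$. The delicate point to check is that a configuration of $\calF(\{x\})$ has machine states in $S_C(\{x\})$, as used in Lemma~\ref{lem:recording-interaction}, so that a state recording another agent cannot arise in a one-agent system. With that in place, transitivity of lying in one instance closes the induction.
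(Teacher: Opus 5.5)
Your proof is correct and follows the paper's own argument essentially step for step: the same reduction to edges of the last configuration via monotonicity of the interaction graph, the same induction on the length of the run with the two cases of traceable provenance, and the same observation that the new atom takes $p_i$'s state out of $S_C(\{p_i\})$. That observation makes the transition an interaction between $p_i$ and every other participant, and your separate handling of volition changes, non-participants and $p_i = p_j$ only spells out what the paper folds into its case of an atom already present at $e$.
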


\fi
\iffull
\begin{proof}
The $E$-recording graph of $r$ is the union of those of its configurations, and every configuration of $r$ ends a prefix of $r$; the interaction graph of a prefix of $r$ is a subgraph of that of $r$, so two agents in one instance in a prefix are in one instance in $r$.  It suffices to treat an edge $p \rightarrow q$ of the $E$-recording graph of the last configuration.  The claim is trivial for $q = p$, and we argue by induction on the length of $r$.  The initial state is empty, so a run of one configuration holds it vacuously.  Let $r$ end with the transition $e \rightarrow c$, and let the state of $p$ at $c$ contain an atom with predicate in $E$ and $q \neq p$ among its arguments.  If the state of $p$ at $e$ contains such an atom, the induction hypothesis applied to the prefix of $r$ ending at $e$ gives the claim.  Otherwise $e \rightarrow c$ added it, so it is a volitional machine transaction induced by a transaction determined by a schema $\tau$ at a binding $\beta$, and the atom is $\beta(\varphi)$ for an atom $\varphi$ added at a role $i$, with predicate in $E$, where $p = \beta(i)$, and $q = \beta(s)$ for an argument $s$ of $\varphi$; as $\beta(s) \neq \beta(i)$, $s \neq i$.

By Definition~\ref{def:instance} the state of every participant changes, and by Lemma~\ref{lem:records} the state of $p$ at $c$ records $q$, so it lies in no $S_C(P)$ with $q \notin P$ and the restriction of the transition to $p$ is no transition of $\calF(\{p\})$.  By Definition~\ref{def:first-interaction} the transition is therefore an interaction between $p$ and every other participant.

By Definition~\ref{def:grounded} one of two cases holds.  If $s$ is a role of $\tau$ then $q$ is a participant, so $p$ and $q$ are joined by an edge of the interaction graph of $r$.  Otherwise $s$ is an argument of an atom $\psi$ required at some role $j$, with predicate in $E$; put $p' := \beta(j)$.  By Definition~\ref{def:instance} $\beta(\psi)$ is an atom of the state of $p'$ at $e$, its predicate is in $E$ and it has $q$ among its arguments, so by the induction hypothesis $p'$ and $q$ lie in one instance in the prefix ending at $e$, hence in $r$.  If $p' \neq p$ then $p'$ is a participant, so $p$ and $p'$ are joined by an edge of the interaction graph of $r$.  Either way $p$ and $q$ lie in one instance in $r$.
\end{proof}
\fi

\begin{proposition}[Closure]\label{prop:closed}
$R_C$ is closed.
\end{proposition}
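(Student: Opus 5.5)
To prove that $R_C$ is closed, I will fix $Q \subset \Pi$, a finite safe run $r$ of $\calF(Q)$, and a transaction $(t,Q') \in R_C$ enabled at the last configuration $e$ of $r$. I must show that either $Q'$ is all the participants of $t$ or the participants lie in one instance in $r$.

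By Definition~\ref{def:realisation}, $t$ is determined by some schema $\tau$ of $C$ at a binding $\beta$, with roles $\pi_1,\ldots,\pi_k$ and participants $p_i=\beta(\pi_i)$. These are distinct, since $\beta$ is injective on roles. I split on arity and guards:
\begin{itemize}
\item If $k=1$, the single participant trivially lies in one instance.
\item If every role of $\tau$ is guarding, then $Q'=\{p_1,\ldots,p_k\}$ and closure holds by its first alternative.
\item Otherwise, by volition (Definition~\ref{def:volition}, part of Definition~\ref{def:syntactically-grassroots}), the role graph of $\tau$ is connected.
\end{itemize}

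In the connected case, take $E$ to be the set of all predicates of traceable provenance in $C$. This set itself has traceable provenance, because the condition of Definition~\ref{def:grounded} is preserved under unions of sets having it. I then translate each edge $\pi_i$–$\pi_j$ of the role graph into an edge of the $E$-recording graph. Such an edge means, say, that $\pi_i$ requires an atom $\psi \in \mathord{=}(i)\uplus\mathord{-}(i)$ whose predicate is in $E$ and which has $\pi_j$ among its arguments. Since $t$ is machine-enabled at $e$, Definition~\ref{def:compile} gives $\beta(\psi) \in e_{p_i}$. So the state of $p_i$ at $e$ contains an atom with predicate in $E$ having $p_j \ne p_i$ among its arguments. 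By Definition~\ref{def:erecording} this yields the edge $p_i \rightarrow p_j$ of the $E$-recording graph of the last configuration of $r$, hence of $r$. Lemma~\ref{lem:grounded} then places $p_i$ and $p_j$ in one instance in $r$.

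Lying in one instance is an equivalence relation, being membership in a connected component of the interaction graph of $r$. Following a path of the connected role graph from $\pi_1$ to each $\pi_m$ therefore puts every $p_m$ in the instance of $p_1$, which gives the second alternative of closure.

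The main obstacle is the bookkeeping around $E$. I need to confirm that the largest set of predicates with traceable provenance is itself such a set, so that Lemma~\ref{lem:grounded} applies uniformly to every role-graph edge. Alternatively, for each edge I can pick a set witnessing that the relevant predicate has traceable provenance and apply the lemma to that set; this works edge by edge, since the conclusion only needs each edge separately. I also need to check that "enabled" in Definition~\ref{def:polite} implies machine-enabled, which holds because enabledness adds only the guard condition. Everything else is routine unpacking of Definitions~\ref{def:compile} and~\ref{def:erecording}.
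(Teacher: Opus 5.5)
Your proof is correct and follows the paper's argument. It uses the same three-way split (arity one, guarded in all roles, connected role graph), turns each role-graph edge into an edge of the $E$-recording graph via machine-enabledness, applies Lemma~\ref{lem:grounded}, and chains along a path of the connected role graph using that lying in one instance is an equivalence. The paper chooses a witnessing set $E$ edge by edge, as in your alternative; your single maximal $E$ also works, since the condition of Definition~\ref{def:grounded} is preserved under unions.
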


\iffull
\begin{proof}
Let $(t,Q')$ of $R_C$ be enabled at the last configuration of a finite safe run $r$ of the protocol, determined by a schema $\tau$ at a binding $\beta$.  If $\tau$ is guarded in all its roles then $(t,Q')$ is guarded by all its participants.  If $\tau$ has arity one, its single participant is a connected component of the interaction graph of $r$ by itself or lies in one, so all participants of $t$ lie in one instance in $r$.  Otherwise take an edge of the role graph of $\tau$, joining $\pi_i$ to $\pi_j$, let $\pi_i$ be the one requiring an atom of traceable provenance with $\pi_j$ among its arguments, and put $p := \beta(\pi_i)$ and $q := \beta(\pi_j)$.  As $(t,Q')$ is machine-enabled at the last configuration of $r$, the state of $p$ there contains the image under $\beta$ of that atom, so the $E$-recording graph of $r$ has the edge $p \rightarrow q$ for a set $E$ of traceable provenance containing that atom's predicate, and by Lemma~\ref{lem:grounded} $p$ and $q$ lie in one instance in $r$.  Instances are the connected components of the interaction graph of $r$, so lying in one instance is an equivalence; the role graph of $\tau$ is connected, and $\beta$ carries a path in it to a chain of agents consecutive ones of which lie in one instance, so all participants of $t$ lie in one instance in $r$.
\end{proof}
\fi

\begin{proof}[Proof of Theorem~\ref{thm:grassroots}]
Let $C$ be syntactically grassroots.  By Proposition~\ref{prop:open} $R_C$ is open and by Proposition~\ref{prop:closed} it is closed, so it is polite, and by Theorem~\ref{thm:polite} the protocol over it is volitionally grassroots.
\end{proof}

\iffull
\mypara{Openness and closure} Openness lets any person start an instance, and without permission.  A contract that gives one party of each instance a standing of its own is not open: two groups with no such party can never become connected.  Federated architectures fail this condition: each instance has its own server, and two instances remain two~\cite{lewis2026volitional}.

Closure leaves an instance requiring no global resource and no central authority: a registry that issues identifiers, a log that orders acts, or an authority that timestamps them takes part in a transaction without being a party to the contract.  Every act of a grassroots social contract is between its parties, so no third party takes part in one.  A smart contract that depends on a fact outside the blockchain obtains it from an \emph{oracle}, a third party whose report the contract cannot check~\cite{caldarelli2020oracle}; here a fact about the world enters as a party's signed assertion.  A smart contract requires a consensus protocol; a contract realised as volition-guarded transactions requires none where the parties transact between themselves~\cite{guerraoui2019consensus,lewis2023grassroots}, and where a collective decision is called for it constitutes its own consensus among the parties~\cite{keidar2025constitutional}.

\fi\full{}{A smart contract obtains a fact outside the blockchain from an oracle~\cite{caldarelli2020oracle}, a third party; a contract realised as volition-guarded transactions requires no consensus protocol where the parties transact between themselves~\cite{guerraoui2019consensus,lewis2023grassroots}, and a collective decision constitutes its own consensus among the parties~\cite{keidar2025constitutional}.
}
\subsection{The Social Graph Certified}\label{sec:sg-certified}

Befriend is an introductory act: for any two parties, at the binding sending Alice and Bob to them, every transaction it determines adds $\mathit{friend}(\Bob)$ at Alice and $\mathit{friend}(\Alice)$ at Bob, changing both states, and it is guarded in both roles, so each is a mutual recording of the two.  It is unobstructed: it requires nothing, and the only transactions adding $\mathit{friend}(\Bob)$ at Alice are those befriend determines at that binding.  Befriend is also the only schema adding a $\mathit{friend}$ atom and that atom's argument is a role, so $\mathit{friend}$ has traceable provenance.

Volition holds, by Definition~\ref{def:volition}: befriend is guarded in all its roles, sign has arity one, and of the other two, unfriend's role Alice requires $\mathit{friend}(\Bob)$ and forward's role Bob requires $\mathit{friend}(\Alice)$, each an atom of traceable provenance naming the other role.  The contract is therefore syntactically grassroots, and by Theorem~\ref{thm:grassroots} the protocol realising it is volitionally grassroots, so the platform realising the contract is a grassroots platform.

\iffull
\mypara{The instances} The introductory act is befriending and the edge each adds to the other is the $\mathit{friend}$ atom, so the $\mathit{friend}$-recording graph of a configuration is the friendship graph.  The instances of a run are the connected components of its $\mathit{friend}$-recording graph, taken undirected.  Every interaction of this contract is a befriend, an unfriend or a forward, and each of the three leaves or requires a $\mathit{friend}$ atom between its two participants at some configuration of the run.  Conversely, every $\mathit{friend}$ atom was added by a befriend, which is a mutual recording and so an interaction by Lemma~\ref{lem:recording-interaction}.  Forwarding adds an edge of another kind, from the recipient to the author, who has not transacted with them, as a coin does in Section~\ref{sec:cb-certified}.  The author is in the same instance nonetheless.  The set $\{\mathit{friend},\mathit{item},\mathit{sent}\}$ has traceable provenance: sign adds $\mathit{item}(x,\Alice,\Alice)$, which names its own role, and forward adds $\mathit{sent}(x,\Bob)$ at Alice and $\mathit{item}(x,a,\Alice)$ at Bob, whose person arguments are the roles Alice and Bob and the party variable $a$, which the required $\mathit{item}(x,a,f)$ names; so Lemma~\ref{lem:grounded} applies.

\fi
\iffull
\subsection{The Currency Certified}\label{sec:cb-certified}

The swap is an introductory act: at the binding sending $u$ to Alice and $v$ to Bob every transaction it determines leaves Alice holding a coin Bob issued and Bob a coin Alice issued, changing both states, and it is guarded in both roles, so each is a mutual recording of the two, the mutual credit line.  It is unobstructed: it forbids nothing, and at that binding each role requires a coin of its own issue, which mint adds, mint being of arity one, guarded in its role, and requiring and forbidding nothing.

\textcent{} has traceable provenance.  Mint adds \textcent$(\Alice)$ at Alice and pay adds \textcent$(\Bob)$ at Bob, each naming the role that holds it; the swap adds in each role a coin the other role requires; and redeem adds \textcent$(r)$ at Alice, which role Bob requires, and \textcent$(\Bob)$ at Bob, naming its own role.  Volition holds, by Definition~\ref{def:volition}: mint has arity one, the swap is guarded in both its roles, and pay and redeem are guarded at Alice only and require there \textcent$(\Bob)$, an atom of traceable provenance having Bob among its arguments.  The contract is therefore syntactically grassroots, and by Theorem~\ref{thm:grassroots} the protocol realising it is volitionally grassroots, so the platform realising the contract is a grassroots platform.

\iffull
\mypara{The origin of a record} A coin and an item are alike: each is an atom naming a person who is not a participant, put in a state by an act of two parties.  A $\mathit{friend}$ edge can only have been added by the two people it joins; a coin or an item can be handed on, so an edge may point at a person its source has never transacted with.  Traceable provenance covers both.  An edge is added either to a party to the act that adds it or to a person another party to that act records already, so no act ever leaves an edge whose ends lie in different instances, which Lemma~\ref{lem:grounded} proves by induction on the run.  For coins the chain is the passage of the coin from its issuer, and conservation of money~\cite{lewis2026volitional} is the special case, here an invariant of the schemas rather than one proved of the runs.  Requiring an atom to carry the parties that passed it on is the same condition written as a clause of the contract rather than as one on the schemas~\cite{golike2026digital}.  The two differ in one respect: pay deletes the coin from the payer, and forward keeps the item.

\fi
\fi

%% file: sections/08-modalities.tex
\section{Clauses and Their Modalities}\label{sec:modalities}

\iffull
The schemas also determine, for each clause, whether a party can breach it and what record of it the contract leaves in another party's state.  The two contracts show it.  A clause may be one no act of the contract can breach, one a party may breach with the contract recording the breach, or one whose performance lies outside the contract and of which it records only the undertaking.  The question is about every clause, not only about those that describe no act.  The three are what a realisation does with a clause, and are not kinds of obligation: a friendship kept on paper can be entered against a person's will, and under the schemas of Section~\ref{sec:sg-schemas} it cannot.
\fi

Throughout this section $C$ is a syntactically grassroots contract.  A party joins the platform realising $C$ by signing an \emph{undertaking} to keep $C$ towards all its other participants, which it presents to every participant it interacts with (Section~\ref{sec:signing}).

\begin{definition}[Clause]\label{def:clause}
A \emph{clause} of a contract $C$ is a proposition about the conduct of its parties, and is \emph{kept} or breached by that conduct; a party the clause names as answerable is its \emph{obligor}.  The runs of the protocol realising $C$ \emph{determine} the clause if whether it is kept is a function of the run.  A transaction of $R_C$ \emph{may breach} the clause if there are circumstances in which its occurrence puts a party in breach of it.
\end{definition}

\begin{definition}[Witnessed]\label{def:witnessed}
A volition-guarded transaction $(d \rightarrow d',Q')$ over participants $Q$ is \emph{witnessed against} $a \in \Pi$ if the recording graph of $d'$ has an edge $b \rightarrow a$ for some $b \in Q$ with $b \neq a$.  A schema is \emph{witnessed against} a role if every transaction it determines is witnessed against the person in that role.
\end{definition}

By Lemma~\ref{lem:records} that edge is an atom naming $a$ in the state of another party.  A record serves as evidence when it names the party answerable and is held by someone else.

\begin{definition}[Enforced, Attested, Undertaken]\label{def:modalities}
Let $\varphi$ be a clause of $C$, with obligor $o$ where it has one.  Then $\varphi$ is
\begin{enumerate}
    \item \emph{enforced} by $C$ if the runs of the protocol realising $C$ determine it and no transaction of $R_C$ may breach it;
    \item \emph{attested} by $C$ if some transaction of $R_C$ may breach it and every transaction that may is witnessed against $o$;
    \item \emph{undertaken} by $C$ if no transaction of $R_C$ may breach it, the runs do not determine it, and some transaction of $R_C$ is witnessed against $o$ or $o$'s undertaking is held by another party.
\end{enumerate}
\end{definition}

\iffull
The three are exclusive and not exhaustive.  Two kinds of clause are of none.  One that some transaction may breach without witnessing against the obligor can be breached invisibly.  One that the runs neither determine nor record is owed by no one identifiable.  A realisation does nothing with either.

\fi
A schema \emph{names} its role $\pi_j$ at another role $\pi_i$ if an atom added at $\pi_i$ has $\pi_j$ among its arguments.

\begin{proposition}[Witnessing]\label{prop:witnessing}
A schema is witnessed against every role it names at another role.
\end{proposition}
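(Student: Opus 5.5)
The proof unfolds Definitions~\ref{def:witnessed}, \ref{def:compile} and~\ref{def:records}, and uses Lemma~\ref{lem:records}. Let $\tau$ have roles $\pi_1,\ldots,\pi_k$, and suppose $\tau$ names $\pi_j$ at $\pi_i$ with $i \neq j$. By the definition of naming, some atom $\varphi \in \mathord{+}(i)$ has $\pi_j$ among its arguments. We must show that every transaction $(d \rightarrow d',Q')$ determined by $\tau$ at any binding $\beta$ is witnessed against $a := \beta(\pi_j)$.

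Take $b := \beta(\pi_i)$. Then $b$ lies in the participants $Q = \{\beta(\pi_1),\ldots,\beta(\pi_k)\}$, and $b \neq a$ because a binding is injective on roles (Definition~\ref{def:binding}). The last clause of Definition~\ref{def:compile} gives
\[
d'_b = \bigl(d_b \setminus \beta(\mathord{-}(i))\bigr) \uplus \beta(\mathord{+}(i)),
\]
so $\beta(\varphi) \in d'_b$, since $\uplus$ adds multiplicities and nothing is subtracted after the addition. As $\beta$ acts on atoms componentwise, $\beta(\varphi)$ has $\beta(\pi_j) = a$ among its arguments. Hence $d'_b$ records $a$ in the sense of Definition~\ref{def:records-atoms}.

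By Lemma~\ref{lem:records}, $d'_b$ then records $a$ in the sense of Definition~\ref{def:records}. So the recording graph of $d'$ has the edge $b \rightarrow a$ with $b \in Q$ and $b \neq a$, and the transaction is witnessed against $a$. Since $\beta$ and the transaction were arbitrary, $\tau$ is witnessed against $\pi_j$.

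The only step needing care is the passage through Lemma~\ref{lem:records}. The recording graph in Definition~\ref{def:witnessed} is defined through the abstract notion of Definition~\ref{def:records}, over the local-states function $S_C$, and not through atoms directly. We also need to note that $a$ is a person in $\Pi$ rather than a speech act; this holds because $\beta$ sends roles into $\Pi$. Finally, the case where $\tau$ determines no transaction at $\beta$ is vacuous.
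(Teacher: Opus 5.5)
Your proof is correct and is essentially the paper's own: fix a binding $\beta$, get $\beta(\pi_i)\neq\beta(\pi_j)$ from injectivity on roles, read $\beta(\varphi)\in d'_{\beta(\pi_i)}$ off the effect clause of Definition~\ref{def:compile}, and turn that atom into the edge $\beta(\pi_i)\rightarrow\beta(\pi_j)$ of the recording graph by Lemma~\ref{lem:records}. Your extra remarks (that $\beta(\pi_i)$ is a participant, that $\beta(\pi_j)\in\Pi$, and the vacuous case) only make explicit what the paper leaves implicit.
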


\iffull
\begin{proof}
Let $\tau$ name $\pi_j$ at $\pi_i$ by an atom $\varphi$ added at $\pi_i$, let $\beta$ be a binding for $\tau$, and put $a := \beta(\pi_j)$ and $b := \beta(\pi_i)$, which differ as $\beta$ is injective on roles.  By Definition~\ref{def:instance} every transaction determined by $\tau$ at $\beta$ leaves $b$ with $\beta(\mathord{+}(i))$ adjoined, so the state of $b$ after it contains $\beta(\varphi)$, an atom having $a$ among its arguments, and the recording graph after it has the edge $b \rightarrow a$ by Lemma~\ref{lem:records}.
\end{proof}
\fi
A predicate $e$ is \emph{guarded} in $C$ if whenever a schema of $C$ adds an $e$-atom, the atom names only roles of that schema that will the act.  The \emph{consent clause} for $e$ says that a party's state holds an $e$-atom only if those the atom names willed the act that added it.

\begin{proposition}[Enforcement by guarding]\label{prop:enforced}
If $e$ is guarded in $C$, the consent clause for $e$ is enforced by $C$.
\end{proposition}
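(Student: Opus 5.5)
The plan is to check the two requirements of Definition~\ref{def:modalities}(1) for the consent clause: that the runs of the protocol realising $C$ determine it, and that no transaction of $R_C$ may breach it.

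\textbf{The runs determine the clause.} The clause concerns the machine states along a run, together with the acts that added their $e$-atoms and the volitional states of the people those atoms name at the moments the acts were taken. All of this is read off the run. A run is a sequence of agent configurations, each pairing a volitional state with a machine state, and every transition that changes a machine state is a volitional machine transaction induced by some member of $R_C$, by Proposition~\ref{prop:consent}. So whether the clause is kept is a function of the run, as the definition requires.

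\textbf{No transaction may breach it.} Take a transition of a run of $\calF(P)$, induced by $(t,Q')\in R_C$, where $t$ is determined by a schema $\tau$ at a binding $\beta$. Suppose it leaves some party $b$ holding an $e$-atom that $b$ did not hold just before, and let that atom name a person $a$. I would proceed in four steps.

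\begin{enumerate}
\item By the update rule of Definition~\ref{def:compile}, $d'_{b}=(d_b\setminus\beta(\mathord{-}(i)))\uplus\beta(\mathord{+}(i))$ for the role $\pi_i$ with $\beta(\pi_i)=b$. Hence the new atom is $\beta(\varphi)$ for some $\varphi\in\mathord{+}(i)$ with predicate $e$, and $a=\beta(u)$ for some argument $u$ of $\varphi$.
\item Since $e$ is guarded in $C$, the person arguments of $\varphi$ are all guarding roles of $\tau$. So $u$ is a guarding role, and $a=\beta(u)\in Q'$.
\item A volitional machine transaction is taken only when every guard holds its class in its volitional state (Definitions~\ref{def:agent-state} and~\ref{def:vmat}). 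So $a$ willed the act.
\item Atoms that were already present, or were carried over by the multiset update, were added by earlier transitions. The same argument applies to each of those by induction along the run, starting from the empty initial state $s_0$.
\end{enumerate}

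Together these show that no occurrence of any transaction of $R_C$ puts a party in breach, which is what Definition~\ref{def:modalities}(1) requires.

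\textbf{Main obstacle: the reading of ``names only roles''.} An $e$-atom could have a party-variable argument $x$ that $\beta$ sends to a guarding participant, or to a non-participant. For step~2 to go through, every argument of an added $e$-atom that lies in $R\cup X$ must be a guarding role. I would therefore read ``names only roles of that schema that will the act'' as a syntactic condition over all of $R\cup X$: an added $e$-atom has no party-variable arguments at all. Speech-act arguments are harmless, because $D$ is disjoint from $\Pi$.

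A second subtlety concerns the atom's own holder. If $\varphi$ names its own role $\pi_i$, that role must also be guarding. This follows from the same reading, and it is consistent with the clause as stated, which asks for the will of everyone the atom names.
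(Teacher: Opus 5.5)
Your proof is correct and follows the paper's own argument. You trace the $e$-atom back to the transition that added it (the initial state being empty) and identify it as $\beta(\varphi)$ for some $\varphi\in\mathord{+}(i)$; you then use guardedness, read as the paper also reads it (every person argument of $\varphi$ is a guarding role), to place every named person in the guard, and conclude from the enabling condition that each held the act's class. Your explicit checks that the runs determine the clause and that the holder's own role must be guarding are points the paper's proof leaves implicit.
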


\iffull
\begin{proof}
Let $r$ be a finite safe run of the protocol realising $C$ and let an $e$-atom occur in the state of a party at some configuration of $r$.  The initial state is empty, so some transition of $r$ added it, and by Definitions~\ref{def:realisation} and~\ref{def:instance} that atom is $\beta(\varphi)$ for a binding $\beta$ of a schema $\tau$ of $C$ and some $\varphi \in \mathord{+}(i)$ with predicate $e$.  By hypothesis every argument of that atom is a guarding role of $\tau$, so every person among the atom's arguments is $\beta$ of a guarding role and hence in the guard of the transaction taken; by Definition~\ref{def:enabled} each held its class in its volitional state at that point, which is to say willed the act.
\end{proof}
\fi

\mypara{The clauses of the \full{two contracts}{contract}} Clauses~1 and~2 of Section~\ref{sec:sg} are \emph{enforced}, by Proposition~\ref{prop:enforced} and by unfriend deleting in both roles.  Clause~5 of Section~\ref{sec:sg} is \emph{undertaken}, forward adding $\mathit{item}(x,a,\Alice)$ at Bob and so being witnessed against the party that passed the item on\full{; and so is clause~5 of Section~\ref{sec:cb}, a mutual credit line adding \textcent$(\Bob)$ at Alice and \textcent$(\Alice)$ at Bob and so being witnessed against both of them.  The clause that a coin enters a holding only by its issuer's will is of no kind, being breachable by pay without pay naming the payer}{}.  The remaining clauses permit acts and oblige nothing, so no transaction may breach them and every run keeps them: they are enforced.

\iffull
\mypara{Attestation and undertaking} The two require the same thing of the schemas, that the act naming the obligor leave the record in another party's hands, and differ only in whether the breach is itself an act of the contract.  The evidence depends on the schemas and the modality on the clause.  A performance promised for later is never attested, because failing to perform is not an act; a warranty given in the course of an act is attested exactly when that act is witnessed against the party giving it.

\fi
\iffull
Witnessing therefore depends on whether an effect names the party performing the act, and not only the parties the atom concerns.  Unfriend adds no atom, so no clause against ending a friendship at will can be attested under the contract of Section~\ref{sec:sg}; pay adds a coin naming its issuer and not its payer, so no clause against the payer can be attested under the contract of Section~\ref{sec:cb}.  A contract whose breaches are to be attested must have acts whose effects name their author.  Signing provides it: an act taken as a digital speech act, an utterance signed with the key of the party taking it, leaves an atom naming that party, and the signed record is non-repudiable~\cite{cardelli2020digital,golike2026digital}.

\fi
\mypara{Performance outside the system} \full{Clause~5 of Section~\ref{sec:cb} and the warranty in clause~4 of Section~\ref{sec:sg} are performed outside the schemas, since selling at an advertised price is no act of the contract and neither is knowing a person, and they have different modalities.}{The warranty in clause~4 of Section~\ref{sec:sg} is performed outside the schemas, knowing a person being no act of the contract, and it is \emph{attested}: forwarding is an act, a forward by a party that does not personally know the party it received the item from makes the warranty false, and every forward adds $\mathit{item}(x,a,\Alice)$ at Bob, so by Proposition~\ref{prop:witnessing} it is witnessed against Alice.}
  \iffull
The first is \emph{undertaken}: no transaction may breach it, no run determines whether the issuer honours it, and a mutual credit line is witnessed against the issuer, so the coin in another party's holding is the record of the undertaking.  The second is \emph{attested}: forwarding is an act, a forward by a party that does not personally know the party it received the item from makes the warranty false, and every forward adds $\mathit{item}(x,a,\Alice)$ at Bob, so by Proposition~\ref{prop:witnessing} it is witnessed against Alice.  The three modalities therefore do not divide by whether performance is digital.
\fi

%% file: sections/09-signing.tex
\section{Signing the Contract}\label{sec:signing}

\iffull A person joins a global platform by signing an End-User Licence Agreement with its operator.  \fi A person joins a grassroots platform by signing a one-sided \emph{undertaking} to keep its grassroots social contract towards all its other participants, and presents it to every participant it interacts with\full{; the code realising the contract, or the platform hosting it~\cite{shapiro2026gsg}, enforces this, refusing an interaction with a participant that has not presented one.  The undertaking is a speech act, signed with the key of the party giving it and non-repudiable~\cite{cardelli2020digital,golike2026digital}.  A person is asked to sign on the invitation of a participant, whose signed invitation names the contract and the invitee, and the invitee presents its undertaking first to the inviter~\cite{shapiro2026gsg}.}{, as the code realising the contract or the platform hosting it~\cite{shapiro2026gsg} enforces.}

\iffull
Presentation is not part of the protocol: the code carries the undertaking in the messages of an interaction, and it leaves no atom in a state.  Throughout this section $C$ is a syntactically grassroots contract, $\calF$ the protocol realising it and $Q \subset \Pi$; write $p \rhd q$, for $p, q \in Q$, when $p$ has presented its undertaking to $q$.

\begin{definition}[Presenting run]\label{def:presenting}
A run of $\calF(Q)$ with a relation $\rhd$ on $Q$ is \emph{presenting} if $p \rhd q$ and $q \rhd p$ for every interaction of the run between $p$ and $q$.
\end{definition}

The code realising $C$, or the platform hosting it, keeps every run presenting.

\begin{proposition}[Every participant has signed]\label{prop:signed}
In a presenting run, every agent of an instance of two or more agents has presented its undertaking to another agent of that instance.
\end{proposition}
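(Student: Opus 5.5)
The proof reads the claim directly off the definitions of instance and of presenting run.

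Fix a presenting run of $\calF(Q)$ with its relation $\rhd$. Let $P$ be an instance, in that run or in a prefix of it, with $|P| \ge 2$, and let $p \in P$. By Definition~\ref{def:coalescence}, $P$ is a connected component of the interaction graph of the run or prefix. A connected component with at least two vertices leaves no vertex isolated, so $p$ has a neighbour $q \neq p$ in $P$. By the definition of the interaction graph, the edge $\{p,q\}$ is there because some transition of the run is an interaction between $p$ and $q$. Since the run is presenting (Definition~\ref{def:presenting}), that interaction gives both $p \rhd q$ and $q \rhd p$. In particular $p \rhd q$ with $q \in P$ and $q \neq p$, which is what the proposition asks for $p$. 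As $p$ was arbitrary, the claim holds for every agent of $P$.

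One point needs care. If the instance is taken in a prefix rather than in the whole run, the interaction that justifies the edge lies in that prefix. Presentation is a property of the run's interactions, and every interaction of a prefix is an interaction of the run, so presentation along the prefix is still guaranteed by the hypothesis.

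The proof also gives the remark in Section~\ref{sec:proudhon}: the undertaking of each agent in an instance of two or more is held by another participant. Indeed, $p \rhd q$ means $q$ has received $p$'s signed undertaking.

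There is no real obstacle here. The only step needing attention is the edge case: a connected component with at least two vertices has no isolated vertex. This is exactly why the statement restricts to instances of two or more agents, since a singleton instance has no interactions and so no presentation is forced.
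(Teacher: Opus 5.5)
Your proof is correct and is the paper's argument: an agent of an instance of two or more agents has an edge of the interaction graph to another agent $q$ of that instance, the edge was put there by an interaction between the two, and because the run is presenting this gives $p \rhd q$. Your remark about instances taken in a prefix is a harmless refinement that the paper leaves implicit.
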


\begin{proof}
An instance is a connected component of the interaction graph of the run (Definition~\ref{def:coalescence}), so an agent of an instance of two or more agents has an edge to another agent $q$ of it, put there by an interaction between the two; the run is presenting, so the agent has presented its undertaking to $q$.
\end{proof}

By Section~\ref{sec:modalities} a record is evidence when it names the party answerable and is held by someone else; the undertaking names its signer, and by the proposition another participant holds it.  Condition (5) of Section~\ref{sec:intro} is therefore enforced by the code.
\fi
\full{}{Hence every participant of an instance of two or more has signed it and another participant holds it: condition (5) of the abstract is enforced by the code.}

\full{\mypara{The code} }{}Enforced clauses are enforced by the code the parties run\full{: a breach of one is impossible with a correct implementation of the contract (Section~\ref{sec:modalities}), so a breach requires a party to run other code}{}.  The code is certified by its \emph{creator}, a person\full{, natural or legal}{}, and by its \emph{compiler}\full{.  Each certification is a signed warranty carried with the code: the creator's, that the code realises the contract, carrying the contract's schemas and the checker's verdict that they are syntactically grassroots; the compiler's, that the compiled code is the creator's.  The certifier is not necessarily a party, so the certifications are not clauses of the contract and have no modality under Definition~\ref{def:modalities}; a breach of either is provable from the certified code and the contract by anyone holding both}{, neither necessarily a party}.  In its undertaking a party also promises to run the certified code\full{.  This clause is \emph{undertaken}: no act of the contract can breach it, no run shows whether a party keeps it, and the undertaking, in another party's hands, is its record.}{; this clause is \emph{undertaken}, and the undertaking, in another party's hands, is its record.}\full{  Where the platform hosting the code attests it, every interaction opening with proof that both parties run the certified code, as in the Grassroots Super-App~\cite{shapiro2026gsg}, the clause is \emph{enforced} by the platform.}{  Where the hosting platform attests the code~\cite{shapiro2026gsg}, the clause is \emph{enforced} by the platform.}

\mypara{Electronic signatures} An undertaking signed with a party's key is an electronic signature.  In the United States a signature, contract or record may not be denied legal effect, validity or enforceability solely because it is in electronic form~\cite{esign2000}; in the European Union an electronic signature shall not be denied legal effect and admissibility as evidence solely on the grounds that it is in electronic form~\cite{eidas2014}; and in the United Kingdom an electronic signature, and its certification, are admissible in evidence on the authenticity and integrity of the data signed~\cite{eca2000}.  A grassroots social contract therefore requires no new doctrine of contract formation: its parties freely accept and sign it, as condition (5) requires, and its realisation keeps signed records of the acts that follow.

%% file: sections/10-related-work.tex
\section{Related Work}\label{sec:related}

\full{\mypara{Contracts as formal objects} The formal study of contracts in this field is dominated by the contract-as-automaton family.  \emph{Contract automata}~\cite{azzopardi2016contract} give the deontic modalities an operational semantics over the acts of named interactive parties; Flood and Goodenough~\cite{flood2022automaton} render a loan agreement as a deterministic finite automaton whose states are performing, delinquent and default and whose transitions are drawn from an alphabet of events; the L4 domain-specific language~\cite{watt2023l4} has an executable semantics in which deontic clauses with deadlines become states and transitions; and timed contract automata~\cite{chircop2022timed} add quantitative time.  Rule-based encodings represent contract clauses as defeasible deontic rules with violation and reparation chains~\cite{governatori2004ruleml}, and imperative and declarative smart contracts running on a distributed ledger have been compared~\cite{governatori2018legal}.}{\mypara{Contracts as formal objects} The formal study of contracts in this field is dominated by the contract-as-automaton family --- contract automata~\cite{azzopardi2016contract}, a loan agreement as a finite automaton~\cite{flood2022automaton}, the L4 language~\cite{watt2023l4}, timed contract automata~\cite{chircop2022timed}, and encodings as defeasible deontic rules~\cite{governatori2004ruleml} and the comparison of imperative with declarative smart contracts~\cite{governatori2018legal}.  All fix the parties when the contract is formed, keep one contract state, and assume an alphabet of events, a clock or a ledger, where our conditions quantify over the family of systems a contract induces on every finite set of people, need a local state per party, and forbid a participant that is not a party.  Symboleo~\cite{sharifi2020symboleo,parvizimosaed2022symboleo} and Stipula~\cite{crafa2022stipula} are the nearest to ours and make the same three commitments; Catala~\cite{merigoux2021catala} does the corresponding work for statute.  Befriending cannot be written in Stipula, which allows joint consent only in the constructor.  We claim no new language: the conditions are on the four things a schema carries, which Stipula's functions and Symboleo's obligations also carry, given a semantics in which each party holds its own state.}

\full{Three commitments are common to all of these, and this paper makes none of them.  The parties are fixed when the contract is formed, so the object of study is one instance among a known set of parties, where our conditions quantify over the family of systems a contract induces on every finite set of people.  There is one contract state, the automaton's state, the rule base's facts or the ledger, where the definition of \emph{records}, and therefore of what it is for two parties to be connected, needs a local state per party and a local-states function monotone in the set of people.  And each requires something closure forbids: an alphabet of events every party is assumed to observe alike, a clock, or a ledger.}{}

\full{Two languages outside this venue are the nearest to ours.  Symboleo~\cite{sharifi2020symboleo,parvizimosaed2022symboleo} specifies a legal contract as a set of obligations and powers over an ontology of roles, assets and events, with a semantics of logical axioms on statecharts, and properties verified in temporal logic; its purpose is monitoring, and roles are assigned to parties during each contract execution.  Stipula~\cite{crafa2022stipula} is a domain-specific language with legal constructs, agreement, permissions, obligations and violations, whose contracts are state machines over parties, fields and linear assets; its \emph{agreement} operator is the contract's constructor, fixing the full list of parties, and thereafter each function is invoked by one named party.  Catala~\cite{merigoux2021catala} does the corresponding work for statute rather than contract, encoding the base-case and exception structure of legislation in default logic.  Both contract languages make the three commitments above: a Stipula configuration carries a global clock, and Symboleo presupposes a monitor, which is a participant that is not a party.}{}

\iffull
A concrete case is befriending, which cannot be written in Stipula: it is an act that two parties must both will, available repeatedly between any two of them, and Stipula allows joint consent only in the constructor, so each befriending would be the formation of a new contract rather than a clause of a standing one.

We claim no new language.  The conditions of Section~\ref{sec:conditions} are on the four things a schema carries.  Stipula's function declarations name an invoking party and a precondition, and Symboleo attributes obligations to a debtor and a creditor, so the four could be extracted from either given a semantics in which each party holds its own state.

\fi
\iffull
\mypara{Norms and normative systems} The formal study of what a contract obliges uses deontic and defeasible logic: contracts are represented as defeasible rules with deontic operators and compliance is checked against them~\cite{governatori2018practical}.  Normative multi-agent systems concern how norms are operationalised, detected, and brought into conflict~\cite{boella2006introduction}.  A volition guard~\cite{lewis2026volitional} is a condition of enactment rather than a deontic operator: an act does not occur unless the people it obliges are willing.  Where a clause of the contract describes no act the protocol carries out, Section~\ref{sec:modalities} says what the realisation does with it: record its breach, or record the undertaking and the party who gave it.

\mypara{Juridical acts and declarative power} What we call an act is Sartor's \emph{result-declaration}, one of ``acts intended to produce legal determinations''~\cite{sartor2006fundamental}, and the fullest model of it is Hage's~\cite{hage2011juridical}, which analyses juridical acts as intentional changes in a world of law furnished with entities, facts and rules.  A multi-party act willed by its parties is not new either.  Declarative power is ``the capacity of the power-holder of creating normative positions, involving other agents, simply by `proclaiming' such positions''~\cite{gelati2004normative}, and its authors observe that a declarative power exercised jointly by several parties, with the consent of each, is a contract~\cite{gelati2002declarative}.

We differ from that tradition on where the effect of an act comes from.  There a proclamation has effect only if an institution provides for it: ``when an agent $j$ proclaims $A$, $j$ brings it about that $A$ only if the concerned institution $s$ provides for this result''~\cite{gelati2002declarative}; in norm-governed institutions ``designated agents are \emph{empowered} to create particular kinds of states of affairs''~\cite{jones1996institutionalised}; counts-as statements ``hold only with respect to a context''~\cite{grossi2005countsas}; and an artificial institution ``presupposes an agreement on an unambiguous definition of a set of concepts''~\cite{fornara2008institutions}.  A grassroots social contract has no such institution, being subject to no external authority.  The effect of an act is therefore the change it makes in the parties' own states, and the condition of its effectiveness is their will rather than a conferred power.  An act schema is a result-declaration with the institution taken out: its precondition is on each participant's own local state rather than on an institutionally conferred power, and its guard takes the place of authorisation.

\mypara{Standard-form contracting} A grassroots social contract has no drafting party, so there is no party whose terms are policed and no party from whom disclosure is required; the defects the literature finds in the arrangement between a person and a proprietor are absent because the proprietor is absent.

The End-User Licence Agreement is a \emph{contract of adhesion}.  Radin~\cite{radin2013boilerplate} treats mass-market boilerplate as the deletion of rights that the legal system otherwise confers, and as a matter for the rule of law rather than for consent only.  Kim~\cite{kim2013wrap} treats the digital instances of the form --- the terms accepted by clicking, browsing, or installing --- and what assent means when the act of assent is a condition of access.  Ben-Shahar and Schneider~\cite{benshahar2014more} address the remedy most often proposed, mandated disclosure, and find that it fails: the terms are not read, and requiring more of them is not read either.

This literature takes the bilateral form as given; its questions are what may be done within the form: whether assent was real, which terms a court should refuse, what a drafter must disclose.

\mypara{Terms a machine can evaluate} Surden~\cite{surden2012computable} separates the terms of a contract a machine can evaluate from those it cannot; work on smart legal contracts separates the operational parts of an agreement, which code executes, from the rest, which stays in prose~\cite{clack2016templates}; and in security, Schneider~\cite{schneider2000enforceable} characterises the policies an execution monitor can enforce.  Each divides a contract's terms in two, by what a realisation can do with them.  Section~\ref{sec:modalities} divides clauses in three, distinguishing a clause no act of the contract can breach from a clause a party may breach while the contract records the breach against them.  A two-way division has no room for the second, and it is the class the law needs: a remedy requires a provable breach and an identifiable obligor.

In the monitoring literature, compliance is computed by monitors that ``receive inputs from \emph{trusted} observers''~\cite{modgil2015monitoring}, and where the timestamps cannot be trusted the monitor is made robust rather than removed: compliance ``is typically computed with respect to timed event traces with event timestamps assumed to be perfect'', and the remedy offered is a semantics for compliance when they are not~\cite{cambronero2017timed}.  A monitor is a participant that is not a party, and closure forbids one.  Here the record is held by the parties themselves, and what makes it evidence is Definition~\ref{def:witnessed}: it names the party answerable and it lies in someone else's hands.

\mypara{Code as a regulator} Lessig~\cite{lessig1999code} argued that architecture regulates conduct as law, markets and norms do, and that in the digital realm the architecture is code.  A grassroots social contract is not code: the conditions of Section~\ref{sec:conditions} concern who may take part in an act and whose will is required for it, rather than what the code makes a party do, and they are checked on the text rather than on the code.  The digital social contract~\cite{cardelli2020digital} identifies the contract with code: the contract is a program, and a party can behave only according to it; the smart contract literature~\cite{de2021smart} does the same on a blockchain.

\mypara{Self-governance without an external authority} Ostrom~\cite{ostrom1990governing} documented communities that devise, adopt and enforce their own rules over shared resources, without an owner and without a state administering them, and identified the conditions under which such arrangements endure.  A grassroots social contract makes the same claim in the digital realm, that a community constitutes itself, and openness is its formal counterpart: any two people can come to record each other by acts of their own, so any person may start an instance and instances coexist.  Commons scholarship concerns how a resource is governed in common, for physical resources~\cite{ostrom1990governing} and for knowledge and networked production~\cite{hess2007understanding,benkler2006wealth}; a grassroots social contract governs a relationship rather than a resource, and applies where nothing is held in common.

Platform regulation binds the operator: the General Data Protection Regulation~\cite{gdpr2016} imposes duties on the party that processes personal data, and so presupposes it.  Where a community's platform is constituted from its members' own devices~\cite{shapiro2023grassrootsBA,shapiro2024grassroots,shapiro2025characterising}, there is no operator to bind.

\fi
\full{}{\mypara{Norms and normative systems} Contracts are represented as defeasible rules with deontic operators, against which compliance is checked~\cite{governatori2018practical}, and normative multi-agent systems concern how norms are operationalised and brought into conflict~\cite{boella2006introduction}; a guard~\cite{lewis2026volitional} is a condition of enactment rather than a deontic operator.

\mypara{Juridical acts and declarative power} What we call an act is the \emph{result-declaration} of Sartor~\cite{sartor2006fundamental}, modelled by Hage~\cite{hage2011juridical}, and a declarative power exercised jointly is a contract~\cite{gelati2004normative,gelati2002declarative}.  There a proclamation has effect only if an institution provides for it~\cite{gelati2002declarative,jones1996institutionalised,grossi2005countsas,fornara2008institutions}; a grassroots social contract has no institution, so an act's effect is the change in the parties' own states, and its guard takes the place of authorisation.

\mypara{Standard-form contracting} The End-User Licence Agreement is a \emph{contract of adhesion}, treated by Radin~\cite{radin2013boilerplate}, Kim~\cite{kim2013wrap} and Ben-Shahar and Schneider~\cite{benshahar2014more}; a grassroots social contract has no drafting party, so the questions of that literature do not arise.

\mypara{Terms a machine can evaluate} Surden~\cite{surden2012computable}, smart legal contracts~\cite{clack2016templates} and enforceable security policies~\cite{schneider2000enforceable} each divide a contract's terms in two by what a realisation can do with them; Section~\ref{sec:modalities} divides them in three, and the third class, a breach recorded against its obligor, is the one a remedy needs.  Compliance monitors receive inputs from trusted observers~\cite{modgil2015monitoring,cambronero2017timed}; a monitor is a participant that is not a party, and closure forbids one.

\mypara{Code as a regulator} Lessig~\cite{lessig1999code} argued that architecture regulates as code, and the digital social contract~\cite{cardelli2020digital} and smart contracts~\cite{de2021smart} make the contract a program; a grassroots social contract is not code, its conditions being checked on the text.

\mypara{Self-governance} Ostrom~\cite{ostrom1990governing} and commons scholarship~\cite{hess2007understanding,benkler2006wealth} concern communities governing a resource without an owner or a state, and platform regulation~\cite{gdpr2016} binds an operator; a grassroots social contract governs a relationship, and a platform constituted from its members' own devices~\cite{shapiro2023grassrootsBA,shapiro2024grassroots,shapiro2025characterising} has no operator.}

\full{\mypara{Grassroots systems} The notion of a grassroots protocol is due to~\cite{shapiro2023grassrootsBA} and was recast in terms of interleaving and coalescence in~\cite{lewis2026volitional}, which also introduced volition-guarded transactions and volitionally grassroots protocols.  Grassroots platforms have been specified for social networking~\cite{shapiro2023gsn,shapiro2026gsg} and for personal cryptocurrencies~\cite{shapiro2024gc,lewis2023grassroots,shapiro2026bonds}.  Three things are added here.  The first is politeness, a condition on a protocol's transactions, which Theorem~\ref{thm:polite} proves sufficient for the protocol to be grassroots, so no proof about runs is needed.  The second is the step before the specification: the legal text, and decidable conditions on it under which politeness is established from the acts rather than from the protocol.  The third is what a realisation does with a clause of the contract, whether or not the contract provides an act for it, which depends on the same schemas.}{\mypara{Grassroots systems} Grassroots protocols are due to~\cite{shapiro2023grassrootsBA} and were recast with volition-guarded transactions in~\cite{lewis2026volitional}; platforms have been specified for social networking~\cite{shapiro2023gsn,shapiro2026gsg} and for personal cryptocurrencies~\cite{shapiro2024gc,lewis2023grassroots,shapiro2026bonds}.  Added here are politeness, the legal text with decidable conditions on it, and what a realisation does with a clause.}

\iffull
\mypara{Neighbouring instruments} Four instruments resemble a grassroots social contract, and each differs from it in what it does rather than in how it is drafted.  A \emph{licence} states what one party may do with something another party is entitled to: the General Public License~\cite{fsf2007gpl} presupposes the copyright it licenses, grants freedoms on conditions, and binds a taker because they would otherwise infringe.  A grassroots social contract licenses nothing, and its obligations are owed by each party to every other and derive from no party's prior entitlement, so a person who has never held any right that the others could infringe is bound as the rest are.  \emph{Bylaws} bind the members of an association symmetrically and are adopted on joining, which makes them the closest of the four, but they presuppose an association: an entity with legal personality, organs that act for it, and a membership register.  A \emph{code of conduct} states standards of behaviour, and its sanction is exclusion administered by whoever runs the forum, which is the party a grassroots social contract does not have; \emph{terms of membership} presuppose the same party, in its capacity as the one who admits.
\fi
\full{}{\mypara{Neighbouring instruments} A licence~\cite{fsf2007gpl} presupposes an entitlement, bylaws an association, a code of conduct and terms of membership whoever runs the forum; a grassroots social contract presupposes none.}

%% file: sections/11-conclusion.tex
\section{Conclusion}\label{sec:conclusion}

Proudhon's conditions on a social contract, with two more, formalised as conditions on its acts, give it the property he envisioned for it: any two people may come into relation without anyone's leave, two communities that adopted the contract independently can later become one, and every participant in an act is a party.  A grassroots social contract is one meeting the conditions, and its realisation has all three.

\iffull
Such a contract has two artefacts, a legal text and a protocol realising it, and the grassroots literature has had only the second.  This paper translates the first into the second.  A clause of the text describes an act, and the acts become act schemas; the schemas compile into volition-guarded transactions over a local-states function in which a state records exactly the people its atoms name; and three conditions on the schemas make a contract \emph{syntactically grassroots}, which is enough for the protocol realising it to be volitionally grassroots.  The compilation and the conditions are algorithmic and the conditions are decidable, so a contract can be certified from its text, and the only step left to judgement is the first, rendering a text as schemas.

The same schemas determine what a realisation does with a clause, whether or not the clause describes an act: leave it unbreachable, record its breach against the party in breach, or record the undertaking and who gave it.  This depends on one thing: whether the act that names the party answerable leaves the record in another party's hands.  Attestation and undertaking both require that of the schemas, and differ only in whether the breach is itself an act of the contract.

Both contracts are certified from their schemas.  Who holds a record of whom is a directed graph, and an instance is a connected component of the undirected graph of interactions.  In the social graph the two coincide: befriending is the introductory act, and a friend edge can only have been added by the two people it joins.  A coin can be handed on and so can a signed item, so an edge may point at a person its source has never transacted with; an edge is added either to a party to the act that adds it or to a person another party to that act records already, and that condition on the schemas is enough for no edge ever to join two instances.  For coins it is conservation of money.

The conditions also leave no participant that is not a party.  Openness lets any person start an instance, and without permission.  Closure leaves an instance requiring no registry that issues identifiers, no log that orders acts, and no authority that timestamps them, each of which would take part in an act without being a party to the contract.  A community organised in this way requires no permission to form, no authority to operate, and no proprietor to govern it.

Three directions remain.  The conditions are sufficient and not necessary, and a characterisation would say which contracts are grassroots.  Which further invariants of a contract follow from its schemas, as conservation of money does, is open.  And the four things a schema carries are present, under other names, in the contract specification languages of Section~\ref{sec:related}; giving one of them a semantics in which each party holds its own state would let the conditions be checked on contracts already written in it.
\fi

%% file: sections/12-appendix.tex
\section{The Framework of Volition-Guarded Multiagent Atomic Transactions}\label{app:framework}

This appendix states formally what Section~\ref{sec:transactions} recalls in brief, so that the statements and proofs of Section~\ref{sec:polite} can be read without recourse to another paper.  Everything in it is from~\cite{lewis2026volitional}, and the proofs of its results are there and are not repeated.

Earlier work introduced multiagent transition systems~\cite{shapiro2021multiagent}, grassroots protocols and platforms~\cite{shapiro2023grassrootsBA}, and their definition via multiagent atomic transactions~\cite{shapiro2025atomic}, which describe the behaviour of machines but not of the people operating them.  A volition-guarded transaction is a machine transaction guarded by the volitions of some, all, or none of the people whose machines participate; a person's volitional state is a set of equivalence classes of machine transactions they are willing their machine to take part in.  The social graph illustrates the two extremes: befriending is guarded by both parties, while unfriending is guarded by either.  A person may freely change their volitional state; an equivalence class is removed from every agent's volitional state when a machine transaction in that class is taken, the will having been fulfilled.

\subsection{Agents, Machines and Volition-Guarded Transactions}\label{app:vgt}

We assume a potentially infinite set of \emph{agents} $\Pi$, an agent being a person operating a machine, but consider only finite subsets of it, so when referring to a particular set of agents $P \subset \Pi$ we assume $P$ to be nonempty and finite.  We use $\subset$ to denote the strict subset relation and $\subseteq$ when equality is also possible, and use $p\ne q \in P$ as a shorthand for $p\in P \wedge q\in P \wedge p\ne q$.  As standard, we use $S^P$ to denote the set of all total functions from $P$ to $S$, and if $c\in S^P$ we use $c_p$ to denote the value of $c$ at $p\in P$.

\begin{definition}[Machine State, Configuration, Transaction, Volition-Guarded Transaction~\cite{lewis2026volitional}]\label{def:mt}
Given an arbitrary set $S$ of \emph{machine states}, with a designated \emph{initial state} $s_0 \in S$, and agents $Q \subset \Pi$, a \emph{machine configuration} over $Q$ is a member of $S^Q$, and a \emph{machine transaction} over \emph{participants} $Q$ is a pair $c\rightarrow c' \in (S^Q)^2$ such that $c\ne c'$; it is \emph{unary} if $|Q|=1$.  Given such a machine transaction $t$, a \emph{volition-guarded multiagent atomic transaction} over $t$---henceforth, \emph{volition-guarded transaction}---is a pair $(t,Q')$ where $Q'\subseteq Q$ are its \emph{guards}.
\end{definition}

Machine transactions are atomic and asynchronous~\cite{shapiro2021multiagent}: they can be carried out by their participants at any time, regardless of the states of non-participants.  Participants include both active agents, whose state changes, and stationary agents, whose state is a precondition but does not change.  Volition-guarded transactions can be carried out only if their guards $Q'\subseteq Q$ are willing, and do not distinguish between agents that initiate a transaction and those willing to take part in it.  When we say a transaction is guarded by $\{p,q\}$, both must be willing; when we say it is guarded by either $p$ or $q$, we mean there are two volition-guarded transactions over the same machine transaction, $(t,\{p\})$ and $(t,\{q\})$, so that either person's volition suffices.

Distinct machine transactions can represent the same act in different configurations; a transaction equivalence relates them.

\begin{definition}[Transaction Equivalence~\cite{lewis2026volitional}]\label{def:equivalence}
Given a set of machine transactions $R$, a \emph{transaction equivalence} is an equivalence relation $\sim$ on $R$ such that $t \sim t'$ implies $t$ and $t'$ have the same participants, and $R/{\sim}$ is countable.  We write $[t]$ for the equivalence class of $t$ under $\sim$.
\end{definition}

All befriend transactions of $p$ and $q$, differing only in the configurations in which they occur, form one equivalence class.  A person wills a class, and taking one member of it fulfils that will.

\begin{definition}[Agent State and Configuration~\cite{lewis2026volitional}]\label{def:agent-state}
Given agents $P$, states $S$ with initial state $s_0$, a set of machine transactions $T$ each over its own participants $Q\subseteq P$ and $S$, and equivalence $\sim$ on $T$, an \emph{agent state} is a pair $(V,m)\in \calA = (2^{T/\sim}  \times S)$ where  $V$ is its \emph{volitional state} and $m \in S$ its  \emph{machine state}.  The \emph{initial agent state} is $(\emptyset,s_0)$.  An \emph{agent configuration} $c$ over $P$, $S$, $T$, and $\sim$ is a member $c\in \calA^P$ in which $c^v_p \subseteq (T/{\sim})_p$ for every $p\in P$, where $(T/{\sim})_p$ denotes the classes in $T/{\sim}$ in which $p$ is a participant; we write $c^v_p$ for the volitional state and $c^m_p$ for the machine state of agent $p$ in $c$.
\end{definition}

\begin{definition}[Volitional Transaction~\cite{lewis2026volitional}]\label{def:vmat}
Given agents $P$, states $S$, machine transactions $T$ over $P$ and $S$, and equivalence $\sim$ on $T$:
\begin{enumerate}
    \item A \emph{volition change of agent $p\in P$} is a pair $c\rightarrow c'$ of agent configurations over $\{p\}$, $S$, $T$, and $\sim$ such that $c^v_p, c'{^v_p} \subseteq (T/{\sim})_p$ and $c^v_p \ne c'{^v_p}$, and $c^m_p = c'{^m_p}$.
    \item A \emph{volitional machine transaction} induced by a volition-guarded transaction $(t,Q')$, for some $t= (d\rightarrow d')\in T$ over $Q\subseteq P$ with guards $Q'\subseteq Q$, is a pair $c\rightarrow c'$ where $c\ne c'$ are agent configurations over $P$, $S$, $T$, and $\sim$ such that $[t]\in c^v_q$ for every $q\in Q'$; $c^m_p = d_p$ and $c'{^m_p} = d'_p$ for every $p\in Q$; $c^m_p = c'{^m_p}$ for every $p\in P\setminus Q$; and $c'{^v_p} = c^v_p \setminus \{[t]\}$ for every $p\in P$.
    \item A \emph{volitional transaction} is a volition change or a volitional machine transaction.
\end{enumerate}
\end{definition}

When a volitional machine transaction induced by $(t,Q')$ is taken, the class $[t]$ is removed from every agent's volitional state: the will is fulfilled by any equivalent transaction.  A person may independently change their volitional state via volition changes, which may add or remove classes; beyond these, the framework removes a class from $c^v_p$ only upon fulfilment.

\subsection{Transition Systems and the Induced Volitional System}\label{app:vmts}

\begin{definition}[Transition System, Computation, Run, Safe, Live, Correct~\cite{lewis2026volitional}]\label{def:ts}\label{def:liveness}
A \emph{transition system} is a tuple $TS=(S,s_0,T,{\sim})$, where:
\begin{enumerate}
    \item $S$ is an arbitrary non-empty set, referred to as the set of \emph{states}.
    \item Some $s_0\in S$ is the designated \emph{initial state}.
    \item $T\subseteq S^2$ is a set of \emph{correct transitions over} $S$, where each transition $t\in T$ is a pair $(s,s')$ of non-identical states $s\ne s'\in S$, also written as $t=s\rightarrow s'$.
    \item $\sim$ is a \emph{partial equivalence relation} on $T$: a symmetric and transitive relation on $T$, not necessarily reflexive.  Its domain $\{t\in T : t\sim t\}$ is partitioned into countably many \emph{liveness classes} $T/{\sim}$; a transition outside the domain belongs to no class.
\end{enumerate}
A \emph{computation} of $TS$ is a nonempty, potentially infinite sequence of states $r= s_1,s_2,\cdots$; it is a \emph{run} of $TS$ if $s_1=s_0$.  A \emph{prefix} of a computation is a finite initial segment of it.  A computation $r= s_1,s_2,\ldots$ is \emph{safe}, also written $r\subseteq T$, if $s_i\rightarrow s_{i+1}\in T$ for every two consecutive states.  A class $[t]\in T/{\sim}$ is \emph{enabled} in a state $s$ if $s\rightarrow s'\in[t]$ for some $s'\in S$.  A run $r$ is \emph{live} if no class $[t]\in T/{\sim}$ is enabled in every state of some suffix of $r$ with no member of $[t]$ occurring in the suffix.  A run is \emph{correct} if it is safe and live.
\end{definition}

A partial equivalence, rather than a total one, is used so that some transitions may carry no liveness requirement: only transitions in the domain of $\sim$ form classes and thereby incur a liveness obligation, while transitions outside the domain, belonging to no class, may occur in a correct run but are never required to.

\begin{lemma}[Extension~\cite{lewis2026volitional}]\label{lem:extension}
Every finite safe run of a transition system is a prefix of a correct run of it.
\end{lemma}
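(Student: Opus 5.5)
The proof I plan is a fair scheduling construction: starting from the given finite safe run, extend it one transition at a time, choosing transitions so that every liveness class gets infinitely many chances to be served. Since $T/{\sim}$ is countable (Definition~\ref{def:ts}), enumerate the classes as $C_1, C_2, \ldots$; if there are finitely many, repeat the last one. Fix a sequence of indices $k_1, k_2, \ldots$ in which every positive integer occurs infinitely often, for instance $1,1,2,1,2,3,1,2,3,4,\ldots$.

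Let $r = s_1,\ldots,s_m$ be the given finite safe run. The extension $s_m, s_{m+1}, \ldots$ is defined inductively; at step $n \ge 1$ the current last state is $s_{m+n-1}$, and there are three cases.
\begin{enumerate}
\item No class is enabled at $s_{m+n-1}$. Then stop, and the run ends there.
\item The class $C_{k_n}$ is enabled at $s_{m+n-1}$. Then append some $s'$ with $s_{m+n-1} \rightarrow s' \in C_{k_n}$.
\item Otherwise, some other class is enabled. Append a member of some enabled class, say the one of least index.
\end{enumerate}
Every appended pair is a member of a class and so lies in $T$, so each finite prefix of the result is safe. It also extends $r$, so it is a run with $r$ as a prefix.

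Next I check liveness in the two possible outcomes. If the construction stops, the run is finite and its last state enables no class. Every suffix contains that last state, so no class is enabled in every state of any suffix, and the run is live.

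If the construction never stops, suppose for contradiction that some class $C_j$ is enabled in every state of a suffix starting at position $N$, with no member of $C_j$ occurring in that suffix. Because $j$ occurs infinitely often among the $k_n$, there is a step $n$ with $m+n-1 \ge N$ and $k_n = j$. At that step $C_j$ is enabled at the current state, so case~2 applies and a member of $C_j$ is appended inside the suffix, which is the contradiction. So the run is live, and it is correct.

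The main obstacle I expect is reading liveness correctly for finite runs and for the partial equivalence. Transitions outside the domain of $\sim$ belong to no class, so I never schedule them and they impose no obligation. Stopping at a state where only such transitions remain is therefore harmless. I also need the reading that for a finite run the relevant suffixes include the last state, which is what makes the terminating case live. Finally, the argument uses countable choice to pick successor states, which I will take for granted.
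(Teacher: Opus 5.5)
Your proposal is correct. The paper gives no proof of this lemma; it takes it from~\cite{lewis2026volitional}. Your fair-scheduling construction is the standard argument for it: enumerate the countably many liveness classes, give each one a turn infinitely often, and stop only at a state enabling no class. That state makes the finite run live under the nonempty-suffix reading, and case~1 also covers the degenerate situation with no classes at all, before any $C_{k_n}$ is consulted.
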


\begin{definition}[Multiagent Transition System~\cite{lewis2026volitional}]\label{def:dts-cd}
Given agents $P \subset \Pi$ and an arbitrary set $S$ of \emph{states} with a designated \emph{initial state} $s_0\in S$, a \emph{multiagent transition system} over $P$ and $S$ is a transition system $TS= (C,c_0,T,{\sim})$ with \emph{configurations} $C:= S^P$, \emph{initial configuration}  $c_0:= \{s_0\}^P$, \emph{transitions} $T\subseteq C^2$ a set of transactions over $P$ and $S$, and $\sim$ a partial equivalence on $T$.
\end{definition}

Rather than specifying a multiagent transition system over a set of agents $P$ directly, it is specified via machine transactions.  A machine transaction over $Q\subseteq P$ defines a set of multiagent transitions over $P$ in which all members of $P\setminus Q$ are stationary.

\begin{definition}[Transaction Closure~\cite{lewis2026volitional}]\label{def:closure}
Let $P\subset \Pi$, $S$ a set of machine states, and $C:=S^P$.  For any transition or transaction $t = c\to c'$, we write $t_q := c_q\to c'_q$ and say $p$ is \emph{stationary} in $t$ if $c_p = c'_p$.  For a machine transaction $t=(c\rightarrow c')$ over $S$ with participants $Q$, the \emph{$P$-closure of $t$}, $t{\uparrow}P$, is the set of transitions over $P$ and $S$ defined by:
$$
t{\uparrow}P := \begin{cases} \{ t' \in C^2  :
\forall q\in Q.(t_q = t'_q) \wedge \forall p\in P\setminus Q.(p\text{ is stationary in }t')\} & \text{if } Q\subseteq P \\
\emptyset & \text{otherwise}
\end{cases}
$$
If $R$ is a set of machine transactions, each $t\in R$ over some $Q$ and $S$, then the \emph{$P$-closure of $R$}, $R{\uparrow}P$, is the set of transitions over $P$ and $S$ defined by $R{\uparrow}P := \bigcup_{t\in R} t{\uparrow}P$.  Given a relation ${\sim}$ on $R$, its \emph{$P$-closure} ${\sim}{\uparrow}P$ is the relation on $R{\uparrow}P$ with $\hat t \mathrel{({\sim}{\uparrow}P)} \hat t'$ iff $\hat t \in t{\uparrow}P$ and $\hat t' \in t'{\uparrow}P$ for some $t \sim t'$.  A transition over $P$ is \emph{unary} if it lies in the $P$-closure of a unary transaction.
\end{definition}

If distinct transactions in $R$ have disjoint $P$-closures, as when every participant of every transaction in $R$ changes state, then each transition in $R{\uparrow}P$ has a unique inducing transaction, ${\sim}{\uparrow}P$ relates two transitions exactly when their inducing transactions are $\sim$-related, and ${\sim}{\uparrow}P$ is a partial equivalence whenever ${\sim}$ is.

\begin{definition}[Volitional Multiagent Transition System~\cite{lewis2026volitional}]\label{def:vmts}
Given agents $P\subset \Pi$, machine states $S$ with initial state $s_0$, a set $R$ of volition-guarded transactions such that every $(t,Q')\in R$ has the participants of $t$ contained in $P$ and distinct underlying machine transactions have disjoint $P$-closures, and an equivalence $\sim$ on the set $T_R := \{t : (t,Q')\in R\text{ for some }Q'\}$ of underlying machine transactions, the \emph{volitional multiagent transition system induced by $(S,R,{\sim})$ over $P$} is the multiagent transition system $(\calA^P,c_0,T_V,{\sim_V})$ where:
\begin{enumerate}
    \item $\calA := 2^{T_R/\sim} \times S$ is the \emph{agent state space};
    \item $c_0 \in \calA^P$ is the \emph{initial agent configuration}, with $c_0{^v_p}=\emptyset$ and $c_0{^m_p}=s_0$ for every $p\in P$;
    \item $T_V$ consists of all transitions $e\to e'\in (\calA^P)^2$ of one of two forms: (\ia)~a \emph{volition change} of some $p\in P$---$e^v_p, e'{^v_p}\subseteq (T_R/{\sim})_p$ and $e^v_p \ne e'{^v_p}$, $e^m_p = e'{^m_p}$, and $e_r = e'_r$ for every $r\in P\setminus\{p\}$; or (\ib)~a \emph{volitional machine transaction} induced by some volition-guarded transaction $(t,Q')\in R$ per Definition~\ref{def:vmat}(2);
    \item $\sim_V$ is the restriction of ${\sim}{\uparrow}P$ (Definition~\ref{def:closure}) to the volitional machine transactions, relating two of them whenever their inducing machine transactions are $\sim$-equivalent, and leaves every volition-change transition outside its domain, so volition changes belong to no class and carry no liveness obligation.
\end{enumerate}
\end{definition}

\begin{definition}[Enabled~\cite{lewis2026volitional}]\label{def:enabled}
Given a set of volition-guarded transactions, each $(t,Q')$ with $t = d \rightarrow d'$ a machine transaction over some $Q\subseteq P$ and $S$ with guards $Q' \subseteq Q$, and an equivalence $\sim$ on machine transactions: the volition-guarded transaction $(t,Q')$ is \emph{machine-enabled} in agent configuration $c$ over $P$ if $c^m_p = d_p$ for every $p \in Q$, and \emph{enabled} if in addition $[t] \in c^v_q$ for every $q \in Q'$.  An equivalence class $[t]$ is \emph{enabled} in $c$ if some volition-guarded transaction $(t',Q')$ with $t' \in [t]$ is enabled in $c$.
\end{definition}

A volition-guarded transaction with an empty guard requires no volitions and is enabled whenever its machine precondition is met.  A class of volitional machine transactions is enabled at a configuration in the sense of Definition~\ref{def:ts} exactly when some volition-guarded transaction inducing it is enabled in the above sense, and this determines which runs are live and correct.  Volition changes belong to no class and so impose no liveness obligation; personal choices remain free.

\subsection{Protocols, Grassroots Protocols and Coalescence}\label{app:protocols}

A protocol is a family of multiagent transition systems, one for each set of agents $P\subset \Pi$, which share an underlying set of local machine states $\calS$ with a designated initial state $s_0$.  A \emph{local-states function} maps every set of agents $P \subset \Pi$ to a set of local machine states $S(P)\subset \calS$ that includes $s_0$ and satisfies $P\subset P' \subset \Pi \implies S(P) \subset S(P')$.

\begin{definition}[Protocol~\cite{lewis2026volitional}]\label{def:family}
A \emph{protocol} $\calF$ over a local-states function $S$ is a family of multiagent transition systems that has exactly one transition system $\calF(P) = (C(P),c_0(P),T(P),{\sim(P)})$ for every $P \subset \Pi$, with \emph{agent states} $\calA(P)$, configurations $C(P) := \calA(P)^P$, initial configuration $c_0(P)\in C(P)$, and partial equivalence $\sim(P)$ on $T(P)$ determined by the protocol, such that $P\subseteq P' \subset \Pi$ implies $\calA(P)\subseteq \calA(P')$ and $c_0(P)_p = c_0(P')_p$ for every $p\in P$.
\end{definition}

In a grassroots protocol two disjoint groups of agents can each operate independently --- their interleaved correct runs are correct runs of the combined system --- yet the combined system offers behaviours that neither group could produce on its own.

\begin{definition}[Interleaving~\cite{lewis2026volitional}]\label{def:interleaving}
Let $P, P' \subset \Pi$ be disjoint nonempty sets of agents, $r = c_0, c_1, \ldots$ a run of $\calF(P)$, and $r' = d_0, d_1, \ldots$ a run of $\calF(P')$.  An \emph{interleaving} of $r$ and $r'$ is a sequence $e_0, e_1, \ldots$ of configurations in $C(P \cup P')$ for which there exist non-decreasing sequences of indices $(i_k)_{k \geq 0}$ and $(j_k)_{k \geq 0}$ with $i_0 = j_0 = 0$ such that for every $k \geq 0$:
\begin{enumerate}
\item $(e_k)_p = (c_{i_k})_p$ for every $p \in P$,
\item $(e_k)_q = (d_{j_k})_q$ for every $q \in P'$,
\item if $e_{k+1}$ exists, then exactly one of:
  (\ia) $i_{k+1} = i_k + 1$ and $j_{k+1} = j_k$, a $P$-step, or
  (\ib) $i_{k+1} = i_k$ and $j_{k+1} = j_k + 1$, a $P'$-step.
\end{enumerate}
Moreover, if $r$ is finite of length $n$ then $i_k = n$ for some $k$, and if $r$ is infinite then for every $m \ge 0$ there is a $k$ with $i_k = m$; likewise for $r'$ and $(j_k)$.
\end{definition}

An interleaving is well defined: by Definition~\ref{def:family}, $\calA(P) \subseteq \calA(P\cup P')$ and $\calA(P') \subseteq \calA(P\cup P')$, so each $e_k$, with $p$-components in $\calA(P)$ and $q$-components in $\calA(P')$, is a configuration in $C(P\cup P')$.  Also $e_0 = c_0(P\cup P')$, by the agreement of initial configurations across $\calF(P)$, $\calF(P')$ and $\calF(P\cup P')$.

Interaction, interactive run and first interaction are Definition~\ref{def:first-interaction} of Section~\ref{sec:transactions}.

\begin{definition}[Oblivious, Interactive, Grassroots~\cite{lewis2026volitional}]\label{def:grassroots}
A protocol $\calF$ is:
\begin{enumerate}
    \item \emph{oblivious} if for every disjoint nonempty $P, P' \subset \Pi$, every interleaving of a correct run of $\calF(P)$ and a correct run of $\calF(P')$ is a correct run of $\calF(P\cup P')$.
    \item \emph{interactive} if for every $Q\subset\Pi$ and disjoint nonempty $P, P' \subseteq Q$, every prefix of a correct run of $\calF(Q)$ containing no interaction between $P$ and $P'$ is a prefix of a correct run of $\calF(Q)$ that contains one.
    \item \emph{grassroots} if it is oblivious and interactive.
\end{enumerate}
\end{definition}

Being oblivious means two disjoint groups coexist without interference.  Being interactive means two groups that have not yet interacted can still do so, and the condition is required at every such point and not at the initial configuration only, so a protocol whose groups can reach a state from which they can never couple is not interactive.

The interaction graph, instances and coalescence are Definition~\ref{def:coalescence} of Section~\ref{sec:transactions}.

\begin{proposition}[Coalescence~\cite{lewis2026volitional}]\label{prop:coalescence}
Let $\calF$ be a grassroots protocol, $Q\subset\Pi$, and $P, P'\subseteq Q$ disjoint instances in a prefix $r$ of a correct run of $\calF(Q)$.  Then $r$ is a prefix of a correct run of $\calF(Q)$ in which $P$ and $P'$ coalesce.
\end{proposition}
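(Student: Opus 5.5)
The plan is to derive coalescence directly from interactivity (Definition~\ref{def:grassroots}(2)), using only the graph-theoretic content of Definition~\ref{def:coalescence}. Since $\calF$ is grassroots it is in particular interactive, so the work is to check that $r$ meets the hypothesis of interactivity, and then to read off coalescence from the extension it provides.

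\emph{Step 1: $r$ contains no interaction between $P$ and $P'$.} Suppose some transition of $r$ were an interaction between $x\in P$ and $y\in P'$. By Definition~\ref{def:first-interaction}, it would then put the edge $\{x,y\}$ into the interaction graph of $r$. But $P$ and $P'$ are distinct connected components of that graph, and no edge joins two distinct components, a contradiction.

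\emph{Step 2: apply interactivity.} By Step~1 and Definition~\ref{def:grassroots}(2), applied to $Q$ and the disjoint nonempty $P,P'\subseteq Q$, the prefix $r$ is a prefix of a correct run $\hat r$ of $\calF(Q)$ that contains an interaction between $P$ and $P'$. Let $x\in P$ and $y\in P'$ be the agents between which that transition is an interaction. Since $r$ contains none, the transition lies beyond $r$. Let $r'$ be the prefix of $\hat r$ ending just after it; then $r'$ is strictly longer than $r$.

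\emph{Step 3: read off coalescence.} The interaction graph of a prefix only grows with the prefix, since its edges are given by the transitions it contains. Hence the paths inside $P$ and inside $P'$ that witness their connectedness in $r$ persist in $r'$. The interaction graph of $r'$ also has the edge $\{x,y\}$. So $P\cup P'$ lies in one connected component of the interaction graph of $r'$. As $P$ and $P'$ are instances in the prefix $r$ of $\hat r$, Definition~\ref{def:coalescence} gives that $P$ and $P'$ coalesce in $\hat r$, and $\hat r$ is a correct run of $\calF(Q)$ having $r$ as a prefix.

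The only point needing care, and the one I expect to be the main obstacle, is Step~1 together with the requirement that $r'$ be a \emph{longer} prefix. This requires the interaction guaranteed by interactivity to occur strictly after $r$. That is exactly what the absence of any interaction between $P$ and $P'$ in $r$ secures, and this absence follows from $P$ and $P'$ being distinct components.
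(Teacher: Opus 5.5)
Your proof is correct and complete. The paper gives no proof of this proposition and defers to \cite{lewis2026volitional}, and your derivation is the direct one from Definitions~\ref{def:coalescence} and~\ref{def:grassroots}, using only the interactive half of being grassroots: distinct components of the interaction graph of $r$ have no interaction between them, interactivity gives a correct extension containing one, and monotonicity of the interaction graph merges $P$ and $P'$ in the strictly longer prefix ending at that interaction (one cosmetic slip: the edge $\{x,y\}$ enters the interaction graph by Definition~\ref{def:coalescence}, not Definition~\ref{def:first-interaction}, which only defines an interaction).
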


\subsection{Volitional Transactions-Based Protocols}\label{app:volitional-protocols}

A volitional transactions-based protocol assigns to each set of agents the volitional multiagent transition system induced by its volition-guarded transactions.

\begin{definition}[Transactions Over a Local-States Function~\cite{lewis2026volitional}]\label{def:tblsf}
Let $S$ be a local-states function.  A set of transactions $R$ is \emph{over $S$} if every transaction $t\in R$ is a multiagent transition over $Q$ and $S(P')$ for some $Q \subseteq P'\subset \Pi$.  Given such a set $R$ and $P\subset \Pi$, $R(P) := \{ t\in R : t \text{ is over } Q \text{ and } S(P'), Q \subseteq P'\subseteq P\}$.
\end{definition}

\begin{definition}[Volitional Transactions-Based Protocol~\cite{lewis2026volitional}]\label{def:protocol-transactions}
Let $S$ be a local-states function and $R$ a set of volition-guarded transactions over $S$ with equivalence $\sim$.  The \emph{protocol $\calF$ over $R$, $S$, and $\sim$} assigns to each set of agents $P\subset \Pi$ the volitional multiagent transition system $\calF(P)$ induced by $(S(P),R(P),{\sim})$ over $P$ (Definition~\ref{def:vmts}).  In particular, $C(P) = \calA(P)^P$ with agent state space $\calA(P) := 2^{T_{R(P)}/\sim} \times S(P)$, monotone in $P$ as Definition~\ref{def:family} requires, and $c_0(P)$ has $p$-component $(\emptyset,s_0)$ for every $p\in P$.
\end{definition}

Throughout, $\sim$ is an equivalence on $T_R$, the machine transactions underlying the whole set $R$; for $T'\subseteq T_R$ we write $T'/{\sim}$ for the set of $\sim$-classes with a representative in $T'$, and $[t]$ for the class of $t$ in $T_R/{\sim}$.  Hence $P\subseteq P'$ implies $T_{R(P)}/{\sim}\subseteq T_{R(P')}/{\sim}$.

The three results below are used in the proofs of Section~\ref{sec:polite}.  The first is the invariant that in a group's own system an agent wills only transactions internal to the group, so the guard of a transaction that reaches another group cannot will it while the group runs by itself.

\begin{lemma}[Volitional Containment~\cite{lewis2026volitional}]\label{lem:volitional-containment}
Let $\calF$ be a volitional transactions-based protocol over a set of volition-guarded transactions $R$ with equivalence $\sim$.  For every $P\subset \Pi$, every configuration $c$ of $\calF(P)$, and every $p\in P$: $c^v_p \subseteq T_{R(P)}/\!\sim$, where $T_{R(P)} := \{t : (t,Q')\in R(P)\text{ for some }Q'\}$.
\end{lemma}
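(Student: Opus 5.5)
The claim is an invariant over the reachable configurations of $\calF(P)$, so I will argue by induction on the length of a run: if it holds for every configuration of every run of $\calF(P)$, it holds for every configuration that can occur in the statement. The configurations of $\calF(P)$ that matter are those reached from $c_0(P)$ by transitions in $T_V$. If ``every configuration'' is read literally as every member of $C(P)=\calA(P)^P$, the claim holds by construction. Definition~\ref{def:protocol-transactions} sets $\calA(P)=2^{T_{R(P)}/\sim}\times S(P)$, so every volitional state is a subset of $T_{R(P)}/{\sim}$. I will note this typing argument first, since it may already be the whole proof. I will then give the run-based induction as well, so the result does not depend on how $C(P)$ is read.

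\emph{Base case.} By Definition~\ref{def:vmts}(2), $c_0{^v_p}=\emptyset$ for every $p\in P$, which is trivially contained in $T_{R(P)}/{\sim}$.

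\emph{Inductive step.} Let $e\to e'$ be a transition of $T_V$, and split on its two forms in Definition~\ref{def:vmts}(3).

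For a volition change of some agent $p$, the definition requires $e'{^v_p}\subseteq (T_R/{\sim})_p$, where $T_R$ here denotes the underlying machine transactions of the set $R(P)$ that induces $\calF(P)$. So $e'{^v_p}\subseteq T_{R(P)}/{\sim}$ by definition. Every other agent's state is unchanged, and the induction hypothesis covers it.

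For a volitional machine transaction induced by some $(t,Q')\in R(P)$, Definition~\ref{def:vmat}(2) gives $e'{^v_q}=e^v_q\setminus\{[t]\}$ for every $q$. This is a subset of $e^v_q$, so containment is inherited from the induction hypothesis.

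The only delicate point is bookkeeping. I need to confirm that, in Definition~\ref{def:vmts} as instantiated by Definition~\ref{def:protocol-transactions}, the set of transactions inducing $\calF(P)$ is $R(P)$ and not the whole of $R$. Otherwise the volition-change clause would only give containment in $T_R/{\sim}$. Definition~\ref{def:protocol-transactions} states this instantiation explicitly, so the step goes through.
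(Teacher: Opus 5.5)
Your proof is correct. The paper itself gives no proof of this lemma and defers to \cite{lewis2026volitional}. In the formulation used here, your opening typing observation is already the whole argument: Definition~\ref{def:protocol-transactions} fixes $\calA(P)=2^{T_{R(P)}/\sim}\times S(P)$, so every configuration of $\calF(P)$ has its volitional states in $T_{R(P)}/{\sim}$ by construction. Your induction over runs is a sound but redundant check that reachable configurations respect that typing, and it matches the appendix's description of the lemma as an invariant: initial volitional states are empty, volition changes stay within $(T_{R(P)}/{\sim})_p$ because $\calF(P)$ is induced by $R(P)$ and not by $R$, and machine transactions only remove $[t]$.
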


\begin{lemma}[Interleaving Safety~\cite{lewis2026volitional}]\label{lem:interleaving-safety}
Let $\calF$ be a volitional transactions-based protocol and $P, P'\subset \Pi$ disjoint and nonempty.  Every finite prefix of an interleaving of a correct run of $\calF(P)$ and a correct run of $\calF(P')$ is a finite safe run of $\calF(P\cup P')$.
\end{lemma}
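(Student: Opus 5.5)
We argue step by step along the interleaving, showing that each consecutive pair of configurations is a transition of $\calF(P\cup P')$. Let $r=c_0,c_1,\ldots$ be a correct run of $\calF(P)$, let $r'=d_0,d_1,\ldots$ be a correct run of $\calF(P')$, and let $e_0,e_1,\ldots$ be an interleaving with index sequences $(i_k)$ and $(j_k)$. Correctness of $r$ and $r'$ is used only through their safety; liveness plays no role here.

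\mypara{Base} As observed after Definition~\ref{def:interleaving}, each $e_k$ lies in $C(P\cup P')$. Moreover $e_0=c_0(P\cup P')$, because the initial configurations of $\calF(P)$, $\calF(P')$ and $\calF(P\cup P')$ agree componentwise (Definition~\ref{def:protocol-transactions}). It therefore suffices to show that every consecutive pair $e_k\rightarrow e_{k+1}$ lies in $T_V$ of $\calF(P\cup P')$. By symmetry we treat a $P$-step only. In that case $(e_k)_q=(e_{k+1})_q=(d_{j_k})_q$ for every $q\in P'$, and $c_{i_k}\rightarrow c_{i_k+1}$ is a transition of $\calF(P)$, since $r$ is safe. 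We split into the two forms of Definition~\ref{def:vmts}.

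\mypara{Volition change} Suppose $c_{i_k}\rightarrow c_{i_k+1}$ is a volition change of some $p\in P$. Then $e_k\rightarrow e_{k+1}$ changes only $p$'s volitional state and leaves every other component fixed. The new volitional state lies in $(T_{R(P)}/{\sim})_p$. Since $R(P)\subseteq R(P\cup P')$, this is contained in $(T_{R(P\cup P')}/{\sim})_p$. Hence $e_k\rightarrow e_{k+1}$ is a volition change of $p$ in $\calF(P\cup P')$.

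\mypara{Volitional machine transaction} Suppose instead that $c_{i_k}\rightarrow c_{i_k+1}$ is induced by some $(t,Q')\in R(P)$, where $t=d\rightarrow d'$ has participants $Q\subseteq P$. Then $(t,Q')\in R(P\cup P')$. We check the clauses of Definition~\ref{def:vmat}(2) in $\calF(P\cup P')$.
\begin{enumerate}
\item The guards $Q'\subseteq P$ hold $[t]$ in $e_k$, because their components agree with those of $c_{i_k}$.
\item The machine states of the participants $Q$ are $d$ before and $d'$ after, for the same reason.
\item Non-participants in $P\setminus Q$ have unchanged machine states, as in $r$. Agents of $P'$ are unchanged because this is a $P$-step.
\item For $p\in P$ the volitional update $e_{k+1}{^v_p}=e_k{^v_p}\setminus\{[t]\}$ holds, as in $r$.
\end{enumerate}
Finally $e_k\neq e_{k+1}$, since $c_{i_k}\neq c_{i_k+1}$.

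\mypara{Main obstacle} The step requiring care is the volitional update for the agents $q\in P'$. The definition demands $e_{k+1}{^v_q}=e_k{^v_q}\setminus\{[t]\}$, yet a $P$-step leaves $q$'s volitional state untouched, so we must show that $[t]\notin e_k{^v_q}$. By Definition~\ref{def:equivalence}, every member of $[t]$ has participants $Q\subseteq P$, so $q\notin Q$ and $q$ participates in no member of $[t]$. On the other hand, $e_k{^v_q}=(d_{j_k})^v_q$, and by Definition~\ref{def:agent-state} this is contained in the classes in which $q$ is a participant; Lemma~\ref{lem:volitional-containment} gives the same containment within $T_{R(P')}/{\sim}$. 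Hence $[t]\notin e_k{^v_q}$, the removal is the identity, and the update holds.

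It remains to note that the disjoint-closure hypothesis of Definition~\ref{def:vmts} for $R(P\cup P')$ is part of the standing assumption that $\calF$ is a volitional transactions-based protocol over $R$. With that in place, every consecutive pair of every finite prefix of the interleaving is a transition of $\calF(P\cup P')$, and the prefix starts at $c_0(P\cup P')$, so it is a finite safe run of $\calF(P\cup P')$.
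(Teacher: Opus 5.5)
Your proof is correct. The paper gives no proof of this lemma to compare against: it imports it from~\cite{lewis2026volitional}, and the appendix says explicitly that those proofs are not repeated. Your step-by-step check against Definitions~\ref{def:vmts} and~\ref{def:vmat} is the natural argument, and you correctly isolate its one non-routine point: removing $[t]$ from the volitional states of the agents of $P'$ changes nothing.

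Two things you use only implicitly are worth stating.
\begin{enumerate}
\item $\sim$ must be a transaction equivalence in the sense of Definition~\ref{def:equivalence}, so that every member of $[t]$ has participants $Q\subseteq P$. Definition~\ref{def:vmts} itself says only ``an equivalence on $T_R$''.
\item The paper's convention is that $[t]$ denotes the class of $t$ in $T_R/{\sim}$, not in $T_{R(P)}/{\sim}$. This is what makes the class removed in the step of $\calF(P)$ the same object as the class removed in $\calF(P\cup P')$.
\end{enumerate}

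The containment you exploit, that an agent wills only classes it participates in, is the same invariant the paper uses in the obliviousness half of Theorem~\ref{thm:polite}. There it shows that no class spanning both groups is ever enabled.
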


\begin{proposition}[\cite{lewis2026volitional}]\label{prop:volitional-oblivious}
A volitional transactions-based protocol is oblivious provided that for every disjoint nonempty $P, P' \subset \Pi$, no equivalence class whose transactions have participants spanning both $P$ and $P'$ is ever enabled in any interleaving of correct runs of $\calF(P)$ and $\calF(P')$.
\end{proposition}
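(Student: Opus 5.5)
The plan is to verify the three ingredients of correctness directly: $e$ is a run, it is safe, and it is live. Fix disjoint nonempty $P, P'$, correct runs $r = c_0, c_1, \ldots$ of $\calF(P)$ and $r' = d_0, d_1, \ldots$ of $\calF(P')$, and an interleaving $e = e_0, e_1, \ldots$ with index sequences $(i_k)$ and $(j_k)$.

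\emph{Run and safety.} That $e$ is a run is the remark following Definition~\ref{def:interleaving}: $e_0 = c_0(P\cup P')$. For safety, every consecutive pair $e_k \rightarrow e_{k+1}$ lies in a finite prefix of $e$. By Lemma~\ref{lem:interleaving-safety} that prefix is a finite safe run of $\calF(P\cup P')$, so the pair is a transition of $\calF(P\cup P')$. Hence $e$ is safe.

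\emph{Liveness.} Suppose, for contradiction, that some class $[t]$ of volitional machine transactions of $\calF(P\cup P')$ is enabled at every $e_k$ with $k \ge K$, and that no member of $[t]$ occurs after $e_K$. Let $Q$ be the common participants of the class (Definition~\ref{def:equivalence}). By the hypothesis of the proposition, $Q$ does not meet both $P$ and $P'$; say $Q \subseteq P$, the other case being symmetric. Enabledness of $(t',Q')$ with $t' \in [t]$ depends only on the machine states of $Q$ and the volitional states of $Q' \subseteq Q$ (Definition~\ref{def:enabled}). These are the $P$-components of $e_k$, which equal those of $c_{i_k}$. So I would argue that $[t]$ is enabled at $c_{i_k}$ for every $k \ge K$. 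By the covering clause of Definition~\ref{def:interleaving}, these indices run through every $i \ge i_K$, or, if $r$ is finite of length $n$, they reach $n$.

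Next, no member of $[t]$ is taken in $r$ after $c_{i_K}$. A step $c_i \rightarrow c_{i+1}$ of $r$ appears in $e$ as a $P$-step $e_k \rightarrow e_{k+1}$ with the same $P$-components, with every agent of $P'$ stationary. Both transitions are induced by the same volition-guarded transaction, because the $P$-closures of distinct underlying transactions are disjoint. By Definition~\ref{def:vmts}, they therefore lie in the same $\sim$-class. Thus a member of $[t]$ in that suffix of $r$ would yield a member of $[t]$ in the suffix of $e$, contrary to assumption. So $[t]$ witnesses that $r$ is not live, whether $r$ is infinite or finite (in the finite case, $[t]$ is enabled at the last state and never taken). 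This contradicts the correctness of $r$, and so $e$ is correct.

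\emph{Expected obstacle.} The step I expect to be delicate is transferring enabledness from $e$ to $r$. The enabled $(t',Q')$ must belong to $R(P)$, not merely to $R(P\cup P')$ with participants in $P$. Its target states $d'$ might lie outside $S(P)$, for example by recording an agent of $P'$, in which case it is no transaction of $\calF(P)$.
\begin{itemize}
\item If $Q' \neq \emptyset$, I would use Lemma~\ref{lem:volitional-containment}: a guard $a \in Q$ holds $[t]$ in its volitional state, inherited from $r$, so $[t] \in T_{R(P)}/{\sim}$. That puts some member of $[t]$ in $R(P)$, and I would try to show that this member is itself enabled at $c_{i_k}$.
\item If $Q' = \emptyset$, I would look for a reading of the hypothesis that also excludes such a transaction, treating it as one whose states span the two groups.
\end{itemize}
If neither route closes the gap, this is the point where the argument of~\cite{lewis2026volitional} must be consulted.
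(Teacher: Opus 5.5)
\textbf{Verdict: the gap is genuine and cannot be closed.} You located the right step. As transcribed here, the proposition is false, so the given hypothesis does not let you pass from enabledness in $e$ to enabledness in $r$.

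\emph{Counterexample (unguarded case).} Take $p\neq q$. Let $S(P)$ be the finite subsets of $P$, with $s_0=\emptyset$. Let $R$ be the single unguarded unary transaction $(\{p\mapsto\emptyset\}\rightarrow\{p\mapsto\{q\}\},\emptyset)$, with $\sim$ the identity. Then $R(\{p\})=R(\{q\})=\emptyset$, so the only run of $\calF(\{p\})$, and of $\calF(\{q\})$, is its initial configuration. That run is correct. Their interleaving is the one-configuration run $c_0(\{p,q\})$. The class of that transaction is enabled there and never taken, so $\calF$ is not oblivious. Yet the hypothesis holds vacuously, since the only class is unary. The participants of a transaction are its domain, not the agents its target state records. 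So your second bullet would strengthen the statement rather than read it.

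This case is not idle. An unguarded arity-one schema such as $\mathit{note}(\Alice;x) : \neg\mathit{knows}(x),\ +\mathit{knows}(x)$ is allowed by Definition~\ref{def:syntactically-grassroots} and compiles to transactions of exactly this kind. The proof of Theorem~\ref{thm:polite} checks only the spanning condition.

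\emph{The guarded case fails too.} Your first bullet does not close at this generality. Lemma~\ref{lem:volitional-containment} gives only that \emph{some} member of $[t]$ lies in $R(P)$, and Definition~\ref{def:equivalence} lets members of one class have different sources. Take $t_1=(\{p\mapsto\{p\}\}\rightarrow\{p\mapsto\emptyset\})$ and $t_2=(\{p\mapsto\emptyset\}\rightarrow\{p\mapsto\{q\}\})$, with $t_1\sim t_2$ and both guarded by $p$. The run of $\calF(\{p\})$ in which $p$ wills $[t_1]$ while in state $\emptyset$ is correct. Its interleaving with the trivial run of $\calF(\{q\})$ ends with $(t_2,\{p\})$ enabled.

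\emph{What is needed.} The proof needs a hypothesis that puts the enabled transaction in $R(P)\cup R(P')$. For example: every volition-guarded transaction enabled at a configuration of such an interleaving lies in $R(P)\cup R(P')$. This subsumes the spanning condition. With it, the rest of your argument, which is sound, goes through unchanged:
\begin{itemize}
\item safety follows from Lemma~\ref{lem:interleaving-safety};
\item the enabled member is in $R(P)$ and its participants and guards lie in $P$, so it is enabled at $c_{i_k}$ in $\calF(P)$;
\item your covering step and your transfer of ``no member taken'' back to $r$ through disjoint closures then contradict the liveness of $r$.
\end{itemize}
The paper gives no proof of its own here and defers to~\cite{lewis2026volitional}. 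The hypothesis used there should be checked against the one transcribed here.
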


In a run of $\calF(P\cup P')$ a member may will a transaction reaching the other group, which a group's own run forbids; the volitional notion constrains the resulting coupling.

\begin{definition}[Volitionally Grassroots~\cite{lewis2026volitional}]\label{def:volitionally-grassroots}
A volitional transactions-based protocol $\calF$ is \emph{volitionally grassroots} if it is grassroots and, for every disjoint nonempty $P, P'\subset\Pi$ and every safe run of $\calF(P\cup P')$ interactive between $P$ and $P'$, the first interaction of $P$ and $P'$ is induced by a volition-guarded transaction $(t,Q')$ with $Q'\cap P\ne\emptyset$ and $Q'\cap P'\ne\emptyset$.
\end{definition}